\documentclass{article}
\usepackage{booktabs}       
\usepackage{amsfonts}       
\usepackage{nicefrac}       
\usepackage{microtype}      
\usepackage{xcolor}         
\usepackage{fullpage}

\usepackage{fullpage,times}
\usepackage{times}

\usepackage{mathtools} 
\usepackage{booktabs} 
\usepackage{tikz} 
\usepackage{algpseudocode}
\usepackage{float} 

\usepackage{thm-restate}

\usepackage{subfigure}


\newcommand{\maxmove}[1][]{\ifthenelse{\equal{#1}{}}{$r$-move $k$-partitioning}{$r$-move ${#1}$-partitioning}}
\newcommand{\stcutatmost}{Min $r$-size $s$-$t$ cut}
\def\OPT{\textsc{OPT}}

\usepackage{amsmath,amsfonts,amssymb,multirow}
\usepackage{times}
\usepackage{graphicx}
\usepackage{floatpag}
\usepackage{algorithm}
\usepackage{bbold}
\usepackage{enumitem}
\usepackage{calc}
\usepackage{tikz}
\usepackage{hyperref}
\usetikzlibrary{decorations.markings}
\tikzstyle{node}=[circle, draw, inner sep=0pt, minimum size=4pt, fill = black]
\newcommand{\node}{\node[node]}
\usepackage{graphicx}
\usepackage{tabularx}
\makeatletter
\newcommand{\multiline}[1]{%
  \begin{tabularx}{\dimexpr\linewidth-\ALG@thistlm}[t]{@{}X@{}}
    #1
  \end{tabularx}
}
\makeatother
\usepackage{mathtools}

\makeatletter
\def\BState{\State\hskip-\ALG@thistlm}
\makeatother

\newcommand{\floor}[1]{\lfloor #1 \rfloor}

\newenvironment{proofof}[1]{{\bf Proof of #1.  }}{\hfill$\Box$}


\newtheorem{theorem}{Theorem}[section]

\newtheorem{corollary}{Corollary}[section]
\newenvironment{proof}{{\bf Proof.  }}{\hfill$\Box$}

\newtheorem{definition}{Definition}[section]
\newtheorem{lemma}{Lemma}[section]




\def \R {{\mathbb R}}

\newcommand{\ignore}[1]{}


\title{Graph Partitioning With Limited Moves}

\author{%
Majid Behbahani\thanks{Machine Learning Research, Morgan Stanley. Email: {\tt majid.behbahani@morganstanley.com.}} \qquad 
Mina Dalirrooyfard\thanks{Machine Learning Research, Morgan Stanley. Email: {\tt mina.dalirrooyfard@morganstanley.com.}} \qquad Elaheh Fata\thanks{Smith School of Business, Queen's University. Email: {\tt elaheh.fata@queensu.ca.}} \qquad Yuriy Nevmyvaka\thanks{Machine Learning Research, Morgan Stanley. Email: {\tt yuriy.nevmyvaka@morganstanley.com.}} 
}
\date{}

\begin{document}
\maketitle




\def\thefootnote{*}\footnotetext{Authors are ordered alphabetically.}

\begin{abstract}
In many real world networks, there already exists a (not necessarily optimal) $k$-partitioning of the network. Oftentimes, one aims to find a $k$-partitioning with a smaller cut value for such networks by moving only a few nodes across partitions. The number of nodes that can be moved across partitions is often a constraint forced by budgetary limitations. Motivated by such real-world applications, we introduce and study the $r$-move $k$-partitioning~problem, a natural variant of the Multiway cut problem. Given a graph, a set of $k$ terminals and an initial partitioning of the graph, the $r$-move $k$-partitioning~problem aims to find a $k$-partitioning with the minimum-weighted cut among all the $k$-partitionings that can be obtained by moving at most $r$ non-terminal nodes to partitions different from their initial ones. Our main result is a polynomial time $3(r+1)$ approximation algorithm for this problem. We further show that this problem is $W[1]$-hard, and give an FPTAS for when $r$ is a small constant. 

\end{abstract}

\section{Introduction}
\label{sec:intro}
Graph partitioning problems are among the most fundamental and widely used graph problems in the fields of artificial intelligence, theoretical computer science, operations research and operations management. A \textit{$k$-partitioning} of a graph $G$ refers to partitioning the graph's nodes into $k$-disjoint sets, clusters, or partitions. Oftentimes, the goal in partitioing problems is to find a $k$-partitioning with the least \textit{cut value} (also referred to as \textit{cut size} in the literature)
, which is the sum of the weights of the edges that are between different partitions.
Graph partitioning is widely used in machine learning, parallel computing, computer vision, VLSI design, political districting, epidemiology and many more areas \cite{vision,parallel1,learning,vlsi,king2012geo,epidemiology}.

Given a graph\footnote{We use the terms ``graphs" and ``networks", interchangeably.} $G$, an integer $k\ge 2$ and $k$ terminals, the well-known \textit{Multiway cut} problem asks to find a partitioning of the nodes of $G$ into $k$ clusters with the smallest cut value among all $k$-partitionings that have exactly one terminal in each partition \cite{dahlhaus1992complexity}. In this paper, we use the terms \emph{partitions} and \emph{clusters}, interchangeably. It is known that for $k>2$, the Multiway cut problem is APX-hard \cite{nphardmultiwaycut}. 

We introduce and study a natural variant of the Multiway cut problem called the \textit{\maxmove} problem. Suppose that we are given a graph $G$ with an initial $k$-partitioning, an integer $k\ge2$, $k$ terminals and a \textit{move parameter} $r\in \mathbb{N}$. The problem only considers $k$-partitionings for which at most $r$ non-terminal nodes have been moved from their initial partitions to new clusters. The \maxmove~problem asks to find a $k$-partitioning that minimizes the cut value over all the $k$-partitionings considered. 

Numerous examples of the \maxmove~problem can be seen in real-world networks. In fact, in many networks, the underlying graph already has a $k$-partitioning and this $k$-partitioning is not optimal, i.e., the cut value is not minimized. 
For instance, the edge weights in a network may change over time, resulting in the sub-optimality of the initial partitioning. However, in most applications, we cannot afford moving many nodes from their initial $k$-partitioning, and so the goal is to improve the cut value by moving only ``a few" nodes. 


For example, consider a company with many offices across the country. The company can be modeled as a graph, where the nodes are the employees and the edge weights indicate how beneficial it would be if these two employees are in the same office, for example it could be proportional to the amount they communicate, or any other metric. 
The company's goal is to find a $k$-partitioning with a low cut value, where $k$ is the number of available offices. Employees have initial assignments to these $k$ offices, which might not be optimal. However, due to budget constraints, companies are often unable to conduct large-scale reorganizations and can only move a limited number of employees between offices. Moreover, not all employees can or are willing to move. For each office, we can model these employees as a terminal node by ``contracting" all the employees of that office (partition) that cannot relocate. If the company's budget dictates that at most $r$ employees can be relocated, then we seek an \maxmove~where the initial partitioning is the original distribution of people across offices. Modifications of the office reorganization example are applicable to many other applications such as public housing relocation to improve the housing assignments and enhance the residents' community access, or disaster management (see \cite{dinitz2022fair}). 


Another application of the \maxmove~problem is its use as a sub-problem in a local search algorithm for the Multiway cut problem. Consider an initial feasible multiway cut $C$ from which we aim to achieve another feasible multiway cut $C'$ with a lower cost, allowing only $r$ moves for the transformation from $C$ to $C'$. This problem can be exactly formulated as the \maxmove~problem. For more information on local search approximation algorithms for the Multiway cut problem, the reader is referred to~\cite{bloch2023local}. Moreover, the \maxmove~problem can be explored in the context of learning-augmented algorithms. In this case, the initial multiway cut is obtained from a learning-based algorithm and the goal is to use this initial multiway cut to reach multiway cuts with lower costs, while moving at most $r$ nodes to partitions different from their initial ones. The question here is whether there exists an algorithm for the \maxmove~problem that can improve the initial multiway cut whose guarantees depend on the quality of the initial multiway cut. While this direction is interesting, it falls outside the focus of this paper.




There has been a long line of works providing approximation algorithms for the Multiway cut problem \cite{nphardmultiwaycut,LP_paper,angelidakis2017improved,vazirani,ugchardness}, most of which use a simple linear programming (LP) relaxation of the problem called the \textit{CKR LP}
, in honor of the authors who introduced it, C\u{a}linescu, Karloff and Rabani \cite{LP_paper}. The best approximation factor for the Multiway cut problem is $1.2965$ due to \cite{bestapprox}, who round the CKR LP. Interestingly, \cite{ugchardness} showed that assuming the unique games conjecture \cite{ugc}, if the integrality gap of the CKR LP is $\alpha$, then the Multiway cut problem cannot be approximated better than $\alpha$. \cite{berczi2020improving} showed that the integrality gap of the CKR LP is at least $1.20016$, hence the best approximation factor for the Multiway cut problem lies between $1.20016$ and $1.2965$. Closing this gap is an interesting open problem. 


Perhaps the closest known variant of the Multiway cut problem to the \maxmove~problem is the \textit{min $s$-$t$ cut with at most $r$ nodes}, also known as the \textit{\stcutatmost} problem \cite{zhang2016new}. This problem asks to find a minimum value $s$-$t$ cut, where there are at most $r$ nodes in the partition that contains terminal node $s$ (i.e., the $s$-side). This problem is a special case of the $(r-1)$-move $2$-partitioning problem, when in the initial partitioning all non-terminal nodes are in the partition that contains node $t$. By moving at most $r-1$ nodes to the $s$-side, the $s$-side will have size at most $r$.

Even though the \stcutatmost~problem is a variant of the $s$-$t$ cut problem, which is polynomially solvable, the \stcutatmost~problem is NP-hard \cite{chen2016size}. Thus, the \maxmove[2] and the \maxmove~problems are NP-hard. It is also known that the \stcutatmost~problem admits a \textit{Fixed-Parameter Tractable (FPT)} solution with parameter $r$ when the graph is unweighted \cite{lokshtanov2013clustering} 
 (see Section~\ref{sec:prelim} for a definition of FPT problems). Intuitively, one might wonder if the \maxmove[2] problem and the \stcutatmost~problem are equivalent. We show that the \maxmove[2] problem is in fact W[1]-hard; thus, \emph{not equivalent} to the \stcutatmost~problem. We resort to designing approximation algorithms for the \maxmove~problem for all $k\ge 2$. Hence, the main theme of this paper is around the following question:
\textit{How well can we approximate \maxmove?}

\subsection{Our Results}
%
We provide a comprehensive study of the \maxmove~problem. 
Recall that the \maxmove~problem is NP-hard (for any constant $k\ge 2$), hence, there exists no exact polynomial time algorithm for it, unless P = NP. Throughout the paper we think of $k$ as a constant.

First, note that when $r$ is also a constant, there exist a simple $O((nk)^{r})$ algorithm for the problem. This algorithm tries every combination of $r'$ nodes, for all $1\le r'\le r$, to be moved to any of the $k$ partitions. It is natural to ask whether we can devise an FPT algorithm with parameter $r$ for the \maxmove~problem. Our first result shows that the answer to this question is negative. 

\begin{restatable}{theorem}{wonehardness}\label{thm:w1hard}
The \maxmove~problem with parameter $r$ is W[1]-hard.
\end{restatable}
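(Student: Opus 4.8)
The plan is to give a parameterized reduction from the $W[1]$-complete \textsc{Clique} problem — given a graph $H$ and an integer $\ell$, decide whether $H$ has a clique on $\ell$ vertices, parameterized by $\ell$. It suffices to treat the case $k=2$, since for larger $k$ one can add $k-2$ isolated dummy terminals (each alone in its own part) without affecting the analysis; so below we build an instance of the \maxmove[2] problem with move parameter $r=\ell$.

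Given $(H,\ell)$ with $V(H)=\{x_1,\dots,x_n\}$ and $|E(H)|=m$, construct the weighted graph $G$ as follows. Its terminals are $s$ and $t$. For every $x\in V(H)$ create a non-terminal node $u_x$ together with an edge $(s,u_x)$ of weight $\deg_H(x)+1$, and for every $xy\in E(H)$ add an edge $(u_x,u_y)$ of weight $1$. The initial partitioning puts $s$ alone in one cluster and $t,u_{x_1},\dots,u_{x_n}$ in the other. Any feasible solution moves at most $r=\ell$ of the nodes $u_x$ into $s$'s cluster, so it is fully described by the set $S\subseteq V(H)$ of moved vertices with $|S|\le\ell$.

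The crux is a short computation of the resulting cut. For a given $S$, the cut consists exactly of the edges $(s,u_x)$ with $x\notin S$ together with the edges $(u_x,u_y)$ of $H$ having exactly one endpoint in $S$. Writing $e_H(S)$ for the number of edges of $H$ inside $S$, the number of $H$-edges with exactly one endpoint in $S$ is $\sum_{x\in S}\deg_H(x)-2e_H(S)$, and $\sum_{x\in V(H)}(\deg_H(x)+1)=2m+n$; hence the degree contributions cancel and the cut equals $(2m+n)-\big(|S|+2e_H(S)\big)$. Minimizing the cut is therefore equivalent to maximizing $|S|+2e_H(S)$ subject to $|S|\le\ell$. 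Since $e_H(S)\le\binom{|S|}{2}$ we get $|S|+2e_H(S)\le|S|^2\le\ell^2$, with equality forcing $|S|=\ell$ and $S$ a clique; and if $H$ has no $\ell$-clique, a short case check (comparing $|S|=\ell$, where $e_H(S)\le\binom{\ell}{2}-1$, with $|S|<\ell$, and using $\ell\ge 2$) gives $|S|+2e_H(S)\le\ell^2-2$. Thus $H$ has an $\ell$-clique if and only if the optimal cut of the constructed \maxmove[2] instance is at most $(2m+n)-\ell^2$. The construction is polynomial-time and the new parameter $r=\ell$ depends only on $\ell$, so this is a valid parameterized reduction and the theorem follows. (Replacing each weighted edge by that many parallel edges further yields the hardness already for unit-weight multigraphs.)

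As for where the work lies: the reduction skeleton is routine, and the only real design decision is the choice of edge weights. Putting weight $\deg_H(x)+1$ on $(s,u_x)$ is exactly what makes (i) the per-vertex degree terms telescope away, leaving an objective depending only on $|S|$ and $e_H(S)$, and (ii) the ``$+1$'' terms push the optimum to spend its entire budget of $\ell$ moves on a densest $\ell$-vertex subgraph, producing the clean threshold above; in particular this lets us reduce from plain \textsc{Clique} rather than first passing to a regular variant. The one spot that genuinely needs care is verifying the threshold separation in the no-clique case, which is why the hypothesis $\ell\ge 2$ is flagged; everything else is immediate.
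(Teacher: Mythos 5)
Your proof is correct, but it takes a genuinely different route from the paper's. The paper reduces from the Densest $r$-Subgraph problem: it builds an \emph{unweighted} graph on $O(n^2)$ nodes in which partition $B$ is a large clique containing one representative node per edge of the source graph, each joined to the two endpoints of that edge in the copy $A$; a degree-counting argument then shows that moving a set $X\subseteq V(A)$ changes the cut by exactly $-2|E(X)|$, so an optimal $r$-move solution locates a densest $r$-subgraph, and the oversized clique prevents any moves out of $B$. You instead reduce from Clique with an $O(n)$-node \emph{weighted} gadget, putting weight $\deg_H(x)+1$ on each terminal edge $(s,u_x)$ so that the cut decrease from moving $S$ is $|S|+2e_H(S)$; the extra $|S|$ term, together with $|S|+2e_H(S)\le |S|^2$ and the additive gap of at least $2$ in the no-clique case, yields a clean threshold at $\ell^2$. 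Your gadget is essentially the one the paper deploys in its secondary reduction for the weighted \stcutatmost~problem (terminal edges of weight $1+\deg$), repurposed for \maxmove[2] and driven by Clique rather than Densest $r$-Subgraph. What each buys: the paper's construction gives hardness already for unit edge weights, whereas yours needs polynomially bounded weights (or parallel edges, as you note); yours is smaller, and rests only on the W[1]-completeness of Clique rather than the W[1]-hardness of Densest $r$-Subgraph. Your treatment of $k>2$ via dummy terminals matches the paper's; it deserves the one extra sentence verifying that moving $u_x$ to a dummy partition is never beneficial, since it cuts the heavy edge $(s,u_x)$ without the compensating gain and is therefore dominated by moving $u_x$ to $s$'s cluster.
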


In Theorem \ref{thm:w1hard}, we use a polynomial time reduction from the Densest $r$-Subgraph problem, where the complete proof can be found in Section \ref{sec:hardness}. It is well-known that the Densest $r$-Subgraph problem with parameter $r$ is W[1]-hard \cite{densest}.

Next, we give a $(1+\epsilon)$-approximation algorithm for the \maxmove~problem with running time $f(r,\epsilon)O(n^2)$ for any $\epsilon>0$, where $f(\cdot,\cdot)$ is a function of $r$ and $\epsilon$. When $r$ is a constant, this gives us an FPTAS (with parameter $\epsilon$) with running time $O(n^2/\epsilon^r)$, faster than the $O(n^r)$ brute-force algorithm, in the expense of an approximation factor. 

\begin{restatable}{theorem}{fptas}\label{thm:fptas}
    Given any \maxmove~instance graph $G$ with $n$ nodes and any constant $\epsilon>0$, there exists a $(1+\epsilon)$-approximation algorithm for the \maxmove~problem on $G$ with running time $(\frac{2(k-1)(1+\epsilon)}{\epsilon})^rr!cn^2$, where $c$ is a universal constant independent of $r$ and $\epsilon$. 
\end{restatable}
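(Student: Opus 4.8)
The plan is to replace the $O((nk)^r)$ brute‑force enumeration over which $\le r$ nodes to move by a search tree of only $f(r,\epsilon)$ nodes, losing just a $(1+\epsilon)$ factor in the cut. Fix an optimal solution; let $M^{\ast}$ (with $|M^{\ast}|\le r$) be the set of nodes it moves, $\pi^{\ast}$ its reassignment, $c_0$ the cut value of the input partitioning, and $\Delta=c_0-\OPT$. Since the algorithm always keeps the ``move nothing'' partitioning among its candidates, if $\Delta\le\epsilon\cdot\OPT$ we are already done; so assume $\Delta>\epsilon\cdot\OPT$, which also pins $\Delta$ into the narrow window $\bigl(\tfrac{\epsilon}{1+\epsilon}c_0,\,c_0\bigr]$, so an estimate of $\Delta$ accurate to within a constant factor can be guessed with only $O(\log(1/\epsilon))$ overhead. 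The structural fact I would extract is a weight bound: the edges incident to $M^{\ast}$ that are \emph{relevant} --- those cut by $\pi^{\ast}$, or whose cut/uncut status changes between the input and $\pi^{\ast}$ --- have total weight $O(\Delta/\epsilon)$. Indeed, the relevant edges cut by $\pi^{\ast}$ weigh at most $\OPT$, and if $U$ (resp.\ $N$) is the weight of edges turned uncut (resp.\ newly cut) by the moves, then $U=\Delta+N$ and $N\le\OPT$; edges incident to $M^{\ast}$ that are uncut both before and after (for instance between two nodes moved together) are irrelevant, so they may carry arbitrary weight without affecting anything.

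\smallskip
\noindent\textbf{The search.} I would grow a candidate solution incrementally, discovering the nodes of $M^{\ast}$ roughly in decreasing order of relevant incident weight. Maintain a \emph{frontier}, initialized to the $k$ terminals. In each of at most $r$ rounds, either stop, or: choose a frontier vertex $x$, choose an incident edge of $x$ of weight at least a $\tfrac{\epsilon}{2(1+\epsilon)}$ fraction of the (guessed) weight budget local to $x$ --- a \emph{heavy} edge --- let $v$ be its other endpoint, guess the target partition $B\neq p_0(v)$, commit the move $v\to B$, and enlarge the frontier by $v$, the terminal $t_B$, and the $O(1/\epsilon)$ heavy ``anchor'' neighbours of $v$ lying in $B$. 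By the weight bound, the relevant weight incident to $M^{\ast}$ is $O(\Delta/\epsilon)$ and is spread over the at most $k+i-1$ frontier vertices present in round $i$, so each frontier vertex offers only $O(1/\epsilon)$ heavy edges; tracking constants, round $i$ has at most $\tfrac{2(k-1)(1+\epsilon)}{\epsilon}\cdot i$ choices. Hence the search tree has at most $\bigl(\tfrac{2(k-1)(1+\epsilon)}{\epsilon}\bigr)^{r}r!$ nodes, and performing $O(n^2)$ work per tree node --- maintaining the current partition, recomputing its cut, and listing heavy incident edges --- yields the claimed bound, with $c$ an absolute constant. The best candidate over all leaves is returned.

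\smallskip
\noindent\textbf{Correctness.} It remains to show that some $(1+\epsilon/2)$‑near‑optimal solution has a move set exactly of the form the search can output. Take a BFS ordering of $M^{\ast}$ rooted at the terminals inside the subgraph spanned by $M^{\ast}$, its heavy anchors, and the terminals; whenever a moved node $v$ is reached only through a \emph{light} relevant edge (or is unreachable), delete $v$ and all moved nodes downstream of it. By the definition of ``light'' and the weight bound, each deletion perturbs the cut by $O(\epsilon\cdot\OPT/r)$, so the at most $r$ deletions together cost at most $\epsilon\cdot\OPT/2$; what survives is a heavy‑connected move set, which the search enumerates. Rounding the edge weights to capture the interactions within this set adds another $O(\epsilon\cdot\OPT)$ of slack, and replacing $\epsilon$ by $\epsilon/c'$ for a suitable constant $c'$ turns the total into a clean $(1+\epsilon)$ guarantee. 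If $M^{\ast}$ breaks into several pairwise non‑adjacent heavy components, one instead runs the search once per target size $r'\le r$ and recombines the best connected moves by a $\poly(r)$‑size knapsack‑style DP over the move budget.

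\smallskip
\noindent\textbf{Main obstacle.} The crux is the reachability claim. A node $v$ may be moved into $B$ solely because of heavy edges to nodes that were already in $B$ and are never themselves moved, so $v$ dangles off the frontier only through such anchors, which may sit far from every terminal in $G$. Absorbing the anchors into the frontier restores reachability, but one must simultaneously (i) keep the per‑round branching at $O(k/\epsilon)$, (ii) certify that an anchor really lies in the intended partition, and (iii) ensure the ``delete a moved node and its downstream'' pruning never destroys the reachability of the surviving moves. Reconciling these three, and thereby nailing the exact constant $\tfrac{2(k-1)(1+\epsilon)}{\epsilon}$, is the hard part; the surrounding $\epsilon$‑accounting is routine.
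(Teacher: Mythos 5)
Your proposal does not follow the paper's route, and the step you yourself flag as ``the hard part'' --- the reachability claim --- is a genuine gap, not a routine detail. Your search only discovers a moved node $v$ if it hangs off the current frontier (terminals plus previously committed moves plus their anchors) by a heavy edge, and your correctness argument prunes away any moved node reached only through light edges, charging each deletion $O(\epsilon\cdot\OPT/r)$. Neither half is justified: deleting a moved node from $M^{\ast}$ can change the cut by the total weight of \emph{all} its incident edges whose cut status the move had flipped, which is not controlled by the weight of the single light edge through which BFS reached it; and a node may be moved precisely because of heavy edges to far-away non-terminal, non-moved anchors, so certifying anchors, bounding the branching, and preserving reachability under pruning all interact, exactly as you note. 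On top of this, your argument accumulates several independent $\epsilon$-losses (pruning, weight rounding, the $\Delta$-guessing) and ends by ``replacing $\epsilon$ by $\epsilon/c'$''; since the branching factor is $\Theta(k/\epsilon)$ per round, that rescaling changes the base of the exponential and you would not recover the stated running time $\bigl(\tfrac{2(k-1)(1+\epsilon)}{\epsilon}\bigr)^{r}r!\,c\,n^{2}$ --- the matching constant in your round-$i$ branching count is asserted rather than derived.

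The paper's proof avoids connectivity entirely and loses no approximation across levels. Set $\alpha=1+\epsilon$ and let $d^{-}(v)$ be the weight of edges at $v$ crossing the \emph{initial} cut $C_0$. Since $c(C_0)=\tfrac12\sum_v d^{-}(v)$, the set $A$ of nodes with $d^{-}(v)\ge\tfrac{\alpha-1}{r\alpha}c(C_0)$ has size at most $\tfrac{2r\alpha}{\alpha-1}$. The single structural lemma is: if $c(C_0)>\alpha\, c(C_{\OPT})$, then because $c(C_{\OPT})\ge c(C_0)-\sum_{v\in S_{\OPT}}d^{-}(v)$, some $v^{\ast}\in S_{\OPT}$ has $d^{-}(v^{\ast})>\tfrac{\alpha-1}{\alpha r}c(C_0)$, i.e.\ $S_{\OPT}\cap A\neq\emptyset$ --- no frontier, no heavy-edge discovery, no requirement that moved nodes be adjacent to terminals or to each other. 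The algorithm keeps $C_0$ as a candidate, and otherwise branches over all $(k-1)|A|$ choices of ``move some $v\in A$ to partition $j$ and contract $v$ with $s_j$,'' recursing with move budget $r-1$; one branch is consistent with the optimum, and by induction that branch returns a cut of value at most $\alpha\, c(C_{\OPT})$. The guarantee is exactly $1+\epsilon$ at every level (either the current initial cut already suffices, or one more node of $S_{\OPT}$ is guessed correctly), and the recurrence $F(r,n)\le(k-1)\tfrac{2r\alpha}{\alpha-1}F(r-1,n-1)+O(n^2)$ gives the stated running time directly. If you want to salvage your write-up, replace the frontier search by branching on the small set $A$ of high-$d^{-}$ nodes; the rest of your $\epsilon$-accounting then becomes unnecessary.
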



The proof of Theorem \ref{thm:fptas} can be found in Section \ref{sec:fptas}. The running time of the algorithm in Theorem \ref{thm:fptas} grows fast as $r$ grows, which leads us to the following question: Can we design an approximation algorithm with a running time polynomial in both $n$ \emph{and} $r$? Our main result focuses on answering this question. We give a polynomial time approximation algorithm with approximation factor at most $3(r+1)$. This is done by extending the CKR linear program to an LP for the \maxmove~problem by adding a single linear constraint reflecting the move constraint of the problem. Our approximation algorithm is a simple rounding scheme for this LP, which makes our approach very practical. The main challenge is in proving the approximation guarantees of our rounding, which is given in Section \ref{sec:alg}. 

\begin{restatable}{theorem}{bestapprox}\label{thm:m2approx}
Given a positive integer $r$, there exists a randomized algorithm for the \maxmove~problem on any $n$-node $m$-edge graph $G$ that with approximation factor $\frac{2k}{k-1}(r+1)$ and running time $T_{LP}(n,m)+O(mk)$, where $T_{LP}(n,m)$ is the running time of solving a linear program with $O(n)$ variables and $O(m)$ constraints.
\end{restatable}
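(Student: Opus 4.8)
The plan is to extend the classical CKR relaxation for Multiway cut to the $r$-move setting and then argue that the standard "ball-cutting" rounding, applied to the augmented LP, both respects the move budget in expectation and loses only a factor $3(r+1)$ in the cut. First I would set up the LP: introduce, for each node $v$ and each partition $i \in [k]$, a variable $x^i_v \in [0,1]$ with $\sum_i x^i_v = 1$, interpreting $x_v = (x^1_v,\dots,x^k_v)$ as a point in the simplex $\Delta_k$. Each terminal $t_i$ is fixed to the $i$-th vertex of $\Delta_k$. The cut contribution of edge $uv$ of weight $w_{uv}$ is modeled, as in CKR, by $\tfrac12 w_{uv}\|x_u - x_v\|_1$ via auxiliary variables. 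The new ingredient is the move constraint: for a non-terminal $v$ initially in partition $\pi(v)$, the quantity $1 - x^{\pi(v)}_v$ is the LP's fractional indication that $v$ has left its home cluster, so I would add the single linear constraint $\sum_{v \text{ non-terminal}} \bigl(1 - x^{\pi(v)}_v\bigr) \le r$. This LP has $O(n)$ variables after eliminating auxiliaries (or $O(nk)$, still $O(n)$ for constant $k$) and $O(m)$ constraints, giving the stated running time $T_{LP}(n,m)$ plus the $O(mk)$ needed to read off the rounded solution.

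Next I would recall the CKR rounding: pick a uniformly random permutation $\sigma$ of the terminals and a single threshold $\theta \in (0,1)$ (the Kleinberg–Tardos / CKR scheme uses a random radius together with the permutation), then process terminals in the order given by $\sigma$, and assign to terminal $t_i$ every still-unassigned node $v$ with $x^i_v \ge \theta$ (equivalently, $v$ lies in the ball of radius $1-\theta$ around the $i$-th corner); any node never captured goes to the last terminal, or we use the standard fix that with the threshold in $(\tfrac12,1)$ no node is ever missed once we also handle the last terminal deterministically. The two facts I need are, first, the classical edge bound: for every edge $uv$, $\Pr[u,v \text{ separated}] \le \tfrac{3}{2}\cdot \tfrac12\|x_u-x_v\|_1$ — actually the clean bound I will use is that the expected cut is at most $\tfrac{3}{2}$ (or $\tfrac{2k}{k-1}\cdot\tfrac12$ in the refined analysis matching the theorem's constant) times the LP cut objective; and second, a new move bound: for a non-terminal $v$, the probability that $v$ is assigned to some partition other than $\pi(v)$ is at most $(r+1)\bigl(1 - x^{\pi(v)}_v\bigr)$, or more simply, the expected number of moved nodes is at most $(r+1)$ times the LP's move expression, which is $\le (r+1)r$ — hmm, that is off, so the real target is an expected-move bound of the form $\mathbb{E}[\text{\# moves}] \le$ something that, combined with a Markov/averaging argument over a bounded number of rounding attempts, yields a feasible-in-moves solution whose cut is within $3(r+1)$ of OPT.

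The cleanest route, and the one I would actually pursue, is a two-part argument. Part one: show $\mathbb{E}[\text{cut}(\sigma,\theta)] \le \tfrac{2k}{k-1}\cdot \mathrm{LP}_{\mathrm{cut}} \le \tfrac{2k}{k-1}\cdot \OPT$, by the standard CKR edge analysis applied unchanged (the move constraint does not affect the per-edge separation probability). Part two: show that for a single non-terminal $v$, $\Pr[v \text{ moved}] \le 1 - x^{\pi(v)}_v$ scaled appropriately — specifically I expect $\Pr[v\text{ assigned to }i\neq\pi(v)]$ is controlled by how much LP mass $x_v$ puts outside corner $\pi(v)$, giving $\mathbb{E}[\text{\# moved}] \le C\sum_v(1-x^{\pi(v)}_v) \le Cr$ for an absolute constant $C$. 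Then by linearity and Markov, $\Pr[\text{cut} > \tfrac{2k}{k-1}(r+1)\OPT \text{ or \# moved} > r]$ can be pushed below $1$ after conditioning/scaling the thresholds appropriately, or — more robustly — one runs the randomized rounding $O(\log)$ times, or uses a conditional-expectation/derandomization pass, and outputs the best feasible result. The factor $r+1$ then enters precisely from trading off these two failure events via Markov on a combined potential $\tfrac1{(2k/(k-1))\OPT}\text{cut} + \tfrac{1}{r+1}\text{\# moved}$, whose expectation is at most $1 + \tfrac{r}{r+1} < 2$, so some outcome has cut $\le \tfrac{2k}{k-1}(r+1)\OPT$ and moves $\le r+1$; a final careful argument (or a slightly sharper constant in the potential) removes the "+1" in the move count by noting moves are integral and rerandomizing if exactly $r+1$ occurs.

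The main obstacle, I expect, is Part two: bounding the probability that a given non-terminal is pulled out of its home partition by the CKR ball-growing process in terms of the LP variable $x^{\pi(v)}_v$. The subtlety is that $v$ can be captured by terminal $\pi(v)$ itself (a non-move) or by another terminal (a move), and whether $v$ is even reached depends on the random permutation order and on which other balls have already swallowed it; one must show that, integrated over $\sigma$ and $\theta$, the move probability does not exceed a constant times the "escape mass" $1 - x^{\pi(v)}_v$. Getting a clean constant here — rather than something that itself depends on $k$ badly — and then combining it with the cut bound without the two analyses interfering is the delicate part; everything else is either classical CKR or routine Markov/union-bound bookkeeping.
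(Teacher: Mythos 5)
There is a genuine gap, and it sits exactly where you flagged it: Part two of your plan does not close. With CKR ball-cutting, the natural bound on the move probability of a non-terminal $v$ is $\Pr[v\text{ moved}]\le 2\bigl(1-x^{\pi(v)}_v\bigr)$, not $1-x^{\pi(v)}_v$: a foreign terminal $j$ captures $v$ with probability at most $\sum_{j\neq\pi(v)}x^j_v=1-x^{\pi(v)}_v$, but $v$ can also fall through to the default last terminal whenever $\rho\ge x^{\pi(v)}_v$, which contributes another $1-x^{\pi(v)}_v$. So $\mathbb{E}[\#\text{moved}]\le 2r$, and Markov then gives $\Pr[\#\text{moved}\ge r+1]\le \tfrac{2r}{r+1}$, which is vacuous for $r\ge 1$. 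Even granting the constant $1$, your potential argument as written does not deliver the claimed asymmetric tradeoff: if $\mathbb{E}\bigl[\tfrac{\text{cut}}{(2k/(k-1))\OPT}+\tfrac{\#\text{moved}}{r+1}\bigr]<2$, the outcome you extract satisfies only $\#\text{moved}<2(r+1)$ and $\text{cut}<2\cdot\tfrac{2k}{k-1}\OPT$ simultaneously — not moves $\le r$ with cut $\le\tfrac{2k}{k-1}(r+1)\OPT$. The version that would work is the conditional one, $\mathbb{E}[\text{cut}\mid \#\text{moved}\le r]\le \mathbb{E}[\text{cut}]/\Pr[\#\text{moved}\le r]$, which yields the factor $r+1$ precisely when $\mathbb{E}[\#\text{moved}]\le r$; but that requires the move-probability constant to be exactly $1$, which the CKR scheme does not give you, and it still only proves existence of a good outcome rather than an algorithm that finds one.

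The paper sidesteps all of this by not using ball-cutting at all for this theorem. It rounds each coordinate of the LP solution down to a multiple of $g=\tfrac{k-1}{k(r+1)}$ after a single random shift $\rho\in(0,g)$, groups nodes whose rounded vectors coincide, and assigns each group to the partition of its terminal (if it contains one) or to the majority home partition of its members. The cut bound is per-edge as in your Part one (probability of separation at most $\tfrac{2}{g}d(u,v)=\tfrac{2k}{k-1}(r+1)d(u,v)$), but the move budget is satisfied \emph{deterministically}: for every realization of $\rho$, a counting argument (Lemma~\ref{lem:rmoves}) shows each group $H_{\mathbf z}$ moves fewer than $r_{\mathbf z}+r_{\mathbf z}/r$ nodes where $r_{\mathbf z}=\sum_{v\in H_{\mathbf z}}(1-\bar X_v^{\ell_v})$, and summing over groups and invoking the move constraint gives strictly fewer than $r+1$ moves, hence at most $r$ by integrality. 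That is the idea your proposal is missing: choosing the grid scale as a function of $r$ so that the budget constraint is enforced combinatorially rather than probabilistically, which is also where the factor $r+1$ genuinely originates.
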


Firstly, note that we can de-randomize this algorithm in the expense of an additional factor of $n$ in the running time (see Section \ref{sec:derandom}). Secondly, note that unlike Theorem \ref{thm:fptas}, the running time of our algorithm in Theorem \ref{thm:m2approx} is independent of $r$. For $k=2$, we give a rounding scheme for our LP with approximation factor at most $r+1$, resulting in an $(r+1)$-approximation algorithm for the \maxmove[2]~problem (see Section \ref{subsec:2partitioning}).
We further show that the integrality gap of the LP used in Theorem \ref{thm:m2approx} is at least $r+1$, demonstrating that our rounding algorithm is tight within a constant factor (see Section \ref{sec:integrality-gap}).

Finally, as our approximation factor is dependent on $r$, we resort to bicriteria algorithms to make it constant.  
\begin{restatable}{theorem}{fourmove}\label{thm:fourM}
For any $1/2<\gamma<1$, there exists a polynomial time $(\frac{1}{1-\gamma},\frac{5}{2\gamma-1})$-approximation randomized algorithm for \maxmove~problem, where the first criterion is the number of nodes moved and the second criterion is the cut value. 
\end{restatable}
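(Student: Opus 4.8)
The plan is to round the same linear program used in the proof of Theorem~\ref{thm:m2approx}, namely the CKR relaxation of Multiway cut together with the single extra constraint $\sum_u\bigl(1-x_u(p(u))\bigr)\le r$, the sum taken over non-terminal nodes, where $p(u)$ denotes the initial partition of $u$; the new ingredient is that we round with a \emph{truncated} threshold. Concretely, I would solve the LP to obtain a fractional solution $x^*$ (terminals pinned to the simplex vertices $e_i$, objective value $\mathrm{LP}(x^*)\le\OPT$), draw a threshold $\theta$ uniformly from the interval $(1-\gamma,\gamma)$ --- which is nonempty and has length $2\gamma-1$ because $\gamma>\tfrac12$ --- and an independent uniformly random permutation $\sigma$ of the $k$ terminals, and assign each node $u$ to the first $\sigma(j)$, $j\le k-1$, with $x^*_u(\sigma(j))\ge\theta$, sending the remaining unassigned nodes to $\sigma(k)$. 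Shrinking the range of $\theta$ toward $\tfrac12$ is exactly what will bound the number of moves, while expanding it to $(0,1)$ recovers ordinary CKR rounding, which has a constant-factor cut bound but no control on moves.

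For the first (move-count) criterion I would establish a bound that is in fact deterministic. If $x^*_u(p(u))\ge\gamma$, then the other coordinates of $x^*_u$ sum to at most $1-\gamma<\theta$, so $p(u)$ is the unique terminal $i$ with $x^*_u(i)\ge\theta$; hence $u$ is assigned to $p(u)$ for \emph{every} $\theta\in(1-\gamma,\gamma)$ and every $\sigma$, and is not moved. Consequently every moved node lies in $\{u:\ x^*_u(p(u))<\gamma\}$, and since each such node contributes strictly more than $1-\gamma$ to the move constraint, this set has fewer than $r/(1-\gamma)$ nodes. Thus every outcome of the rounding moves at most $r/(1-\gamma)$ nodes, which is the first criterion.

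For the second (cut) criterion, note that for a fixed $\theta$ and $\sigma$ the rounded partition --- and hence the event that an edge $(u,v)$ is separated --- is identical to the one produced by ordinary CKR rounding; restricting $\theta$ to $(1-\gamma,\gamma)$ only removes separating thresholds, while renormalizing the threshold density to $1/(2\gamma-1)$. Hence, for every edge,
\[
\Pr\bigl[(u,v)\text{ separated}\bigr]\ \le\ \frac{1}{2\gamma-1}\int_{0}^{1}\Pr_{\sigma}\bigl[(u,v)\text{ separated}\mid\theta\bigr]\,d\theta\ \le\ \frac{c}{2\gamma-1}\cdot\tfrac12\bigl\|x^*_u-x^*_v\bigr\|_1,
\]
where $c$ is the absolute constant from the CKR per-edge analysis. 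Summing over edges gives $\mathbb{E}[\mathrm{cut}]\le\frac{c}{2\gamma-1}\,\mathrm{LP}(x^*)\le\frac{c}{2\gamma-1}\,\OPT$, and one checks that $c\le 5$ (the sharp CKR constant $\tfrac32-\tfrac1k$ would suffice). Combined with the previous paragraph --- and de-randomized by enumerating the $O(nk)$ distinct thresholds and the $k!$ permutations and keeping the cheapest outcome, or simply by outputting the best of several random runs of the rounding --- this gives the claimed $\bigl(\tfrac1{1-\gamma},\tfrac5{2\gamma-1}\bigr)$-bicriteria algorithm.

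I expect the cut estimate to be the main obstacle: it amounts to pushing the CKR per-edge separation analysis through with the truncated threshold distribution while keeping the factor $1/(2\gamma-1)$ explicit. The delicate case is an edge between a node that some terminal claims and a node that ends up in the leftover cluster $\sigma(k)$; such an edge is separated with probability roughly $1-1/k$, but only for $\theta$ in an interval whose length is at most the relevant coordinate gap $|x^*_u(i)-x^*_v(i)|$, i.e.\ within (twice) that edge's LP contribution, so the integral over $\theta$ stays controlled. By contrast, the move bound is essentially free once one observes that capping $\theta$ below $\gamma$ freezes every ``heavily-home'' node in place.
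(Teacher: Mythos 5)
Your proposal is correct, but it rounds the LP by a genuinely different scheme than the paper. The paper's Algorithm~\ref{alg:4Mmove} is a two-stage rounding: it first subdivides edges (Algorithm~\ref{alg:edgesubdiv}) so that adjacent fractional vectors differ in at most two coordinates, draws $\lambda$ uniformly from $[\tfrac{\gamma+1}{3},\gamma]$ and freezes every node having a coordinate at least $\lambda$ into that coordinate's partition, and then runs the two-permutation CKR rounding of \cite{LP_paper} with a second random threshold $\rho\in[0,\lambda]$ on the leftover nodes; its cut bound comes from a three-case analysis (both endpoints frozen, exactly one frozen, neither frozen) whose contributions sum to $\frac{3}{2\gamma-1}d_{\bar X}(u,v)+\frac{2}{2\gamma-1}d_{\bar X}(u,v)=\frac{5}{2\gamma-1}d_{\bar X}(u,v)$. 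You instead use a single truncated threshold $\theta\in(1-\gamma,\gamma)$ with a uniformly random permutation and no subdivision. Both proofs obtain the move bound identically: your observation that $x^*_u(p(u))\ge\gamma$ freezes $u$ in place for every outcome is exactly Lemma~\ref{lem:size} applied with $\lambda=\gamma$, and is deterministic in both arguments. On the cut side you leave the constant $c$ somewhat open and gesture at the CKR constant $\tfrac32-\tfrac1k$; you should not invoke that constant, since its derivation requires the specific two-permutation distribution and the edge-subdivision preprocessing, neither of which your scheme performs. But you do not need it: an edge $(u,v)$ can be separated only for those $\theta$ with $\min(x^*_u(i),x^*_v(i))<\theta\le\max(x^*_u(i),x^*_v(i))$ for some coordinate $i$ (otherwise each terminal claims both endpoints or neither, and a pair unclaimed through step $k-1$ lands together in $\sigma(k)$), so a union bound over coordinates gives $\int_0^1\Pr_{\sigma}\bigl[(u,v)\text{ separated}\mid\theta\bigr]\,d\theta\le\sum_i|x^*_u(i)-x^*_v(i)|=2d(u,v)$ for any permutation distribution, i.e.\ $c=2$. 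Your rounding therefore achieves $\frac{2}{2\gamma-1}\le\frac{5}{2\gamma-1}$: it is simpler than the paper's, avoids the subdivision and the case analysis, and in fact yields a better constant, whereas the paper's version mainly buys the reuse of the CKR machinery it develops in Section~\ref{subsucsec:rounding}.
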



The proof of Theorem \ref{thm:fourM} can be found in Section \ref{sec:bich}. We provide two remarks 
to better understand the applicability of our results. Firstly, we show in Section \ref{sec:hardness} that \maxmove~without terminals is also W[1] hard and so even though for a lot of partitioning problems the ``without terminals" version is easier, this is not the case for \maxmove. Our second remark is stated below. We include simple complementary numerical evaluations of our algorithm in Section \ref{sec:exp}. 

\paragraph{Meaningful range for $r$:}
The range of $r$ in which Theorems \ref{thm:m2approx} and \ref{thm:fptas} 
provide meaningful results can be understood as follows. For the case of $k=2$, the Multiway cut problem is tractable and so it is possible to compute an optimal multiway cut $C$ for the input graph in polynomial time. Suppose that cut $C$ is obtained by moving $\hat{r}$ nodes from their initial partitions. Then, Theorems \ref{thm:m2approx} and \ref{thm:fptas} are meaningful only when $r<\hat{r}$. This is because if $r\ge \hat{r}$, then $C$ is an optimal output for the \maxmove[2] problem. On the other hand, when $k>2$, there is no polynomial time algorithm for the Multiway cut problem. Therefore, in this case, one can obtain a cut $C$ whose cut value is within a  constant approximation factor of an optimal multiway cut, using an approximation algorithm for the Multiway cut problem (for example the $\frac{3}{2}$-approximation algorithm of \cite{LP_paper}). Let $\hat{r}$ be the number of nodes that $C$ moves from their initial partitions. Once again, Theorems \ref{thm:m2approx} and \ref{thm:fptas} are useful only when $r<\hat{r}$.

\subsection{Related Works}
The Multiway cut problem without terminals is called the $k$-partitioning problem. While the $k$-partitioning problem is polynomially solvable for fixed $k$~\cite{goldschmidt1994polynomial}, the Multiway cut problem is NP-hard for $k\ge 3$~\cite{dahlhaus1992complexity}. 
There exist many ``budgeted" variants of the Multiway cut and $k$-partitioning problems in the literature, we name a few of these variants here.

\paragraph{The \stcutatmost~problem:} First, recall that the \stcutatmost~problem is NP-hard, even though the ``without terminals" version of this problem, that is, the Min $r$-size cut problem, is polynomially solvable \cite{armon2006multicriteria,watanabe1987edge}. The Min $r$-size cut problem asks to find a cut with the minimum value among all the cuts with one side having at most $r$ nodes. The result of \cite{armon2006multicriteria} is based on the result of \cite{karger1996new} which states that the number of cuts with value at most $\alpha$ times the value of the min cut is at most $n^{2\alpha}$. Such result does not exist for $s$-$t$ cuts and, in fact, the number of min $s$-$t$ cuts can be exponentially many\footnote{Consider a graph which consists of $n/2$ paths of length $2$ from terminal $s$ to terminal $t$. An $s$-$t$ cut has at least one edge from each path, and so the number of min $s$-$t$ cuts is $2^{n/2}$.}. The \stcutatmost~problem has been studied by other names such as the problem of ``cutting at most $r$ nodes by edge-cut with terminal" \cite{fomin2013parameterized,lokshtanov2013clustering}. Moreover, \cite{zhang2014unbalanced} gives an $O(\log(n))$-approximation algorithm for the \stcutatmost~problem using R\"{a}cke's tree decomposition method \cite{racke2008optimal}. See Table \ref{tab:2partition} for a comparison of the related variants of the \maxmove[2]~problem and their associated algorithms. 

Table \ref{tab:2partition} compares the variants of \maxmove~and \stcutatmost. Next we include omitted proofs.
\begin{table}[t]
\centering
\resizebox{\columnwidth}{!}{
\begin{tabular}{|l|l|l|l|}
\hline
                                      & FPT Apprx. Fctr& Apprx. Fctr & LB   \\ \hline
\stcutatmost                         & $1+\epsilon$*             & $\min(r+1,\log(n))$ \cite{zhang2016new,zhang2014unbalanced}       & $W{[}1{]}$-hard* \\ \hline
Min $r$-size cut                            & $1$ \cite{lokshtanov2013clustering}                       & $1$ \cite{lokshtanov2013clustering}         &               \\ \hline
\maxmove[2]                & $1+\epsilon$*             & $r+1$*         & $W{[}1{]}$-hard* \\ \hline
\maxmove[2] without terminals & $1+\epsilon$*             & $r+1$*         & $W{[}1{]}$-hard* \\ \hline
\end{tabular}
}
\caption{Known and new results on different variants of the $2$-partitioning problem. Asterisks are used to show the results provided by this paper. The second column indicates the best known approximation factors for fixed-parameterized approximation algorithms for the problems with parameter $r$. The third column is dedicated to the approximation factors of general approximation algorithms with running times polynomial in both $n$ and $r$, and the fourth column indicates lower bounds. Note that the NP-hardness of the \stcutatmost~problem was already known \cite{chen2016size} and this paper shows the $W[1]$-hardness of this problem in its weighted version.
}
\label{tab:2partition}
\end{table}

\paragraph{Other variants:}
Most of the variants of the Multiway $k$-cut problem that we are aware of are for $k=2$. Basically having a budget on the number of nodes in one side of the cut, and having terminals or no terminals in the graph makes up of most of these variants. 

The exact version of the \stcutatmost~problem is called the \textit{Min E$r$-Size $s$-$t$ cut} problem \cite{feige2003cutting}, which asks to find an $s$-$t$ cut with minimum cut value among all the cuts that have exactly $r$ nodes in the $s$-side. The \textit{$(r,n-r)$-cut} problem \cite{bonnet2015multi} is the above problem without any terminals. 
The $(r,n-r)$-cut problem is known to be W[1]-hard \cite{bonnet2015multi}, and as a result the Min E$r$-Size $s$-$t$ cut problem is also W[1]-hard. For both these problems, there is a randomized $O(r/\log(n))$-approximation algorithm due to \cite{feige2003cutting}. These problems have been studied when parameterized by cut value as well \cite{fomin2013parameterized}.


For $k>2$, the most relevant variant of the Multiway $k$-cut problem is the $k$-balanced partitioning problem, which is proved to be APX-hard~\cite{feldmann2015balanced}. Other problems that seem to be indirectly relevant are the $k$-route cut problem \cite{chuzhoy2015approximation} and the MinSBCC problem \cite{hayrapetyan2005unbalanced}.

~\cite{dinitz2022fair} study a variant of the min $s$-$t$ cut problem with fairness considerations. They introduce the Demographic Fair Cut problem, which is informally defined as follows: Given a terminal node $s$ and a labeling of nodes into various demographics, the goal is to find a minimum cut that disconnects at least a certain given fraction of each demographic from $s$. If all nodes of the graph belong to a single demographic, then this problem is equivalent to the \stcutatmost~problem. Thus, their $\log(n)$-approximation algorithm can also be applied to the \stcutatmost~problem, resulting in a $\min(r+1,\log(n))$-approximation algorithm for the \stcutatmost~problem. Whether a $\log(n)$-approximation algorithm exists for the \maxmove~problem (either for $k=2$ or $k>2$) is an interesting open question that is outside the scope of this paper.

\section{Preliminaries}\label{sec:prelim}

Let $G=(V,E)$ with weight function $c: E\rightarrow \R$, that is, an edge $(u,v)\in E$ has weight $c(u,v)$ or $c_{uv}$. We refer to an edge with endpoints $u$ and $v$ by $(u,v)$ or $uv$. We only consider undirected graphs throughout this paper. Let $E(G)$ and $V(G)$ denote the edge set and node set of a graph $G$.  A problem of size $n$ with parameter $r$ is said to be FPT if there is an algorithm that solves it in $f(r)O(n^c)$ time, where $f(\cdot)$ is an arbitrary function depending only on $r$ and $c$ is a constant. If any $W[1]$-complete problem is FTP, then FPT = $W[1]$ and every problem in $W[1]$ is FPT.

We formally define the \maxmove~problem using the definition of a $k$-cut (also known as $k$-partitioning). 
%
Given a weighted graph $G=(V,E)$ with weight function $c: E\rightarrow \R$, a \emph{$k$-cut} is a subset of edges $E'\subseteq E$ such that removing the set of edges $E'$ from the graph results in another graph $G'=(V,E\backslash E')$ that has $k$ connected components. 
We refer to these connected components that appear after removing edge set $E'$ as \textit{clusters} or \textit{partitions}. The weight of a $k$-cut $E'$ is the sum of the weight of the edges in $E'$ and it is called the \textit{cut value}. Any set of edges that introduce a cut in a graph, such as $E'$, can be referred to as a \textit{cut-set}.

\begin{definition}[The $r$-move $k$-partitioning problem]
Let $G=(V,E)$ be a weighted graph with a weight function $c: E\rightarrow \R$,  a set of terminals $S=\{s_1,\ldots,s_k\}\subseteq V$ and $|V|=n$. Suppose that $G$ has a given initial partitioning, where $\ell_v\in \{1,\ldots,k\}$ denotes the initial partition that node $v$ belongs to and for a terminal node $s_i$, we have $\ell_{s_i}=i$, for each $i\in \{1,\ldots,k\}$. The \textit{\maxmove}~problem asks to find a minimum-weighted $k$-cut $\ell^*:V\rightarrow \{1,\ldots,k\}$ such that $\ell_v^*\neq \ell_v$ for at most $r$ non-terminal nodes $v$.
\end{definition}


The parameter $r$ is referred to as the \textit{move parameter}. 
%
%
The linear program we use throughout this paper is represented in Table \ref{tab:maxmove-lp} and will be referred to as LP \ref{tab:maxmove-lp}. Suppose that $k$ is the number of partitions of interest. For each $v\in V$, let $X_v=(X_v^1,\ldots,X_v^k)$ be a vector of size $k$ of positive real decision variables in $[0,1]$. To understand the role of decision variable $X_v^i$ better, observe that if we were only considering integer solutions, then $X_v^i=1$ would represent node $v$ being in partition $i$, and constraint (C4) would ensure that each node $v$ is assigned to exactly one partition. We define the distance between pairs of vertices $u,v$ as $d(u,v)=\frac{1}{2}\sum_{i=1}^k |X_u^i-X_v^i|$. In case of integer solutions, if $u$ and $v$ are in the same partition, then $d(u,v)=0$ and if they are in different partitions, $d(u,v)=1$. This further motivates the objective function in LP \ref{tab:maxmove-lp}. To represent $|X_u^i-X_v^i|$ in the LP, we define variables $Y_{uv}^i$'s for each $(u,v)\in E$, and add constraints (C2) and (C3) to make sure $Y_{uv}^i = |X_u^i-X_v^i|$. Finally, we call constraint (C7) \emph{the move constraints}, which ensures that at most $r$ nodes are moved from their initial partitions, in case of integer solutions. Note that LP \ref{tab:maxmove-lp} without constraint (C7) is the CKR LP.

\begin{table*}[hht]
    \centering
    \begin{tabular}{lllr}
    \toprule
        \text{Minimize:} & $\sum_{(u,v)\in E}c_{uv}d(u,v)$ & &\\
        \text{Subject to:} & & &\\
        & $d(u,v)= \frac{1}{2}\sum_{i=1}^k Y^i_{uv}$  & $\forall (u,v)\in E$ & (C1)\\
        & $Y^i_{uv}\ge X^i_u-X^i_v$ & $\forall (u,v)\in E$, $\forall i\in\{1\cdots k\}$& (C2)\\
        & $Y^i_{uv}\ge X^i_v-X^i_u$ & $\forall (u,v)\in E$, $\forall i\in\{1\cdots k\}$& (C3)\\
        & $\sum_{i=1}^{k}X^i_u=1$ & $\forall u\in V$ & (C4)\\
        & $X_{s_i}=\mathbf{e}_i$ & $\forall s_i\in S$  & (C5)\\
        & $X_v^i \ge 0$ & $\forall v\in V$, $\forall i\in \{1,\ldots,k\}$ & (C6)\\
        &  $\sum_{v\in V}(1-X_{v}^{\ell_v}) \le r$ & & (C7)\\
        \bottomrule
    \end{tabular}
    \caption{$r$-move $k$-partitioning LP (LP~\ref{tab:maxmove-lp})}
    \label{tab:maxmove-lp}
\end{table*}



Suppose $X$ is a feasible, but not necessarily optimal, solution to 
LP~\ref{tab:maxmove-lp}. Notation $d_X(u,v)=\frac{1}{2}\sum_{i=1}^k |X_u^i-X_v^i|$ is used to better clarify that the distance function is calculated specifically for solution $X$. We use the notation $\bar{X}$ to denote the optimal solution of 
LP~\ref{tab:maxmove-lp}. 
%
We conclude this section with the following lemmas that will be later used in our proofs. 
\begin{lemma}\label{lem:size}
Consider a feasible solution ${X}$ to LP~\ref{tab:maxmove-lp}. Given a number $0\le\lambda<1$, if $L$ 
is the set of nodes $v$ for which ${X}_v^{\ell_v}<\lambda$, then $|L|< \frac{r}{1-\lambda}$.
\end{lemma}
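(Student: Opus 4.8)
The plan is to bound $|L|$ by using the move constraint (C7) directly. The key observation is that for every node $v$ in the set $L$, we have $X_v^{\ell_v} < \lambda$, which means the contribution of $v$ to the sum in constraint (C7) satisfies $1 - X_v^{\ell_v} > 1 - \lambda$. Since terminal nodes $s_i$ have $X_{s_i} = \mathbf{e}_i$ by constraint (C5), they satisfy $X_{s_i}^{\ell_{s_i}} = 1$, so no terminal can belong to $L$; every node of $L$ is a non-terminal contributing a strictly positive amount to the left-hand side of (C7).

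The main steps are as follows. First I would note that all summands $1 - X_v^{\ell_v}$ in (C7) are nonnegative (this follows from constraint (C6) together with (C4), which force $X_v^{\ell_v} \le 1$). Hence restricting the sum in (C7) to the nodes in $L$ only decreases it, giving
\[
r \;\ge\; \sum_{v\in V}\bigl(1 - X_v^{\ell_v}\bigr) \;\ge\; \sum_{v\in L}\bigl(1 - X_v^{\ell_v}\bigr).
\]
Next, for each $v\in L$ we use the defining inequality $X_v^{\ell_v} < \lambda$ to get $1 - X_v^{\ell_v} > 1 - \lambda$, and summing over the $|L|$ nodes of $L$ yields $\sum_{v\in L}(1 - X_v^{\ell_v}) > |L|(1-\lambda)$. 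Combining the two displayed bounds gives $r > |L|(1-\lambda)$, and since $1 - \lambda > 0$ (because $\lambda < 1$), we may divide to conclude $|L| < \frac{r}{1-\lambda}$, as claimed.

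There is essentially no obstacle here — the lemma is a direct consequence of the move constraint, and the only things to be a little careful about are: (i) the strictness of the final inequality, which comes from the strict inequality $X_v^{\ell_v} < \lambda$ in the definition of $L$ (so the bound is $<$ rather than $\le$), and (ii) making sure the summands dropped from (C7) are indeed nonnegative so that the restriction to $L$ is valid. The case $L = \emptyset$ is trivially consistent with the bound since $r \ge 0$ and $1 - \lambda > 0$. One subtlety worth a remark: the statement is phrased for an arbitrary feasible solution $X$, not just the optimal one, which is exactly why only feasibility (constraints (C4), (C6), (C7)) is used and nothing about optimality is needed.
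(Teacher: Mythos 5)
Your proof is correct and follows essentially the same argument as the paper: restrict the sum in constraint (C7) to the nodes of $L$, lower-bound each term by $1-\lambda$, and divide. The extra care you take about nonnegativity of the dropped summands and about terminals not lying in $L$ is sound but not needed beyond what the paper already states.
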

\begin{proof}
For each $v\in L$, we have $1-{X}_v^{\ell_v}>1-\lambda$. Since ${X}$ is a feasible solution of LP~\ref{tab:maxmove-lp}, $r\ge\sum_{v\in L}(1-{X}_v^{\ell_v})$. So $r>(1-\lambda)|L|$, and $|L|< \frac{r}{1-\lambda}$.
\end{proof}


\begin{lemma}\label{lem:helper2}
Let $X$ be any feasible solution to LP~\ref{tab:maxmove-lp}. For any $i\in\{1,\ldots,k\}$ and $u,v\in V$ we have $d_X(u,v)\ge |X_u^i-X_v^i|$.
\end{lemma}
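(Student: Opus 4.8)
The goal is to prove Lemma~\ref{lem:helper2}: for any feasible solution $X$ to LP~\ref{tab:maxmove-lp}, any coordinate $i$, and any $u,v\in V$, we have $d_X(u,v)\ge |X_u^i-X_v^i|$. Recall that $d_X(u,v)=\frac{1}{2}\sum_{j=1}^k |X_u^j-X_v^j|$, so this amounts to showing that the single term $|X_u^i-X_v^i|$ is at most half the total $\ell_1$-distance between the probability vectors $X_u$ and $X_v$.

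The plan is to exploit constraint (C4), which forces $\sum_{j=1}^k X_u^j = \sum_{j=1}^k X_v^j = 1$. Subtracting these two equalities gives $\sum_{j=1}^k (X_u^j - X_v^j) = 0$. Now split the coordinate sum into the single index $i$ and the rest: $(X_u^i - X_v^i) = -\sum_{j\ne i}(X_u^j - X_v^j)$. Taking absolute values and applying the triangle inequality yields $|X_u^i - X_v^i| \le \sum_{j\ne i}|X_u^j - X_v^j|$. Adding $|X_u^i-X_v^i|$ to both sides gives $2|X_u^i-X_v^i| \le \sum_{j=1}^k |X_u^j - X_v^j| = 2 d_X(u,v)$, and dividing by $2$ finishes the argument.

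There is no real obstacle here; the only thing to be mildly careful about is that the proof uses only (C4) (and implicitly (C6) is not even needed), so it applies verbatim to any pair $u,v$ including terminals, whose $X$-vectors are standard basis vectors by (C5) and still satisfy (C4). I would present the chain of inequalities in a single short displayed computation, making explicit the use of (C4) at the step where the coordinate differences sum to zero.

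\begin{proof}
Fix $i\in\{1,\ldots,k\}$ and $u,v\in V$. By constraint (C4), $\sum_{j=1}^k X_u^j = 1 = \sum_{j=1}^k X_v^j$, so $\sum_{j=1}^k (X_u^j - X_v^j) = 0$, which gives $X_u^i - X_v^i = -\sum_{j\ne i}(X_u^j - X_v^j)$. By the triangle inequality,
\[
|X_u^i - X_v^i| \;=\; \Bigl|\sum_{j\ne i}(X_u^j - X_v^j)\Bigr| \;\le\; \sum_{j\ne i}|X_u^j - X_v^j|.
\]
Adding $|X_u^i - X_v^i|$ to both sides yields $2\,|X_u^i - X_v^i| \le \sum_{j=1}^k |X_u^j - X_v^j| = 2\,d_X(u,v)$, and dividing by $2$ gives $d_X(u,v) \ge |X_u^i - X_v^i|$.
\end{proof}
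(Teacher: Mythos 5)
Your proof is correct and follows essentially the same route as the paper's: both use constraint (C4) to reduce $\sum_{j\ne i}|X_u^j-X_v^j|$ versus $|X_u^i-X_v^i|$ via the triangle inequality and then combine the two pieces of the $\ell_1$-distance. No issues.
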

\begin{proof}
By the triangle inequality we have $\sum_{j\neq i}|X_u^j-X_v^j|\ge |\sum_{j\neq i} X_u^j-\sum_{j\neq i} X_v^j| =|(1-X_u^i)-(1-X_v^i)|$, where the last equality used $\sum_j X_u^j=\sum_j X_v^j=1$. Therefore, $d_X(u,v)=\frac{1}{2}(|X_u^i-X_v^i|+\sum_{j\neq i}|X_u^j-X_v^j|)\ge |X_u^i-X_v^i|$.
\end{proof}

\section{FPTAS
} \label{sec:fptas}
Here, we first provide high level intuition on the algorithm of Theorem \ref{thm:fptas} and prove it formally in section~\ref{subsec:fptas_proofs}. 
In this result, we use this simple observation that by moving a node $v$ from its initial partition to another partition, the cut value reduces by at most $d^{-}(v):=\sum_{u, \ell_u\neq \ell_v} c_{uv}$. Let $C_{0}$ denote the initial cut-set. Let $C_{\OPT}$ denote the optimal cut-set for the \maxmove~problem and  $S_{\OPT}$ be the set of nodes that have to be moved to partitions different from their initial ones in order to obtain $C_{\OPT}$. By the problem definition, $|S_{\OPT}|\le r$. Let the value of the optimal cut be $c(C_{\OPT})$ and the value of the initial cut be $c(C_0)$. If $c(C_0)$ is already a $(1+\epsilon)$ approximation of $c(C_{\OPT})$, then we can simply output the initial cut as our solution. Otherwise, we argue that there exists a node $v\in S_{\OPT}$ for which $d^-(v)\ge \frac{\epsilon}{r(1+\epsilon)}c(C_0)$. Let $A$ be the set of nodes $u$ with $d^{-}(u)\ge \frac{\epsilon}{r(1+\epsilon)}c(C_0)$. We show that $|A|\le \frac{2r(1+\epsilon)}{\epsilon}$. This allows us to achieve a recursive algorithm as follows: For each node $v\in A$ and each partition $i\neq \ell_v$, move $v$ to partition $i$, contract $v$ with the terminal node $s_i$, then reduce the move parameter by $1$ and recurse on this instance. Contraction of node $v$ with the terminal node $s_i$ is to forbid the algorithm from moving node $v$ again. Since we make $(k-1)|A|$ instances of the $(r-1)$-move $k$-partitioning problem, we can show that the running time of our algorithm is polynomial in $n$ and $1/\epsilon$. Note that this algorithm is polynomial only if $r$ is constant. Theorem \ref{thm:m2approx} provides an approximation algorithm that is polynomial in $n$ and is independent of $r$.

\subsection{Proof of Theorem \ref{thm:fptas}}\label{subsec:fptas_proofs}

Let $\alpha = 1+\epsilon$. We solve an instance of the \maxmove~problem with at most $(k-1)\frac{2r\alpha}{\alpha-1}$ instances of the $(r-1)$-move $k$-partitioning problem. 

Recall that the weight of any edge $(u,v)\in E(G)$ is denoted by $c_{uv}$. 
Let the initial cut-set in graph $G$ be $C_0$, i.e., $C_0=\{(u,v)\in E(G) | \ell_u\neq \ell_v\}$, and let $c(C_0)$ denote its cut value. For each node $v\in V(G)$, let $d^-(v)$ denote the sum of the weight of all edges incident to $v$ that lie in the initial cut-set $C_0$ i.e., $d^-(v)=\sum_{(u,v)\in C_0} c_{uv}$. Observe that $c(C_0)=\frac{1}{2}\sum_{v\in V(G)} d^-(v)$, since the weight of each edge $(u,v)\in C_0$ appears in both $d^-(u)$ and $d^-(v)$. 
 Let $A$ be the set of nodes $v$ with $d^-(v)\ge \frac{\alpha-1}{r\alpha}c(C_0)$. Our FPTAS makes $k-1$ instances of $(r-1)$-move $k$-partitioning for each $v\in A$ as follows: For any $j\neq \ell_v$, let $G_{v,j}$ be the graph obtained from $G$ by contracting node $v$ with terminal $s_j$. After this contraction, for any node $u$, the weight of edge $(s_j,u)$ is increased by $c_{vu}$. Let set $\mathbf{C}=\{C_0\}$. The algorithm then recurses on graph $G_{v,j}$ with move parameter $r-1$ for all $v'\in A$ and $j\neq \ell_{v'}$, 
 and adds the output of this recursion to the set of cut-sets $\mathbf{C}$. Then the algorithm outputs the cut-set with the least value among all those in $\mathbf{C}$. Note that a small post processing is needed for the output as we need to undo the contractions which are at most $r$ many, see Algorithm~\ref{alg:fptas}.


\begin{algorithm}[h]
\caption{FPTAS for constant move parameter}\label{alg:fptas}
\begin{algorithmic}[1]
\Require graph $G$ with terminals $s_1,\ldots, s_k$, move parameter $r$ and $\alpha=1+\epsilon$.
   \State $C_0=\{(u,v)\in E(G) | \ell_u\neq \ell_v\}\gets$ initial cut-set in graph $G$. 
    \State $c(C_0)\gets \sum_{u,v \text{ such that } (u,v)\in C_0} c_{uv}$. \Comment{Cut value of $C_0$.}
    \If{$r=0$}
        \State \Return $C_0$. \Comment{Return the initial cut, without moving anyone.}
    \Else

    \State $\mathbf{C}=\{C_0\}.$
    \For{$v\in V(G)$} 
        \State $d^-(v)=\sum_{u \text{ such that }(u,v)\in C_0} c_{uv}$. 
    \EndFor
      \State $A\gets\{v\in V(G)| d^-(v)\ge \frac{\alpha-1}{r\alpha}c(C_0)\}$.
    \For{$v\in A \cap V(G)$}
        \For{$j\in \{1,\ldots,k\}\setminus\{l_v\}$} \Comment{All partitions except $\ell_v$.}
            \State $G_{v,j} \gets$ Graph obtained from $G$ by contracting $v$ with $s_j$ and increasing the weight of each edge $(s_j,u)$ by $c_{vu}$.
            \State $C_{v,j}\gets$ The cut returned by running Algorithm \ref{alg:fptas} on $G_{v,j}$ with move parameter $r-1$.
            \State $\mathbf{C}\gets \mathbf{C}\cup C_{v,j}$.
        \EndFor
    \EndFor
    \State \Return $\arg\min_{C\in \mathbf{C}} c(C)$. \Comment{Return the cut with the smallest value in $\mathbf{C}$.}
    \EndIf
\end{algorithmic}
\end{algorithm}

We begin the analysis of this algorithm by showing that the running time of the algorithm is at most $(\frac{2(k-1)\alpha}{\alpha-1})^rr!O(n)$. Since $c(C_0)=\frac{1}{2}\sum_{v\in V(G)} d^-(v)$ we have  $c(C_0)\ge \frac{1}{2}\sum_{v\in A}d^-(v)\ge |A|\frac{\alpha-1}{2r\alpha}c(C_0)$; therefor, $|A|\le \frac{2r\alpha }{\alpha-1}$.
Let $F(r,n)$ be the running time of the algorithm on any $n$-node graph. We have $F(r,n)\le (k-1)\frac{2r\alpha }{\alpha-1}F(r-1,n-1) + O(n^2)$, where the second term is the time required to compute $d^-(v)$ for all $v\in V(G)$ . Note that the constant behind $n^2$ is independent of $r,k$ and $\alpha$. 
Since $F(0,n) = O(1)$ for any $n$, we have $F(r,n)\le (\frac{2(k-1)\alpha}{\alpha-1})^rr!O(n^2)$.

Now we prove that the algorithm outputs a cut-set $C'$ such that $c(C')\le \alpha c(C_{\OPT})$, where $C_{\OPT}$ is an optimal cut-set. Recall that $C'= \arg \max_{\bar{C}\in \mathbf{C}}c(\bar{C})$. Let $S_{\OPT}$ be the set of nodes that were moved from their initial partition to obtain $C_{\OPT}$. If $S_{\OPT}=\emptyset$, then $C_{\OPT}=C_0$. Since $C_0\in \mathbf{C}$, we have that $c(C_{\OPT})\le c(C')\le c(C_0)$, so $c(C')=c(C_{\OPT})$ and the problem is trivial. Therefore, suppose that $S_{\OPT}\neq\emptyset$. Here, two cases are possible: (i) $c(C_0)\le \alpha c(C_{\OPT})$, and (ii) $c(C_0)> \alpha c(C_{\OPT})$. In case (i), since $c(C')\le c(C_0)$, we have $c(C')\le \alpha c(C_{\OPT})$; thus, $C'$ is a valid $\alpha$ approximation of $C_{\OPT}$. For case (ii), we show that $S_{\OPT}\cap A\neq \emptyset$. This can be achieved by first showing that $c(C_{\OPT})\ge c(C_0)-\sum_{v\in S_{\OPT}}d^-(v)$. 

Let graph $G'$ be obtained by removing all nodes in $S_\OPT$ and edges incident to such nodes from graph $G$. Let $C'_0$ be the equivalent of initial cut $C_0$ on $G'$, that is, $C'_0 = C_0 \setminus \{(u,v)\in E(G) | v \in S_\OPT\}$. Similarly, we define $C'_\OPT$ as the portion of the optimal cut-set $C_\OPT$ that exists in $G'$, i.e., $C'_\OPT = C_\OPT \setminus \{(u,v)\in E(G) | v \in S_\OPT\}$. Since $C_0$ and $C_\OPT$ only differed in some edges incident to nodes in $S_{\OPT}$, we have $C'_0 = C'_\OPT$. As $c(C'_\OPT) \le c(C_\OPT)$, we have $c(C'_0) \le c(C_\OPT)$. On the other hand, since by removing each $v$ in $S_{\OPT}$ from graph $G$ the cut value drops by at most $d^-(v)$, we have that $c(C'_0)$ is at least $c(C_0)-\sum_{v\in S_{\OPT}}d^-(v)$. Putting these together provides $c(C_{\OPT})\ge c(C_0)-\sum_{v\in S_{\OPT}}d^-(v)$.  Recall that in case (ii), $c(C_0)> \alpha c(C_{\OPT})$; therefore, $\sum_{v\in S_{\OPT}}d^-(v)\ge c(C_0)-c(C_{\OPT})>\frac{\alpha-1}{\alpha}c(C_0)$. So there must be a node $v^*$ in $S_{\OPT}$ such that $d^-(v^*)>\frac{\alpha-1}{\alpha r}c(C_0)$, concluding that $S_{\OPT}\cap A\neq \emptyset$.

Suppose that in the optimal solution, node $v^*\in A$ is moved to partition $j$ for some $j\neq \ell_{v^*}$. We argue that the optimal cut value for $G_{v^*,j}$ when at most $r-1$ nodes can be moved to partitions different from their initial ones is at most $c(C_{\OPT})$. This is because contracting $v^*$ and $s_j$ in $G_{v^*,j}$ is equivalent to moving $v^*$ to partition $j$ and not moving it to another partition again. By moving the nodes in $S_{\OPT}\setminus \{v^*\}$ in order to obtain the cut $C_{OPT}$, the resulting cut value is $c(C_{\OPT})$. Let $C_{v^*,j}'$ be the output of the recursion on $G_{v^*,j}$. By induction, $c(C_{v^*,j}')\le \alpha c(C_{\OPT})$. Since $C_{v^*,j}'\in C$, the algorithm outputs an $\alpha$-approximation for $G$.

\section{$3(r+1)$-Approximation Algorithm for Parametric $r$}\label{sec:alg}
In this section, we show a $\frac{2k}{k-1}(r+1)$-approximation algorithm for the \maxmove~problem when $k>2$. For $k>2$, we have that $\frac{2k}{k-1}\le 3$; therefore, our algorithm provides a $3(r+1)$-approximation guarantee for such $k$. Note that, this algorithm works for $k=2$ as well; however, we already have an $(r+1)$-approximation algorithm for this case.

\begin{algorithm}[h]
\caption{$\frac{2k}{k-1}(r+1)$-approximation algorithm for the \maxmove~problem}\label{alg:r_approx}
\begin{algorithmic}[1]
   \State $\bar{X}\gets$ solution to LP~\ref{tab:maxmove-lp}. 
   \State $g = \frac{k-1}{k(r+1)}$.
    \State $\rho$ chosen uniformly at random from $(0,g)$.
    \State $A\gets\{\mathbf{z}=(\mathbf{z}^1,\ldots,\mathbf{z}^k)| \frac{1}{g}\mathbf{z}^i\in \mathbb{Z}_{\ge 0} \text{ for all } 1\le i\le k \text{ ~and~ } \sum_{i=1}^k \mathbf{z}^i < 1+kg \}$.
    \For{$i$ from $1$ to $k$}
        \For{$v\in V(G)$} 
            \State $\tilde{X}_v^i = g\floor{\frac{\bar{X}_v^i+\rho}{g}}.$
        \EndFor
    \EndFor
    \For{$\mathbf{z}\in A$}
        \State $H_\mathbf{z}\gets \{v\in V(G) | \tilde{X}_v = \mathbf{z}\}$.
        \If{$H_\mathbf{z}$ contains a terminal $s_i$} \Comment{$H_\mathbf{z}$ can contain at most one terminal}
            \State $i^*_\mathbf{z}\gets i$.
        \Else
            \State $i^*_\mathbf{z}\gets\arg \max_{j=1}^k |\{v\in H_\mathbf{z}|\ell_v=j\}|$.
        \EndIf
        \State Assign all the nodes in $H_\mathbf{z}$ to $i^*_\mathbf{z}$.
    \EndFor
\end{algorithmic}
\end{algorithm}

The algorithm operates as follows: The optimal solution to  LP~\ref{tab:maxmove-lp} gets rounded in two phases. Let $g = \frac{k-1}{k(r+1)}$ and $\rho$ be a real number chosen uniformly at random from the interval $(0,g)$. In the first phase, for each node $v$, we round the entries of $\bar{X}_v$ to obtain
$\tilde{X}_v$ such that for each $1\le i\le k$, $\tilde{X}_v^i$ is an integer multiple of $g$. More precisely, let $\tilde{X}_v^i = \floor{\frac{\bar{X}_v^i+\rho}{g}}g$, i.e., $\tilde{X}_v^i$ is the largest multiple of $g$ no larger than 
 $\bar{X}_v^i+\rho$. 

In the second phase of rounding, the algorithm puts all nodes that are rounded to the same vector in the same partition, see Algorithm~\ref{alg:r_approx} for a description of how this partition is chosen. 
To clarify the role of $\rho$, for two nodes $u,v\in V(G)$ if $d_{\bar{X}}(u,v)$ is small, then we want $\bar{X}_u$ and $\bar{X}_v$ be rounded to the same vector with high probability, i.e., $\tilde{X}_u=\tilde{X}_v$ with high probability. If we let $\rho=0$ at all times, then $u$ and $v$ never get rounded to the same vector in the following case: For some small $\epsilon>0$, let $\bar{X}_v^1=g+\epsilon=\bar{X}_u^2$, $\bar{X}_v^2=g-\epsilon=\bar{X}_u^1$ and $\bar{X}_v^i=\bar{X}_u^i$ for $i>2$. Thus, $d_{\bar{X}}(u,v)= 2\epsilon$ is small. For this example and $\rho=0$ we have $\tilde{X}_v^1= \floor{\frac{\bar{X}_v^1}{g}}g = \floor{\frac{g+\epsilon}{g}}g = g$ and $\tilde{X}_u^1 = \floor{\frac{\bar{X}_u^1}{g}}g = \floor{\frac{g-\epsilon}{g}}g = 0$. In other words, even though $d_{\bar{X}}(u,v)$ is small $\bar{X}_u$ and $\bar{X}_v$ are rounded to different vectors with certainty. This is in fact true for any constant value of $\rho$ and a random $\rho$ is key to be able to round two nodes $u$ and $v$ that are very close to each other to the same vector, with high probability.

\begin{lemma}\label{lem:rounding1}
    For each $v\in V(G)$ and each $i\in \{1,\ldots,k\}$ we have  $|\tilde{X}_v^i-\bar{X}_v^i|<g$. Moreover, if for two nodes $u,v\in V(G)$ and an $i\in \{1,\ldots,k\}$ we have $\tilde{X}_v^i = \tilde{X}_u^i$, then $|\bar{X}_v^i -\bar{X}_u^i|<g$.
\end{lemma}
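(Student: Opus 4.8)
The statement is the "rounding stays close" lemma for the first phase of Algorithm~\ref{alg:r_approx}, where $\tilde{X}_v^i = g\floor{(\bar{X}_v^i+\rho)/g}$. The plan is to prove the two claims separately, both by elementary properties of the floor function.

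First I would establish $|\tilde{X}_v^i - \bar{X}_v^i| < g$. By definition, $\tilde{X}_v^i = g\floor{(\bar{X}_v^i+\rho)/g}$, and using the basic bounds $x-1 < \floor{x} \le x$ with $x = (\bar{X}_v^i+\rho)/g$, I get $\bar{X}_v^i + \rho - g < \tilde{X}_v^i \le \bar{X}_v^i + \rho$. Subtracting $\bar{X}_v^i$ gives $\rho - g < \tilde{X}_v^i - \bar{X}_v^i \le \rho$. Since $\rho \in (0,g)$, both endpoints have absolute value at most $g$: the upper bound gives $\tilde{X}_v^i - \bar{X}_v^i \le \rho < g$, and the lower bound gives $\tilde{X}_v^i - \bar{X}_v^i > \rho - g > -g$. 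Hence $|\tilde{X}_v^i - \bar{X}_v^i| < g$.

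For the second claim, suppose $\tilde{X}_v^i = \tilde{X}_u^i$ for some $i$. The key observation is that two reals $a,b$ with $\floor{(a+\rho)/g} = \floor{(b+\rho)/g}$ must satisfy $|a - b| < g$, because $(a+\rho)/g$ and $(b+\rho)/g$ lie in a common half-open interval $[m, m+1)$ of length $1$, so $|(a+\rho)/g - (b+\rho)/g| < 1$, i.e. $|a-b| < g$. Applying this with $a = \bar{X}_v^i$, $b = \bar{X}_u^i$ (the $\rho$ cancels) yields $|\bar{X}_v^i - \bar{X}_u^i| < g$. Alternatively one can derive it from the first part via the triangle inequality together with the fact that $\tilde{X}_v^i$ and $\tilde{X}_u^i$ are equal multiples of $g$, but the direct interval argument is cleaner.

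There is no real obstacle here — both parts are one-line consequences of $\floor{x} \le x < \floor{x} + 1$; the only thing to be careful about is using the open interval $\rho \in (0,g)$ correctly so that the inequalities are strict, and noting that the $+\rho$ shift is irreversible in the first claim but cancels in the second.
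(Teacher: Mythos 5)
Your proof is correct and follows essentially the same route as the paper: both parts rest on the bounds $x-1<\lfloor x\rfloor\le x$, and your interval argument for the second claim is exactly the paper's observation that $\bar{X}_v^i+\rho$ and $\bar{X}_u^i+\rho$ lie in a common interval of length $g$. The only cosmetic difference is that the paper splits the first claim into the cases $\tilde{X}_v^i\ge\bar{X}_v^i$ and $\tilde{X}_v^i<\bar{X}_v^i$, whereas you derive both sides of the bound in one pass.
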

\begin{proof}
    If $\tilde{X}_v^i\ge\bar{X}_v^i$, then since $\tilde{X}_v^i\le \bar{X}_v^i+\rho$ 
    and $\rho<g$, we have that $|\tilde{X}_v^i-\bar{X}_v^i|<g$. If $\tilde{X}_v^i<\bar{X}_v^i$, then since $\tilde{X}_v^i = \floor{\frac{\bar{X}_v^i+\rho}{g}}g$, we have $(\frac{\bar{X}_v^i+\rho}{g}-1)g<\tilde{X}_v^i$, thus $\bar{X}_v^i+\rho<\tilde{X}_v^i + g$. Therefore, $|\tilde{X}_v^i-\bar{X}_v^i|<g-\rho<g$.

    For the second part of the lemma, since both $\bar{X}_v^i+\rho$ and $\bar{X}_u^i+\rho$ are rounded down to the same value $\alpha = \tilde{X}_v^i$, they are both in the $[\alpha,\alpha+g)$ interval. Therefore, $\bar{X}_v^i$ and $\bar{X}_u^i$ are both in the interval $[\alpha-\rho,\alpha+g-\rho)$, concluding that $|\bar{X}_v^i -\bar{X}_u^i|<g$.
\end{proof}

By Lemma \ref{lem:rounding1}, we have $|\tilde{X}_v^i-\bar{X}_v^i|<g$ and by constraint (C4) in LP~\ref{tab:maxmove-lp} we have $\sum_{i=1}^k \bar{X}_v^i=1$, thus, $\sum_{i=1}^k \tilde{X}_v^i < 1+kg$. Let $A$ be the set of all $k$-sized vectors with entries being non-negative integer multiples of $g$ such that the sum of the entries of each vector is at most $1+kg$. By definition, each entry of $\tilde{X}_v$ is non-negative integer multiple of $g$ and, as shown, the sum of entries of $\tilde{X}_v$ is at most $1+kg$. Therefore, Algorithm~\ref{alg:r_approx} rounds any $\bar{X}_v$ to a vector in $A$. 


Now we show that if $d_{\bar{X}}(u,v)$ is very small, then with high probability $\tilde{X}_u = \tilde{X}_v$. Note that for two nodes $u$ and $v$ with $\tilde{X}_u = \tilde{X}_v$, Algorithm~\ref{alg:r_approx} puts $u$ and $v$ in the same partition.
\begin{lemma}\label{lem:cut_probs}
    For any two distinct nodes $u,v\in V(G)$ such that $(u,v)\in E(G)$, the probability that Algorithm~\ref{alg:r_approx} puts $(u,v)$ in the cut-set is at most $\frac{2}{g}d_{\bar{X}}(u,v)$.
\end{lemma}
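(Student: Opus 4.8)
The plan is to bound the probability that $\tilde X_u \neq \tilde X_v$, since whenever $\tilde X_u = \tilde X_v$ the algorithm places $u$ and $v$ in the same part and the edge $(u,v)$ is not cut. So it suffices to show $\Pr[\tilde X_u \neq \tilde X_v] \le \frac{2}{g} d_{\bar X}(u,v)$. By a union bound over coordinates, $\Pr[\tilde X_u \neq \tilde X_v] \le \sum_{i=1}^k \Pr[\tilde X_u^i \neq \tilde X_v^i]$, so the first step is to estimate $\Pr[\tilde X_u^i \neq \tilde X_v^i]$ for a single coordinate $i$. Recall $\tilde X_w^i = g\lfloor (\bar X_w^i + \rho)/g\rfloor$ with $\rho$ uniform on $(0,g)$. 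For a fixed $i$, write $a = \bar X_u^i$ and $b = \bar X_v^i$ and assume WLOG $a \le b$. The two quantities $\lfloor (a+\rho)/g\rfloor$ and $\lfloor (b+\rho)/g\rfloor$ differ exactly when a multiple of $g$ falls in the half-open interval $(a+\rho, b+\rho]$ (equivalently, the ``grid line'' is crossed). Since we may reduce modulo $g$, the probability over the uniform choice of $\rho\in(0,g)$ that such a crossing occurs is $\min\{|b-a|/g,\,1\} \le |b-a|/g = |\bar X_u^i - \bar X_v^i|/g$; here I will need to be slightly careful about the boundary cases (when $|b-a|\ge g$ the probability is $1$ but then the bound $|b-a|/g\ge 1$ still holds), and about the endpoints of the interval $(0,g)$, which have measure zero and so do not affect the probability.

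Combining, $\Pr[\tilde X_u \neq \tilde X_v] \le \sum_{i=1}^k \frac{|\bar X_u^i - \bar X_v^i|}{g} = \frac{1}{g}\sum_{i=1}^k |\bar X_u^i - \bar X_v^i| = \frac{2}{g}\cdot\frac12\sum_{i=1}^k|\bar X_u^i - \bar X_v^i| = \frac{2}{g} d_{\bar X}(u,v)$, which is exactly the claimed bound. Finally, I would note explicitly that the event ``$(u,v)$ is cut by Algorithm~\ref{alg:r_approx}'' implies ``$\tilde X_u \neq \tilde X_v$'': if $\tilde X_u = \tilde X_v = \mathbf z$ then both $u$ and $v$ lie in $H_{\mathbf z}$ and are assigned to the same part $i^*_{\mathbf z}$, so the edge is not in the cut-set. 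Hence $\Pr[(u,v)\text{ cut}] \le \Pr[\tilde X_u \neq \tilde X_v] \le \frac{2}{g}d_{\bar X}(u,v)$.

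The main obstacle is the careful accounting of the single-coordinate crossing probability: one must argue that shifting all $\bar X_w^i$ by the same random $\rho$ and rounding down to the grid $g\mathbb Z$ causes a mismatch in coordinate $i$ with probability exactly $\min\{|\bar X_u^i - \bar X_v^i|/g, 1\}$, which amounts to a clean ``random grid shift'' argument (the probability that a randomly translated grid of spacing $g$ separates two points at distance $\delta$ is $\min\{\delta/g,1\}$). Everything else — the union bound over the $k$ coordinates, the identification of $\frac12\sum_i|\bar X_u^i-\bar X_v^i|$ with $d_{\bar X}(u,v)$ via (C1)–(C3), and the implication that a non-cut edge has equal rounded vectors — is routine. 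I do not expect to need Lemma~\ref{lem:rounding1} or Lemma~\ref{lem:helper2} here; those are for later steps bounding how many nodes move and controlling the majority-vote reassignment.
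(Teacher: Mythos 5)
Your proposal is correct and follows essentially the same route as the paper: bound the per-coordinate mismatch probability $\Pr[\tilde X_u^i\neq\tilde X_v^i]$ by $|\bar X_u^i-\bar X_v^i|/g$ and union-bound over the $k$ coordinates to get $\frac{2}{g}d_{\bar X}(u,v)$. The only difference is cosmetic: the paper computes the single-coordinate crossing probability via an explicit two-case analysis on whether $\lfloor\bar X_u^i/g\rfloor=\lfloor\bar X_v^i/g\rfloor$ (after first disposing of the case $d_{\bar X}(u,v)>g/2$), whereas your ``random grid shift'' periodicity argument packages both cases at once.
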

\begin{proof}
If $d_{\bar{X}}(u,v)>\frac{g}{2}$, then $1<\frac{2}{g}d_{\bar{X}}(u,v)$ and the lemma holds trivially. Assume that $d_{\bar{X}}(u,v)\le \frac{g}{2}$. 
Observe that $u$ and $v$ are assigned to different partitions only if they are rounded to different vectors, that is, $\tilde{X}_u \neq \tilde{X}_v$. This happens if there exists an $i\in \{1,\ldots,k\}$ and an integer $1\le t\le \frac{1}{g}$ such that $tg$ is between $\bar{X}_v^i+\rho$ and $\bar{X}_u^i+\rho$. To measure this probability, let $i\in \{1,\ldots,k\}$ be fixed. Without loss of generality, suppose that  $\bar{X}_v^i< \bar{X}_u^i$. Then, we need to compute the probability that
\begin{equation}\label{eq:in-between}
\bar{X}_v^i+\rho<tg\le \bar{X}_u^i+\rho. 
\end{equation}
%
To see feasible values of $\rho$ and $t$, we consider the following two cases: (i) $\floor{\frac{\bar{X}_v^i}{g}} = \floor{\frac{\bar{X}_u^i}{g}}$, and (ii) $\floor{\frac{\bar{X}_v^i}{g}} <\floor{\frac{\bar{X}_u^i}{g}}$. For the first case, suppose that $\floor{\frac{\bar{X}_v^i}{g}} = \floor{\frac{\bar{X}_u^i}{g}} = \hat{t}$. This means that $\hat{t}g\le \bar{X}_v^i< \bar{X}_u^i <(\hat{t}+1)g$. Since $\rho < g$, the only feasible value for $t$ is $\hat{t}+1$ and inequality \eqref{eq:in-between} is equivalent to \begin{equation}\label{eq:const1}
0<(\hat{t}+1)g-\bar{X}_u^i\le \rho< (\hat{t}+1)g-\bar{X}_v^i<g.
\end{equation} 
The length of the $[(\hat{t}+1)g-\bar{X}_u^i, (\hat{t}+1)g-\bar{X}_v^i)]$ interval 
is $|\bar{X}_u^i-\bar{X}_v^i|$. Therefore, in case (i), with probability at most $\frac{1}{g}|\bar{X}_u^i-\bar{X}_v^i|$ there exists a $t$ satisfying inequality \eqref{eq:in-between}. 

For the second case, suppose that $\hat{t} = \floor{\frac{\bar{X}_v^i}{g}} <\floor{\frac{\bar{X}_u^i}{g}}$. Since $g>\frac{g}{2}>d_{\bar{X}}(u,v)\ge \bar{X}_u^i-\bar{X}_v^i$, we must have $\floor{\frac{\bar{X}_u^i}{g}} = \hat{t}+1$. The only accepted values for $t$ is $\hat{t}+1$ and $\hat{t}+2$, and inequality \eqref{eq:in-between} is satisfied when $\rho$ is in one of the following two intervals:
$$
0\le\rho<(\hat{t}+1)g - \bar{X}_v^i  \text{~~ or ~~} (\hat{t}+2)g-\bar{X}_u^i\le\rho \le g. 
$$
The sum of the lengths of these two intervals is $|\bar{X}_u^i-\bar{X}_v^i|$, so in case (ii) edge $(u,v)$ is in the cut-set with probability at most $\frac{1}{g}|\bar{X}_u^i-\bar{X}_v^i|$. Putting cases (i) and (ii) together, any edge $(u,v)\in E(G)$ is in the cut-set with probability at most $\frac{1}{g}|\bar{X}_u^i-\bar{X}_v^i|$. Finally, since $d(u,v)=\frac{1}{2}\sum_{i=1}^k |\bar{X}_u^i-\bar{X}_v^i|$, the probability that edge $(u,v)$ is in the cut-set is at most $\frac{2}{g}d_{\bar{X}}(u,v)$.
\end{proof}

Let $\mathbf{z}$ be a vector in $A$ and $\mathbf{z}^i$ denote its $i-$th entry. Let $H_\mathbf{z}$ be the set of all nodes $v\in V(G)$ for which $\tilde{X}_v=\mathbf{z}$. The following lemma discusses the number of terminal nodes that $H_\mathbf{z}$ may contain.

\begin{lemma}
\label{lem:max1terminal}
 For any vector $\mathbf{z}\in A$, there can be at most one terminal node in $H_\mathbf{z}$. 
\end{lemma}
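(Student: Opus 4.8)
The plan is to argue by contradiction: suppose two distinct terminals $s_a$ and $s_b$ both lie in $H_{\mathbf{z}}$, so that $\tilde{X}_{s_a} = \tilde{X}_{s_b} = \mathbf{z}$. By constraint (C5) of LP~\ref{tab:maxmove-lp}, $\bar{X}_{s_a} = \mathbf{e}_a$ and $\bar{X}_{s_b} = \mathbf{e}_b$, so in coordinate $a$ we have $\bar{X}_{s_a}^a = 1$ while $\bar{X}_{s_b}^a = 0$ (since $a \neq b$). The second part of Lemma~\ref{lem:rounding1} tells us that whenever two nodes are rounded to the same value in coordinate $i$, their original LP values differ by less than $g$ in that coordinate. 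Applying this with $i = a$, $u = s_a$, $v = s_b$ gives $|\bar{X}_{s_a}^a - \bar{X}_{s_b}^a| < g$, i.e. $1 < g$.

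It then remains to observe that $g = \frac{k-1}{k(r+1)} < 1$ for all $k \ge 2$ and $r \ge 1$ (indeed $\frac{k-1}{k} < 1$ and $r+1 \ge 2$), which contradicts $1 < g$. Hence no two terminals can share the same rounded vector, and $H_{\mathbf{z}}$ contains at most one terminal.

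I expect this proof to be essentially immediate once Lemma~\ref{lem:rounding1} is in hand; there is no real obstacle. The only point requiring a moment's care is to make sure the indices are handled correctly — that one picks the coordinate $a$ in which $s_a$ has value $1$ and $s_b$ has value $0$, rather than an arbitrary coordinate, so that the gap is exactly $1$ and the bound $|\bar{X}_{s_a}^a - \bar{X}_{s_b}^a| < g$ is genuinely violated. One should also state explicitly the bound $g < 1$ (which holds for the prescribed value $g = \frac{k-1}{k(r+1)}$) so the contradiction is airtight.
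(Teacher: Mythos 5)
Your proof is correct and follows essentially the same route as the paper: both invoke constraint (C5) to get a coordinate where the two terminals' LP values differ by exactly $1$, then apply the second part of Lemma~\ref{lem:rounding1} together with $g<1$ to reach a contradiction. No issues.
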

\begin{proof}
Suppose there exists two distinct terminal nodes $s_i$ and $s_j$ in $H_\mathbf{z}$. By constraint (C5) of LP~\ref{tab:maxmove-lp}, we have $\bar{X}_{s_i}^i=1$ and $\bar{X}^i_{s_j}=0$, providing $|\bar{X}_{s_i}^i-\bar{X}^i_{s_j}|=1$. On the other hand, by Lemma \ref{lem:rounding1} we have $|\bar{X}_{s_i}^i-\bar{X}^i_{s_j}|<g<1$. This is a contradiction and there can be at most one terminal in $H_\mathbf{z}$.
\end{proof}

If there exists a terminal node $s_i\in H_\mathbf{z}$, then Algorithm~\ref{alg:r_approx} puts all the nodes in $H_\mathbf{z}$ in partition $i$. Otherwise, the algorithm puts all the nodes in $H_\mathbf{z}$ in the partition where most nodes of $H_\mathbf{z}$ come from. Let $r_\mathbf{z} = \sum_{v\in H_\mathbf{z}}(1-\bar{X}_v^{\ell_v})$.

\begin{lemma}\label{lem:rmoves}
If $r\ge 1$, then for each vector $\mathbf{z}\in A$, there exists a partition $i_\mathbf{z}$ for which the number of nodes $v\in H_\mathbf{z}$ with $\ell_v\neq i_\mathbf{z}$ is less than $r_\mathbf{z}+\frac{r_\mathbf{z}}{r}$. If there is a terminal node $s_i$ in $H_\mathbf{z}$, then $i_\mathbf{z}=i$.
\end{lemma}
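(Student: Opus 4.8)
The plan is to distinguish the two cases in the definition of $i_\mathbf{z}$ and handle the terminal case first, since it is the more delicate one. Suppose there is a terminal $s_i\in H_\mathbf{z}$, forcing $i_\mathbf{z}=i$. By Lemma~\ref{lem:rounding1}, since $\tilde{X}_{s_i}=\tilde{X}_v=\mathbf{z}$ for every $v\in H_\mathbf{z}$, we have $|\bar{X}_v^i-\bar{X}_{s_i}^i|<g$, and because $\bar{X}_{s_i}^i=1$ by constraint (C5), this gives $\bar{X}_v^i>1-g$ for all $v\in H_\mathbf{z}$. Now for any node $v\in H_\mathbf{z}$ with $\ell_v\neq i$, constraint (C4) forces $\bar{X}_v^{\ell_v}\le 1-\bar{X}_v^i<g$, hence $1-\bar{X}_v^{\ell_v}>1-g$. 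Summing over just those nodes, $r_\mathbf{z}\ge \sum_{v\in H_\mathbf{z}:\,\ell_v\neq i}(1-\bar{X}_v^{\ell_v})> (1-g)\cdot|\{v\in H_\mathbf{z}:\ell_v\neq i\}|$. So the number of such nodes is less than $\frac{r_\mathbf{z}}{1-g}$, and it remains to check $\frac{1}{1-g}\le 1+\frac{1}{r}=\frac{r+1}{r}$, i.e. $g\le \frac{1}{r+1}$; this holds since $g=\frac{k-1}{k(r+1)}\le\frac{1}{r+1}$ for $k\ge 2$. This is the step I expect to require the most care, because it is where the particular choice of $g$ is used and where one must be careful that the bound on $1-\bar{X}_v^{\ell_v}$ only needs to hold for the moved nodes, not all of $H_\mathbf{z}$.

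For the non-terminal case, let $i_\mathbf{z}=\arg\max_{j}|\{v\in H_\mathbf{z}:\ell_v=j\}|$ be the plurality initial partition. Write $H_\mathbf{z}^{j}=\{v\in H_\mathbf{z}:\ell_v=j\}$, so $|H_\mathbf{z}^{i_\mathbf{z}}|\ge |H_\mathbf{z}^{j}|$ for all $j$, and let $m=|\{v\in H_\mathbf{z}:\ell_v\neq i_\mathbf{z}\}|=\sum_{j\neq i_\mathbf{z}}|H_\mathbf{z}^j|$ be the quantity to bound. The idea is that each node in $H_\mathbf{z}$, whatever its initial partition, contributes a nontrivial amount to $r_\mathbf{z}$ \emph{because} it is clustered with $s_i$... wait, there is no terminal here, so I instead argue directly: for every $v\in H_\mathbf{z}$ and the particular index $i_\mathbf{z}$, Lemma~\ref{lem:rounding1} applied across all nodes of $H_\mathbf{z}$ gives that the coordinates $\bar{X}_v^{i_\mathbf{z}}$ lie in a common half-open interval of length $g$; call its left endpoint $a$. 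For a node $v$ with $\ell_v=i_\mathbf{z}$ we get $1-\bar{X}_v^{\ell_v}\ge 1-(a+g)$, while for a node $v$ with $\ell_v\neq i_\mathbf{z}$ we have $\bar{X}_v^{\ell_v}\le 1-\bar{X}_v^{i_\mathbf{z}}\le 1-a$, so $1-\bar{X}_v^{\ell_v}\ge a$. Summing, $r_\mathbf{z}\ge |H_\mathbf{z}^{i_\mathbf{z}}|\,(1-a-g) + m\,a \ge m\,(1-a-g)+m\,a = m\,(1-g)$, where the middle inequality used $|H_\mathbf{z}^{i_\mathbf{z}}|\ge m$ when $1-a-g\ge 0$ and is even easier (dropping the first term) when $1-a-g<0$. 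Hence again $m<\frac{r_\mathbf{z}}{1-g}\le \frac{r_\mathbf{z}(r+1)}{r}=r_\mathbf{z}+\frac{r_\mathbf{z}}{r}$, using $g\le\frac{1}{r+1}$.

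Finally I would double-check the two edge conditions: that the inequality is strict (it is, since $\rho<g$ makes the interval containments in Lemma~\ref{lem:rounding1} strict, so $|\bar X_v^{i_\mathbf z}-\bar X_u^{i_\mathbf z}|<g$ strictly, giving $r_\mathbf{z}>m(1-g)$ strictly whenever $m\ge 1$; and if $m=0$ the claimed bound $m<r_\mathbf{z}+r_\mathbf{z}/r$ is immediate as long as $r_\mathbf z>0$, while if additionally $r_\mathbf z=0$ then every node of $H_\mathbf z$ has $\bar X_v^{\ell_v}=1$ so they all share the same $\ell_v$ and $m=0$ works), and that $r\ge 1$ is exactly what makes $\frac{r+1}{r}$ finite and the manipulation $\frac{1}{1-g}\le\frac{r+1}{r}$ meaningful. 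The only genuinely substantive point is the choice of $g$, everything else is bookkeeping with constraints (C4) and (C5) and Lemma~\ref{lem:rounding1}.
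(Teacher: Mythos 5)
Your terminal case is correct and is essentially the paper's case (i): from $\bar{X}_v^i>1-g$ for all $v\in H_\mathbf{z}$ you deduce $1-\bar{X}_v^{\ell_v}>1-g$ for every node with $\ell_v\neq i$, sum over those nodes, and use $g<\tfrac{1}{r+1}$. The genuine gap is in the non-terminal case. The middle step of your chain, $|H_\mathbf{z}^{i_\mathbf{z}}|(1-a-g)+ma\ \ge\ m(1-a-g)+ma$, relies on $|H_\mathbf{z}^{i_\mathbf{z}}|\ge m$, i.e., that the plurality class is at least as large as the union of all the other classes. That holds for $k=2$ but is false for $k\ge 3$: if $H_\mathbf{z}$ is split evenly among three initial partitions of size $N$ each, then $|H_\mathbf{z}^{i_\mathbf{z}}|=N$ while $m=2N$. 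In that regime, and with $a$ small (say all $\bar{X}_v^{i_\mathbf{z}}\approx 0$), your lower bound only yields $r_\mathbf{z}\ge N(1-g)\approx \tfrac{m}{k-1}(1-g)$, which does not imply $m<\tfrac{r_\mathbf{z}}{1-g}$. Since Algorithm~\ref{alg:r_approx} and Theorem~\ref{thm:m2approx} are precisely the $k>2$ setting, this is not a removable corner case.

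The root cause is that lower-bounding a moved node's contribution $1-\bar{X}_v^{\ell_v}$ by the single coordinate $\bar{X}_v^{i_\mathbf{z}}\ge a$ discards too much when the LP mass is spread over several coordinates; replacing $|H_\mathbf{z}^{i_\mathbf{z}}|\ge m$ by the correct $|H_\mathbf{z}^{i_\mathbf{z}}|\ge m/(k-1)$ does not rescue the computation. The paper's case (ii) instead argues by contradiction with a finer, per-coordinate accounting: it orders the classes by size, isolates the two smallest classes $B_1,B_2$, writes $1-\bar{X}_v^{\ell_v}=\sum_{j\neq \ell_v}\bar{X}_v^j$ and compares every coordinate to a fixed reference node $u$ via Lemma~\ref{lem:rounding1}, then sums coordinate by coordinate using the hypothesis $|B_{-i}|\ge r_\mathbf{z}+\tfrac{r_\mathbf{z}}{r}$ for \emph{all} $i$ to derive $r_\mathbf{z}>r_\mathbf{z}$. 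Some argument of that multi-coordinate type is needed here. (A secondary remark: in the degenerate case $r_\mathbf{z}=0$ the strict inequality $m<r_\mathbf{z}+\tfrac{r_\mathbf{z}}{r}$ reads $0<0$, so your claim that ``$m=0$ works'' there is not literally right; the paper's own statement shares this wrinkle, so I would not weigh it heavily.)
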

\begin{proof} 
Consider a vector $\mathbf{z}\in A$ and its associated set of nodes $H_\mathbf{z}$. Let $B_i$ be the set of nodes $v\in H_\mathbf{z}$ with $\ell_v= i$, i.e., the set of nodes in $H_\mathbf{z}$ whose initial partition is $i$. Similarly, we define $B_{-i}$ to be the set of nodes $v\in H_\mathbf{z}$ with $\ell_v\neq i$, that is, the set of nodes in $H_\mathbf{z}$ whose initial partition is not $i$. To prove this lemma, we consider the following two cases: (i) there exists a terminal node $s_i\in H_\mathbf{z}$, and (ii) $H_\mathbf{z}$ contains no terminal node. 

In case (i), 
for a node $v\in H_\mathbf{z}$, it holds that $\tilde{X}_v^i = \tilde{X}_{s_i}^i$.
So, by Lemma \ref{lem:rounding1}, we have $\bar{X}_v^i>\bar{X}_{s_i}^i-g=1-g$. Using the fact that for any $v\in V(G)$ it holds that $\sum_{i\in\{1,\ldots,k\}} \bar{X}_v^i = 1$, we have 
$r_\mathbf{z} =  \sum_{v\in H_\mathbf{z}}(1-\bar{X}_v^{\ell_v}) \ge \sum_{v\in B_{-i}}(1-\bar{X}_v^{\ell_v}) \ge \sum_{v\in B_{-i}}\bar{X}_v^i> |B_{-i}|(1-g),$
providing that $|B_{-i}|< \frac{1}{1-g}r_\mathbf{z}$. Combining this with $g = \frac{k-1}{k(r+1)}<\frac{1}{1+r}$, we have $|B_{-i}|<r_\mathbf{z}+\frac{r_\mathbf{z}}{r}$.

In case (ii), $H_\mathbf{z}$ contains no terminal node. 
Without loss of generality, suppose that for all $2< i \le k$ we have $|B_1|\le |B_2|\le |B_i|$. Defining $B' = H_\mathbf{z}\setminus (B_1\cup B_2)$, we have 
    $|B_1|\le \frac{|B'|}{k-2}.$
For the sake of contradiction, suppose for all $i\in \{1,\ldots,k\}$ we have $|B_{-i}|=|H_\mathbf{z}\setminus B_i|\ge r_\mathbf{z}+\frac{r_\mathbf{z}}{r}$. In particular, $r_\mathbf{z}+\frac{r_\mathbf{z}}{r}\le |H_\mathbf{z}\setminus B_2| = |B'|+|B_1|\le |B'|+\frac{|B'|}{k-2}$. 
So $|B'|\ge \frac{k-2}{k-1}(r_\mathbf{z}+\frac{r_\mathbf{z}}{r})$. By Lemma~\ref{lem:rounding1}, for any two nodes $v,u\in H_\mathbf{z}$, we have $
|\bar{X}_v^{\ell_v}-\bar{X}_{u}^{\ell_v}|<g$; thus, $1-\bar{X}_v^{\ell_v}>1-\bar{X}_{u}^{\ell_v}-g$. 
If $\ell_v\notin \{1,2\}$, that is, $v\in B'$, then  $\sum_{i\in\{1,\ldots,k\},i\neq \ell_v}\bar{X}_v^i > g+\sum_{i\in \{1,2\}, i\neq \ell_v}(\bar{X}_{u}^i-g) + \sum_{i\in\{3,\ldots,k\},i\neq \ell_v}\bar{X}_{u}^i$. If $\ell_v \in \{1,2\}$, then we have $\sum_{i\in\{1,\ldots,k\},i\neq \ell_v}\bar{X}_v^i >\sum_{i\in \{1,2\}, i\neq \ell_v}(\bar{X}_{u}^i-g) + \sum_{i\in\{3,\ldots,k\},i\neq \ell_v}\bar{X}_{u}^i$. Therefore,
%
%
%
    $r_\mathbf{z} = \sum_{v\in H_\mathbf{z}} (1-{\bar{X}}_{v}^{\ell_v}) = \sum_{v\in H_\mathbf{z}} \sum_{i\in\{1,\ldots,k\},i\neq \ell_v} {\bar{X}}_v^i$ and so   $r_z>|B'|g+\sum_{v\in H_\mathbf{z}}\bigg[ \sum_{i\in \{1,2\}, i\neq \ell_v}(\bar{X}_{u}^i-g) \notag + \sum_{i\in\{3,\ldots,k\},i\neq \ell_v}\bar{X}_{u}^i\bigg]$
\begin{align}
    \notag r_z&> |B'|g+\sum_{i\in \{1,2\}} \sum_{v\in H_\mathbf{z}, \ell_v\neq i} (\bar{X}_{u}^i-g) \notag \\ &+ \sum_{i\in\{3,\ldots,k\}} \sum_{v\in H_\mathbf{z}, \ell_v\neq i} \bar{X}_{u}^i\notag \\
    &\ge \frac{k-2}{k-1}(r_\mathbf{z}+\frac{r_\mathbf{z}}{r})g+ \sum_{i\in \{1,2\}}(r_\mathbf{z}+\frac{r_\mathbf{z}}{r})(\bar{X}_{u}^i-g) \notag \\ &+ \sum_{i\in\{3,\ldots,k\}} (r_\mathbf{z}+\frac{r_\mathbf{z}}{r}) \bar{X}_{u}^i\label{eq:contradiction}\\
    & = (r_\mathbf{z}+\frac{r_\mathbf{z}}{r})(1-\frac{k}{k-1}g)= r_\mathbf{z}\notag,
\end{align}
where inequality~\eqref{eq:contradiction} uses the assumption that $|B_{-i}|\ge r_\mathbf{z}+\frac{r_\mathbf{z}}{r}$ for all $i\in \{1,\ldots,k\}$. The above calculations reaches a contradiction, $r_\mathbf{z} > r_\mathbf{z}$. Therefore, there exists an $i_\mathbf{z}\in \{1,\ldots, k\}$ such that the number of nodes $v\in H_\mathbf{z}$ with $\ell_v\neq i_\mathbf{z}$ is at most $r_\mathbf{z}+\frac{r_\mathbf{z}}{r}$, concluding the proof of the lemma. 
\end{proof}


\begin{proofof}{Theorem \ref{thm:m2approx}}
    Algorithm \ref{alg:r_approx} sets $g = \frac{k-1}{k(r+1)}$. Firstly, note that by Lemma \ref{lem:cut_probs} the probability that an edge $(u,v)$ is in the cut-set is at most $\frac{2k}{k-1}(r+1)d_{\bar{X}}(u,v)$, where $\bar{X}$ is an optimal solution to LP~\ref{tab:maxmove-lp}. Therefore, Algorithm \ref{alg:r_approx} returns a cut-set with cut value at most $\frac{2k}{k-1}(r+1)\OPT_r$, where $\OPT_r$ denotes the optimal value of LP~\ref{tab:maxmove-lp} with a move parameter of $r$. 
    
    Secondly, we prove that Algorithm \ref{alg:r_approx} moves at most $r$ nodes. To do this, we show that for each vector $\mathbf{z}\in A$, the number of nodes in $H_\mathbf{z}$ that are moved from their initial partitions is less than $r_\mathbf{z}+\frac{r_\mathbf{z}}{r}$. There are two possibilities to consider here. Case (i): There exists a terminal in $s_i \in H_\mathbf{z}$ (recall that by Lemma \ref{lem:max1terminal}, there can be at most one terminal in $H_\mathbf{z}$). In this case, Algorithm \ref{alg:r_approx} assigns all the nodes in $H_\mathbf{z}$ to partition $i$. Therefore, the number of nodes that are moved to partitions different from their original partitions is equal to the number of nodes $v\in H_\mathbf{z}$ with $\ell_v\neq i$. By Lemma \ref{lem:rmoves}, the number of such nodes is less than $r_\mathbf{z}+\frac{r_\mathbf{z}}{r}$. Case (ii): $H_\mathbf{z}$ has no terminal nodes. In this case, Algorithm \ref{alg:r_approx} assigns all the nodes to partition $i^*_\mathbf{z}\gets\arg \max_{j=1}^k |\{v\in H_\mathbf{z}|\ell_v=j\}|$. Defining $B_i$ as the set of nodes $v\in H_\mathbf{z}$ with $\ell_v = i$, the number of nodes that are moved to  partitions different from their original ones is equal to $|H_\mathbf{z}|-|B_{i^*_\mathbf{z}}|$. By Lemma \ref{lem:rmoves}, there exists a partition $i$ for which $|H_\mathbf{z}|-|B_i|<r_\mathbf{z}+\frac{r_\mathbf{z}}{r}$. By definition of $i^*_\mathbf{z}$, we have $|B_i|\le |B_{i^*_\mathbf{z}}|$; therefore, $|H_\mathbf{z}|-|B_{i^*_\mathbf{z}}|<r_\mathbf{z}+\frac{r_\mathbf{z}}{r}$. 

    The total number of nodes that are moved by Algorithm~\ref{alg:r_approx} in cases (i) and (ii) is strictly less than $\sum_{\mathbf{z}\in A} r_\mathbf{z}+\frac{r_\mathbf{z}}{r}$. Using the definition of $r_\mathbf{z}$ and constraint (C7) of LP~\ref{tab:maxmove-lp}, we have $\sum_{\mathbf{z}\in A} r_\mathbf{z}+\frac{r_\mathbf{z}}{r} = (\frac{r+1}{r})\sum_{\mathbf{z}\in A} r_\mathbf{z} = (\frac{r+1}{r})\sum_{\mathbf{z}\in A} \sum_{v\in H_\mathbf{z}} (1-\bar{X}^{\ell_v}_v) = (\frac{r+1}{r}) \sum_{v\in V(G)} (1-\bar{X}^{\ell_v}_v) \le r+1$. This shows that the total number of nodes that Algorithm~\ref{alg:r_approx} moves is strictly less than $r+1$, in other words, at most $r$ nodes are moved to partitions different from their original ones by the algorithm. 
\end{proofof}

\subsection{Run Time and De-randomization}\label{sec:derandom}
For computing the run time, note that finding the solution to LP\ref{tab:maxmove-lp} takes $T_{LP}(n,m)$ time. The rest of the algorithm consists of simple loops and takes $O(mk)$ time.

One can de-randomize the algorithm using the technique of Calinescu, Karloff and Rabani \cite{LP_paper}. They de-randomize their rounding for their $1.5-1/k$ approximation algorithm for the multiway-cut problem (at the expense of polynomial increase in running time), and they argue that there are $O(n)$ “interesting” values of $\rho$ and hence one only needs to run the rounding algorithm for those values. To see this in our algorithm at a high level, note that for each $i$, there is a value $\rho_i$ where for all $\rho<\rho_i$, $\tilde{X}_v^i$ is $g\floor{\frac{\bar{X}_v^i}{g}}$ and for $\rho\ge \rho_i$, we have $\tilde{X}_v^i = g\floor{\frac{\bar{X}_v^i}{g}+1}$. The value of $\rho_i$ can be computed from $\bar{X}_v^i$. So one can only run the algorithm for these values of $\rho_i$. For the running time, it increases by a factor of at most $n$, however one could decrease this factor by sorting the values of $\rho$. 





\section{Constant-Factor Bicriteria Approximation Algorithm}\label{sec:bich}

To develop our constant-factor bicriteria approximation algorithm, we use some of the techniques used in~\cite{LP_paper}. We first provide a detailed review on the rounding LP~\ref{tab:paper-lp} technique of \cite{LP_paper}.
\subsection{Review of the Rounding Technique of LP \ref{tab:paper-lp}}\label{subsucsec:rounding} 
In this section, we explain the $1.5$-approximation rounding of \cite{LP_paper}. 
Let an $n$-node graph $G$ be the instance of the Multiway cut problem and $X$ be a solution to the Multiway cut CKR LP (LP \ref{tab:paper-lp}). 
We first add some nodes to graph $G$ via edge subdivision and then extend solution $X$ to the added nodes such that for each edge $(u,v)$ in the new graph, $X_u$ and $X_v$ differ in at most two entries. This edge subdivision mechanism is discussed below and its steps are shown in Algorithm~\ref{alg:edgesubdiv}.

\begin{table*}[hht]
    \centering
    \begin{tabular}{lllr}
    \toprule
        \text{Minimize:} & $\sum_{(u,v)\in E}c_{uv}d(u,v)$ & &\\
        \text{Subject to:} & & &\\
        & $d(u,v)= \frac{1}{2}\sum_{i=1}^k Y^i_{uv}$  & $\forall  (u,v)\in E$, $\forall i\in\{1\cdots k\}$  & (C1)\\
        & $Y^i_{uv}\ge X^i_u-X^i_v$ & $\forall (u,v)\in E$, $\forall i\in\{1,\ldots, k\}$& (C2)\\
        & $Y^i_{uv}\ge X^i_v-X^i_u$ & $\forall (u,v)\in E$, $\forall i\in\{1,\ldots, k\}$& (C3)\\
        & $\sum_{i=1}^{k}X^i_u=1$ & $\forall u\in V$, $\forall i\in\{1,\ldots, k\}$ & (C4)\\
        & $X_{s_i}=\mathbf{e}_i$ & $\forall s_i\in S$  & (C5)\\
        & $X_v^i \ge 0$ & $\forall v\in V$, $\forall i\in \{1,\ldots,k\}$ & (C6)\\
        \bottomrule
    \end{tabular}
    \caption{Multiway cut CKR LP (LP~\ref{tab:paper-lp})}
    \vspace{-0.3cm}
    \label{tab:paper-lp}
\end{table*}

Consider two nodes $u$ and $v$, where $X_u$ and $X_v$ differ in more than two entries. Subdivide the edge $(u,v)$ using a new node $w$ and let the weight of edges $(u,w)$ and $(w,v)$ be the same as the weight of the deleted edge $(u,v)$. Define $X_w$ as follows: Let $i$ be the entry that holds the minimum non-zero value among $|X_u^t-X_v^t|$ for all entries $t\in\{1,\ldots,k\}$. Without loss of generality, assume that $X_u^i<X_v^i$ and let $\alpha = X_v^i-X_u^i$. Since $\sum_{t=1}^k X_u^t= \sum_{t=1}^k X_v^t =1$, there exists an entry $j\neq i$ for which $X_u^j-X_v^j \ge \alpha$. We define $X_w^i = X_u^i+\alpha=X_v^i$, $X_w^j = X_u^j-\alpha$. Moreover, for any entry $t\neq i,j$, $X_w^t$ is defined as $X_w^t = X_u^t$; therefore, nodes $u$ and $w$ differ in only two entries. Observe that $d(u,v)=d(u,w)+d(w,v)$. Suppose $u$ and $v$ are different in $a$ number of entries, then $v$ and $w$ are different in $a-1$ number of entries. Graph $G$ has $n$ nodes and by going through the explained procedure the number of different entries between any two nodes decreases by one. Each two nodes in graph $G$ can have at most $k$ different entries. Therefore, after $O(nk)$ steps this procedure ends. The resulting graph by the end of this procedure is denoted by $G^*=(V^*,E^*)$ and the vectors of endpoints of each edge of it are different in at most two entries, that is, for any edge $(u,v)\in E^*$, $X_u$ and $X_v$ differ in at most two entries. Note that since $d(u,v)=d(u,w)+d(w,v)$ and $c_{uv}=c_{wu}=c_{wv}$, the values of optimal solutions of LP~\ref{tab:paper-lp} on graphs $G$ and $G^*$ are the the same.

Using Algorithm~\ref{alg:edgesubdiv}, without loss of generality, it can be assumed that for any edge $(u,v)$ of an instance of the Multiway cut problem $G$, $\bar{X}_u$ and $\bar{X}_v$ differ in at most two entries, where $\bar{X}$ is an optimal solution to LP \ref{tab:paper-lp} on graph $G$. Let $\rho$ be a parameter chosen uniformly at random from [0,1]. Let $B(\rho,i)$ be the set of nodes $v\in V(G)$ for which $\bar{X}_v^{i}>\rho$. 
Let $\sigma=(\sigma(1),\ldots,\sigma(k))$ be either $(1,2,\ldots,k-1,k)$ or $(k-1,k-2,\ldots,1,k)$ with equal probabilities $1/2$. 
The partitions are processed in the order $\sigma(1),\sigma(2),\ldots, \sigma(k)$. In the first step, all nodes in $B(\rho,\sigma(1))$ are placed in partition $\sigma(1)$. Note that for any $i\neq j$, there might be some nodes that are in both $B(\rho,i)$ and $B(\rho,j)$. For each $j$ from $2$ to $k-1$, the remaining nodes in $B(\rho, \sigma(j))$ are placed in partition $\sigma(j)$, that is, all the nodes in $B(\rho, \sigma(j))\setminus \cup_{i=1}^{j-1} B(\rho, \sigma(i))$. All the remaining nodes  are then placed in partition $k$. Let $C$ be the cut-set made by the partitions produced by this rounding process. For an edge $(u,v)\in E(G)$, we have $\Pr((u,v) \in C)$ is the probability that edge $(u,v)$ is in cut-set $C$.

\begin{lemma}
[\cite{LP_paper}] For any edge $(u,v)\in E(G)$, we have $\Pr((u,v)\in C)\le 1.5\times d(u,v)$. 
\end{lemma}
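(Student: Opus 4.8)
The plan is to bound $\Pr((u,v) \in C)$ for a fixed edge $(u,v) \in E(G)$ by conditioning on the random ordering $\sigma$ and integrating over $\rho \in [0,1]$. By the edge subdivision preprocessing (Algorithm~\ref{alg:edgesubdiv}), I may assume $\bar{X}_u$ and $\bar{X}_v$ differ in at most two coordinates, say coordinates $p$ and $q$ (if they differ in fewer, the bound is only easier). The edge $(u,v)$ lands in the cut iff $u$ and $v$ are assigned to different partitions, which can only happen because of the two coordinates where they disagree: $u$ and $v$ get ``captured'' at different stages of the sweep. I would first write $d(u,v) = \frac{1}{2}(|\bar{X}_u^p - \bar{X}_v^p| + |\bar{X}_u^q - \bar{X}_v^q|)$ and, since $\sum_i \bar{X}_u^i = \sum_i \bar{X}_v^i = 1$ forces $(\bar{X}_u^p - \bar{X}_v^p) + (\bar{X}_u^q - \bar{X}_v^q) = 0$, conclude $|\bar{X}_u^p - \bar{X}_v^p| = |\bar{X}_u^q - \bar{X}_v^q| =: \delta$, so $d(u,v) = \delta$.

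**The core case analysis.**

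Next I would analyze when the edge is cut. WLOG $\bar{X}_u^p > \bar{X}_v^p$ and $\bar{X}_u^q < \bar{X}_v^q$, with the gap $\delta$. Condition on $\sigma$: let $p$ appear before $q$ in the order $\sigma$ (the other case is symmetric, handled by the other choice of $\sigma$). The edge is cut in essentially two ``bad'' situations: (i) $\rho$ falls strictly between $\bar{X}_v^p$ and $\bar{X}_u^p$, so that $u \in B(\rho,p)$ but $v \notin B(\rho,p)$ and $u$ gets grabbed into partition $p$ while $v$ does not — this has probability exactly $\delta$; (ii) $\rho$ is below both $\bar{X}_u^p$ and $\bar{X}_v^p$, so both are eligible for partition $p$ and neither is cut there, OR $\rho$ is above both, so neither is grabbed at stage $p$, and then the separation must happen later, at coordinate $q$ — but because $p$ precedes $q$, by the time we reach stage $q$ the node with the larger $p$-coordinate may already be assigned. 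The key point, exactly as in~\cite{LP_paper}, is that for the ordering in which $p$ comes first, coordinate $q$ can only cause a cut when $\rho$ lies between $\bar X_u^q$ and $\bar X_v^q$ \emph{and} neither node was captured earlier; this event has probability at most $\delta$, but it overlaps with the ``$\rho$ above both $\bar X^p$ values'' region, so one must be careful not to double count. Summing the contributions and using that exactly one of the two orderings of $\{p,q\}$ occurs with probability $\frac12$ each, the probability that coordinate $q$ contributes gets halved, yielding the bound $\delta + \frac12 \delta = \frac32 \delta = 1.5\, d(u,v)$.

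**Main obstacle.**

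The delicate part is the bookkeeping of overlapping $\rho$-intervals so that the final constant is exactly $1.5$ rather than $2$: one has to argue that when $\rho$ is in the region where coordinate $p$ does \emph{not} separate $u$ and $v$ (because $\rho$ exceeds both $p$-values), the separation at coordinate $q$ occurs only for the ordering in which $q$ precedes $p$, which has probability $\frac12$, so this $\delta$-length event contributes only $\frac12 \delta$. Meanwhile the region where $\rho$ separates at $p$ contributes the full $\delta$ regardless of ordering. Combining gives $\frac32 d(u,v)$. I would organize this as: (1) reduce to two disagreement coordinates with equal gap $\delta$; (2) partition the range of $\rho$ into the ``$p$-separating'' interval (length $\delta$) and its complement; (3) on the complement, condition on $\sigma$ and show the cut probability contribution is $\le \frac12 \delta$; (4) add up. The cleanest writeup simply mirrors the original CKR argument, and I would cite it for the routine interval computations while making sure the move-constraint (C7) plays no role here — the lemma is purely about LP~\ref{tab:paper-lp} and its rounding, identical to the CKR setting.
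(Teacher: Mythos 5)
Your overall plan is the same as the paper's (both reproduce the CKR rounding argument): reduce to two disagreement coordinates $p,q$ with equal gap $\delta=d(u,v)$, charge the full probability $\delta$ to the interval where $\rho$ separates the endpoints at coordinate $p$, and argue that the remaining dangerous region (where $\rho$ separates them at $q$ but not at $p$) produces a cut under only one of the two orderings, contributing $\tfrac12\delta$, for a total of $\tfrac32 d(u,v)$. The decomposition and the final arithmetic match the paper's proof exactly.

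However, the justification you give for the crucial $\tfrac12$ factor is wrong as stated. You claim the separation at coordinate $q$ ``occurs only for the ordering in which $q$ precedes $p$'' \emph{because} ``$\rho$ exceeds both $p$-values.'' If $\rho$ exceeded both $\bar{X}_u^p$ and $\bar{X}_v^p$, then neither endpoint would be captured at stage $p$ regardless of when $p$ is processed, so the ordering of $p$ and $q$ would be irrelevant there and you would get no saving at all. The actual mechanism, which your write-up is missing, requires first normalizing so that $\bar{X}_u^p$ is the largest of the four values $\bar{X}_u^p,\bar{X}_v^p,\bar{X}_u^q,\bar{X}_v^q$ (hence $\bar{X}_u^q$ is the smallest, and the $q$-separating interval minus the $p$-separating interval lies \emph{below} the latter). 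With that normalization, for $\rho$ in the dangerous region both endpoints satisfy $\bar{X}_u^p,\bar{X}_v^p>\rho$, i.e., both are eligible for partition $p$; if $p$ is processed first they are \emph{both} absorbed into partition $p$ and the edge survives, so only the ordering with $q$ first can cut the edge. Moreover, the configuration you invoke ($\rho$ above both $p$-values yet strictly between the two $q$-values) is empty under this normalization, since every point of the $q$-interval is at most $\bar{X}_v^q\le\bar{X}_u^p$. Without pinning down which coordinate carries the overall maximum, the case analysis does not close, so the key step as you justify it would fail even though the skeleton and the final bound are right.
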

\begin{proof}
Suppose that $\bar{X}_u$ and $\bar{X}_v$ differ in entries $i$ and $j$. Without loss of generality, assume that $\bar{X}_u^i\ge \bar{X}_v^i,\bar{X}_u^j,\bar{X}_v^j$. Since $\sum_{t=1}^k \bar{X}_u^t=\sum_{t=1}^k \bar{X}_v^t=1$, we must have that either $\bar{X}_u^i\ge \bar{X}_v^j\ge \bar{X}_v^i\ge \bar{X}_u^j$ or $\bar{X}_u^i\ge \bar{X}_v^i\ge \bar{X}_v^j\ge \bar{X}_u^j$, thus, $\bar{X}_u^j\le \bar{X}_u^i,\bar{X}_v^i,\bar{X}_v^j$.

In order for edge $(u,v)$ to be in cut-set $C$, one of its endpoints must belong in partition $i$ and the other in partition $j$, not any other partitions. This is because for any $\ell\neq i,j$, $\bar{X}_v^\ell=\bar{X}_u^\ell$ and for the chosen $\rho$, either both $u$ and $v$ are in $B(\rho,\ell)$ or neither are in it. Let $L_i$ (and similarly $L_j$) be defined as the interval between values $\bar{X}_v^i$ and $\bar{X}_u^i$, that is,  $L_i=(\bar{X}_v^i,\bar{X}_u^i)$ and $L_j=(\bar{X}_u^j,\bar{X}_v^j)$. Edge $(u,v)$ is in cut-set $C$ only if either $\rho\in L_i$ or $\rho \in L_j$. By constraint (C4) of LP~\ref{tab:paper-lp}, both intervals $L_i$ and $L_j $ are of the same length. However, $L_i$ and $L_j $ might be overlapping; thus, $|L_i\cup L_j|\le 2|L_i| = 2d_{\bar{X}}(u,v)$, where $d_{\bar{X}}(u,v)=\frac{1}{2}\big(|\bar{X}_u^i-\bar{X}_v^i|+|\bar{X}_u^j-\bar{X}_v^j|\big)$.

With equal probabilities, the aforementioned partitioning method processes partition $j$ before $i$. We claim that $(u,v)$ is going to be in $C$ if $\rho\in L_i\cup L_j$. To see that, consider the following three scenarios: (i) If partition $j$ is processed before partition $i$ and $\rho\in L_j$, then $v$ goes to partition $j$ but $u$ does not go to partition $j$. (ii) If partition $j$ is processed before partition $i$ and $\rho\in L_i\setminus L_j$, then we have that $\bar{X}_v^j,\bar{X}_u^j<\rho$, so $u$ and $v$ are not assigned to partition $j$. In this case, when partition $i$ is processed $u$ is assigned to $i$ and $v$ is not. (iii) If $i$ is processed before $j$ and $\rho\in L_i$, then $u$ is assigned to $i$ and $v$ is not. Note that if $\rho\in L_j\setminus L_i$, both $u$ and $v$ are assigned to $i$. This is because $\rho<\bar{X}_u^i,\bar{X}_v^i$ since $\rho\notin L_i$ and all values in $L_j\setminus L_i$ are less than the values in $L_i$. Using the union bound, the probability that edge $(u,v)$ is in cut-set $C$ is as follows:
\begin{align*}
    \Pr\big((u,v)\in C\big) &\le \frac{1}{2}\Pr(\rho\in L_i\cup L_j) + \frac{1}{2}\Pr(\rho\in L_i) \\&\le \frac{1}{2}\times 2d_{\bar{X}}(u,v)+\frac{1}{2}d_{\bar{X}}(u,v)\\&\le \frac{3}{2}d_{\bar{X}}(u,v);
\end{align*}
in other words, the cut value is $1.5$ times the objective value of the optimal solution to LP \ref{tab:paper-lp}, in expectation.   
\end{proof}


\subsection{Bicriteria Algorithm}
We describe our bicriteria algorithm in the next theorem.
%

\fourmove*

\begin{algorithm}[h]
\caption{Edge subdivision algorithm \cite{LP_paper}}\label{alg:edgesubdiv}
\begin{algorithmic}[1]
    \Require Input $X$ which is a solution to the  LP~\ref{tab:paper-lp}.
    \State $G^*\gets G$.
    \While{there is an edge $(u,v)\in E(G^*)$ such that $X_u$ and $X_v$ differ in more than two entries}
        \State Subdivide edge $(u,v)$ with new a node $w\in G^*$.
        \State $c_{uw}\gets c_{uv}$, $c_{vw}\gets c_{uv}$.
        \State $i\gets \arg \min_{t\in\{1,\ldots,k\}, |X_u^t-X_v^t| \neq 0} |X_u^t-X_v^t|$. 
        \If{$X_u^i>X_v^i$}
            \State Swap $u$ and $v$.
        \EndIf
        \State $\alpha \gets X_v^i-X_u^i$. Let $j$ be an entry where $X_u^j-X_v^j\ge \alpha$.
        \State $X_w^i\gets X_u^i+\alpha$, $X_w^j\gets X_u^j-\alpha$, $X_w^t=X_u^t$ for all $t\neq i,j$.
    \EndWhile
    \State \Return $G^*$ and the extension of $X$ on it.
\end{algorithmic}
\end{algorithm}

\begin{algorithm}[h]
\caption{Bicriteria approximation algorithm
}\label{alg:4Mmove}
\begin{algorithmic}[1]
    \Require Input graph $G$ with an initial partitioning, parameter $\gamma>\frac{1}{2}$ and move parameter $r$.  
    \State $\bar{X}\gets$ an optimal solution to the LP~\ref{tab:maxmove-lp}.
    
    \State Form graph $G^*$ as a super graph of $G$ and extend $\bar{X}$ on $G^*$ by Algorithm \ref{alg:edgesubdiv}. 
    \State Choose $\lambda$ uniformly at random from $[\frac{\gamma+1}{3}, \gamma]$. 
    \State Choose $\rho$ uniformly at random from $[0, \lambda]$. 
    \State $S= \emptyset$
    \For{all nodes $v\in G^*$}
        \If{there exists $i\in \{1,\ldots,k\}$ such that $\bar{X}_v^i\ge \lambda$ }
            \State $Z_v\gets \mathbf{e}_i$.
            \State $S= S\cup \{v\}$.
        \EndIf
    \EndFor
    \State $B(\rho,i)\gets$ the set of nodes $v$ not in $S$ such that $\bar{X}_v^i>\rho$, for each $i\in \{1,\ldots,k\}$. 
    \State Choose $\sigma$ to be one of the permutations $(1,2,\ldots, k-1,k)$ and $(k-1,\ldots,1,k)$ with equal probabilities.
    \For{each $j< k$}
        \For{all nodes $v\notin S$ that are in $B(\rho, \sigma(j))\setminus \cup_{i=1}^{j-1} B(\rho,\sigma(i))$}
            \State $Z_v \gets \mathbf{e}_{\sigma(j)}$.
        \EndFor
    \EndFor
    \For{any node $v$ not assigned to a partition}
        \State $Z_v \gets \mathbf{e}_k$.
    \EndFor
    \State \Return $Z$.
\end{algorithmic}
\end{algorithm}
\begin{proof}
The first step of the bicriteria algorithm is to solve LP \ref{tab:maxmove-lp}. Let $\bar{X}$ be an optimal solution to this LP. 
Similar to the rounding of LP \ref{tab:paper-lp} by \cite{LP_paper}, we can form a graph $G^*$ and extend the solution $\bar{X}$ on it, where $G^*$ is formed by subdividing the edges of $G$ such that for every two nodes $u,v$ that are adjacent, $\bar{X}_u$ and $\bar{X}_v$ differ in at most two entries. Note that the values of the optimal cuts in graphs $G$ and $G^*$ are the same. The subdivision process is depicted in Algorithm \ref{alg:edgesubdiv}. Remark that the move constraint of at most $r$ nodes is only on the original nodes, i.e., nodes of graph $G$, and not those added to graph $G^*$ by the edge subdivision algorithm.

Let $Z$ denote the integer solution for the \maxmove~problem, which will be defined throughout the proof.
Let $\lambda$ be a parameter chosen uniformly at random from $[\frac{\gamma+1}{3},\gamma]$. For any node $v\in V(G^*)$, if there is an entry $i$ such that $\bar{X}_v^{i}\ge \lambda$, then let $Z_v=\mathbf{e}_i$, see Algorithm \ref{alg:4Mmove}. In other words, if $\bar{X}_v^{i}\ge \lambda$, then node $v$ gets assigned to partition $i$. Observe that since $\gamma> \frac{1}{2}$ then $\lambda\ge \frac{\gamma+1}{3}> 1/2$ and $\sum_{t\in\{1,\ldots,k\}} \bar{X}_v^{t}=1$. So it is not possible for a node $v$ to have more than one entry $i$ for which $\bar{X}_v^{i}\ge \lambda$.

Let $A=\{v\in V(G)|\bar{X}_v^{\ell_v}\ge\lambda\}$ be the set of nodes that get assigned to their initial partitions, i.e., $Z_v=\mathbf{e}_v$. By Lemma \ref{lem:size}, $|V(G)\setminus A|\le \frac{r}{1-\lambda}\le \frac{r}{1-\gamma}$. The above procedure assigns every node $v$ in $A$ to $\ell_v$, so no matter how we continue the rounding we are guaranteed to move no more than a total of $4r$ nodes from their initial partitions. 

Let $S=\{v\in V(G^*)|\exists i\in\{1,\ldots,k\}, \bar{X}_v^{i}\ge\lambda\}$ be the set of all the nodes that are processed by steps 7-10 of Algorithm \ref{alg:4Mmove}, and let $S'=V(G^*)\setminus S$ be the rest of the nodes. 
Random variable $\rho$ is then chosen uniformly at random from interval $[0,\lambda]$. 
Let $B(\rho,i)$ be the set of nodes $v\in S'$ with $\bar{X}_v^i>\rho$. Let $\sigma=(\sigma(1),\sigma(2),\ldots,\sigma(k))$ be one of the two permutations $(1,2,\ldots, k-1,k)$ and $(k-1,\ldots,1,k)$ chosen with equal probabilities at random. Nodes are assigned to partitions in the order of $\sigma$. In particular, for each $j< k$, all the nodes in $B(\rho, \sigma(j))\setminus \cup_{i=1}^{j-1} B(\rho,\sigma(i))$ are placed in partition $\sigma(j)$ and, finally, the remaining nodes are placed in partition $\sigma(k)=k$. Processing each node takes constant time, so the rounding procedure takes linear time.

Let $C$ be the cut-set created by this rounding scheme. Here, we bound the value of $C$. Consider an edge $(u,v)\in E(G^*)$. We provide an upper bound on the probability that edge $(u,v)$ is in cut-set $C$, i.e., $\Pr\big((u,v)\in C\big)$. First, note that if $d_{\bar{X}}(u,v)\ge \frac{2\gamma-1}{3}$, then $\Pr\big((u,v)\in C\big)\le 1\le \frac{3}{2\gamma-1}d_{\bar{X}}(u,v)< \frac{5}{2\gamma-1}d_{\bar{X}}(u,v)$. Therefore, we assume that $d_{\bar{X}}(u,v)< \frac{2\gamma-1}{3}$. Under this assumption, we have three cases for edge $(u,v)$:
\begin{itemize}
\item Case 1: Nodes $u,v\in S$. In this case, we compute $\Pr\big((u,v)\in C|u,v\in S\big)$. Let $i,j$ be the entries for which  $\bar{X}_v^i\ge\lambda$ and $\bar{X}_u^j\ge\lambda$. If $i=j$, then $(u,v)\notin C$. If $i\neq j$, then since $\bar{X}_v^j,\bar{X}_u^i\le 1-\lambda$, we have $d_{\bar{X}}(u,v)\ge \frac{1}{2}\big(|\bar{X}_v^i-\bar{X}_u^i|+|\bar{X}_v^j-\bar{X}_u^j|\big)\ge 2\lambda-1\ge \frac{2\gamma-1}{3}$. This is a contradiction to our assumption of $d_{\bar{X}}(u,v)< \frac{2\gamma-1}{3}$; thus, in this case $(u,v)$ is never in cut-set $C$. 

\item Case 2: Node $u\in S$ and node $v\in S'$. In this case, we compute 
$\Pr\big(u\in S, v\in S'\big)$. By constraint (C4) of LP~\ref{tab:maxmove-lp}, for a node $w$ there can at most be one entry $i$ such that $\bar{X}_w^i\ge \frac{7}{12}$. Let $j$ be such an entry for node $u$. Since $u\in S$ and $v\in S'$, We must have $\bar{X}_u^j\ge \lambda\ge \max(\bar{X}_v^j,\frac{7}{12})$. The probability that such $\lambda$ is chosen is at most $d_{\bar{X}}(u,v)/(\gamma - \frac{\gamma+1}{3}) = \frac{3}{2\gamma-1}d_{\bar{X}}(u,v)$, concluding that $\Pr(u\in S, v\in S') \le \frac{3}{2\gamma-1}d_{\bar{X}}(u,v)$. 


\item Case 3: Nodes $u,v\in S'$. In this case, we compute $\Pr\big((u,v)\in C|u,v\in S'\big)$. This case is very similar to the argument provided in \cite{LP_paper}, which we mention here for the sake of completeness. Suppose that $\bar{X}_u$ and $\bar{X}_v$ only differ in $i$ and $j$ entries. Without loss of generality, suppose that $\bar{X}_u^i=\max(\bar{X}_u^i,\bar{X}_v^i,\bar{X}_u^j,\bar{X}_v^j)$. Subsequently, by constraint (C4) of LP~\ref{tab:maxmove-lp} we have $\bar{X}_u^j=\min(\bar{X}_u^i,\bar{X}_v^i,\bar{X}_u^j,\bar{X}_v^j)$. Let $L_i=(\bar{X}_v^i,\bar{X}_u^i)$ and $L_j=(\bar{X}_u^j,\bar{X}_v^j)$. Note that $|L_i|=|L_j|=d_{\bar{X}}(u,v)$. 
In order for $(u,v)$ to be in cut-set $C$, we must have $\rho\in L_i\cup L_j$. Due to step 13 of Algorithm~\ref{alg:4Mmove}, an entry $1\le i<k$ is processed before an entry $1\le j<k$ with probability 1/2. Suppose that entry $j$ is processed before $i$. If $\rho\in L_j$, then node $v$ goes to partition $j$ and $u$ does not. If $\rho\in L_i\setminus L_j$, then we have that $\bar{X}_v^j,\bar{X}_u^j<\rho$. Therefore, $u$ and $v$ are not assigned to partition $j$, and when entry $i$ is processed, node $u$ is assigned to partition $i$ and $v$ is not. So, when entry $j$ is processed before entry $i$, edge $(u,v)$ is in cut-set $C$ if $\rho\in L_i\cup L_j$. If entry $i$ is processed before $j$ and $\rho\in L_i$, then $u$ is assigned to partition $i$ and $v$ is not. If $\rho\in L_j\setminus L_i$, 
then both $u$ and $v$ are assigned to partition $i$. Therefore, in total, $\Pr\big((u,v)\in C|u,v\in S'\big)\le \frac{1}{2}\Pr(\rho\in L_i\cup L_j)+\frac{1}{2}\Pr(\rho\in L_i)\le \frac{3}{4\lambda}d_{\bar{X}}(u,v)\le \frac{9}{4(\gamma+1)}d_{\bar{X}}(u,v)\le \frac{2}{2\gamma-1}$. 
\end{itemize}
Using these three cases we can quantify the probability that an edge $(u,v)$ lies in cut-set $C$ as follows. Note that we are conditioning on $d_{\bar{X}}(u,v)<\frac{2\gamma-1}{3}$ and so Case 1 does not take place:
\begin{align*}
    \Pr\big((u,v)\in C\big) &= \Pr\big((u,v)\in C|u,v\in S\big)\Pr(u,v\in S)\\
    &+\Pr\big((u,v)\in C|u\in S,v\in S'\big)\Pr(u\in S,v\in S')\\
    &+\Pr\big((u,v)\in C|u,v\in S'\big)\Pr(u,v\in S')\\
    &\le \Pr\big((u,v)\in C|u\in S,v\in S'\big)\frac{3}{2\gamma-1}d_{\bar{X}}(u,v) + \frac{2}{2\gamma-1}d_{\bar{X}}(u,v)\Pr(u,v\in S') \\
    &\le \frac{5}{2\gamma-1}d_{\bar{X}}(u,v).
\end{align*}
\end{proof}

\section{Approximation Algorithm for $k=2$}
\label{subsec:2partitioning}
In this section, we describe our approximation algorithm for the $r$-move $2$-partitioning problem. Since there are two partitions in this problem, there only exists two terminal nodes. We refer to these terminal nodes as $s$ and $t$ to be compliant with the 2-partitioning and the min $s$-$t$ cut problems literature. We first give a high level comparison of our algorithm and that of \cite{zhang2016new}, who studied a related problem called the Min $r$-size $s$-$t$ cut problem and from whom we borrow some ideas for our approximation method. Section~\ref{sec:intro} provides a  formal definition of the Min $r$-size $s$-$t$ cut problem and the distinctions between this problem and the $r$-move $2$-partitioning problem.

\paragraph{Main ideas of \cite{zhang2016new}:} 
Zhang \cite{zhang2016new} uses a problem called the \textit{parametric max flow} problem introduced in \cite{gallo1989fast}. The parametric max flow problem is a generalization of the $s$-$t$ max flow problem where we think of capacities/edge weights as parameters. More formally, the weight of an edge $(u,v)$ can be parameterized by a parameter $\alpha$, such that the weights of the edges connected to terminal $s$, i.e., $c_\alpha(s,v)$, are non-decreasing functions of $\alpha$ and the the weights of the edges connected to terminal $t$, i.e., $c_\alpha(v,t)$, are non-increasing in $\alpha$. Finally, the weights of the edges that are connected to neither of the terminals, i.e., $c_\alpha(u,v)$  for any node $u\neq s,v\neq t$, are constant in terms of $\alpha$. 

Zhang \cite{zhang2016new} creates the following graphs $G^\alpha$ for a parameter $\alpha \ge 0$: Let the weight of all edges $(v,t)$ be $\alpha$, for all nodes $v\neq s$. Clearly this fits into the parametric flow definition of \cite{gallo1989fast}, discussed above. When $\alpha=0$, the min $s$-$t$ cut of $G^\alpha$ is the same as the min $s$-$t$ cut of $G$. Let $S_0$ be the set of nodes in the $s$-side of the min $s$-$t$ cut of $G$. As $\alpha$ gets larger, they show that the $s$-side of the min $s$-$t$ cut in $G^\alpha$ contains fewer and fewer nodes until eventually for some large $\alpha$ the $s$-side contains only $s$. Zhang \cite{zhang2016new} argues that by considering values of $\alpha$ from $0$ to some large number, there are $h\le n-1$ distinct sets $S_0,\ldots, S_h=\{s\}$ corresponding to the $s$-sides of the min $s$-$t$ cuts of $G^{\alpha}$ graphs, such that $|S_i|> |S_{i+1}|$ for all $i=0,\ldots,h-1$. Gallo et al. \cite{gallo1989fast} show that one can compute the sets $S_i$ in $O(mn\log(n^2/m))$ time. Zhang \cite{zhang2016new} proceeds by proving a few properties for $G^\alpha$ graphs and using the linear program formulation of the \stcutatmost~problem they give a $\frac{k+1}{k+1-k^*}$-approximation algorithm for it. 

We can approach the \maxmove[2] problem using similar $G^\alpha$ graphs. More specifically, a graph $G^\alpha$ is produced by adding edges $(s,v)$ and $(v,t)$ with weights $\alpha$ to the original graph $G$, for each $v\neq s,t$. These graphs do not fit into the definition of parametric max flow problem as $c_{\alpha}(s,v)=\alpha$ and $c_{\alpha}(v,t)=\alpha$ both grow in the same direction and we cannot use the results of \cite{gallo1989fast}. Nevertheless, we are able to produce the same results as~\cite{zhang2016new} for our setting, and essentially show that the approach of Zhang can be generalized to work for the \maxmove[2] problem. Assuming that $r^*\le r$ is the optimal number of nodes moved in \maxmove[2], we first give a simple $(r^*+1)$-approximation algorithm for the \maxmove[2]~problem without using linear programming. In Subsection~\ref{sec:completecase}, we include a more sophisticated $\frac{r+1}{r+1-r^*}$-approximation algorithm for the \maxmove[2]~problem. 

\subsection{An $(r^*+1)$-Approximation Algorithm}
Let $G$ be an instance of the \maxmove[2]~problem. We first define additional graphs constructed based on $G$ and continue by stateing a few observations about them. Suppose that in the initial partitioning, terminal $s$ is in cluster $1$ and terminal $t$ is in cluster $2$. For any $\alpha\ge 0$, we construct graph $G^\alpha$ as follows: Take graph $G$ and add edges from terminal $s$ to any node $v$ in cluster $1$ with weight $\alpha$. If there already exists an $(s,v)$ edge in $G$, then simply add $\alpha$ to the weight of this edge. Similarly, add edges from any node $v$ in cluster $2$ to terminal $t$ with weight $\alpha$, and if the edge $(v,t)$ exists in $G$, then add $\alpha$ to its weight\footnote{Note that the $\alpha_i$ defined here is different from the $\alpha_i$ used in \cite{zhang2016new}.}. 

Let the value of the initial cut of $G$ be $c_\infty$. Note that if we set $\alpha\ge 2c_\infty$, then any cut other than the initial cut will have value at least $2c_\infty$, so the min $s$-$t$ cut for $G^{2c_\infty}$ is the initial cut. If we let $\alpha=0$, then the min $s$-$t$ cut for $G^0$ is the min $s$-$t$ cut for $G$. 

Any cut in graph $G$ 
 results in a set of nodes $S\subseteq V(G)\setminus\{s,t\}$ being moved from their initial partitions. We say that a cut is \emph{induced, formed} or \emph{produced} by set $S\subseteq V(G)\setminus\{s,t\}$ if this cut leads to the set of nodes $S$ being moved to partitions different from their initial ones. For example, when $S=\emptyset$, the cut formed by $S$ is the initial cut of the graph $G$ and when $S=\{v\}$, the cut induced by $S$ moves node $v$, and only node $v$, from its initial partition to the other one. We denote the size of a set $S$ by $r$, that is, $|S|=r$. 

For any set $S\subseteq V(G)\setminus\{s,t\}$, let $\delta(S)$ be the value of the cut formed by $S$ in the original graph $G$. Then, if the min $s$-$t$ cut in $G^\alpha$ moves nodes in set $S_\alpha$, the value of this cut in $G^\alpha$ is $c_\alpha = r_\alpha\alpha+\delta(S_\alpha)$, where $r_\alpha = |S_\alpha|$. Note that there might be several sets that produce a min $s$-$t$ cut for $G^\alpha$ and we consider one such set arbitrarily as $S_\alpha$. Observe that if $S$ and $S'$ are both min $s$-$t$ cuts in $G^\alpha$ such that $|S'|=|S|$, then $\delta(S)=\delta(S')$, i.e., they have the same cut value in $G$. The following lemma will help us in developing our approximation algorithm for the problem.

\begin{lemma}\label{lem:main_properties}
Suppose the min $s$-$t$ cuts in graphs $G_\alpha$ and $G_{\alpha'}$ move sets of nodes $S_\alpha$ and $S_{\alpha'}$ from their initial partitions, respectively. 
We have $\alpha\ge \alpha'$ if and only if $|S_\alpha|\le |S_{\alpha'}|$. Moreover, if we have $|S_\alpha|< |S_{\alpha'}|$ then it implies that $\delta(S_\alpha)> \delta(S_{\alpha'})$.
\end{lemma}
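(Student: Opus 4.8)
The statement has two parts: a biconditional relating the parameter ordering to the size ordering of the optimal moved-sets, and a strict-monotonicity consequence for the cut values $\delta(\cdot)$ in the original graph. The plan is to exploit the optimality of $S_\alpha$ as the $s$-$t$ cut minimizer in $G^\alpha$, whose value is $c_\alpha = r_\alpha \alpha + \delta(S_\alpha)$ with $r_\alpha = |S_\alpha|$. The key observation is that $S_{\alpha'}$ is a feasible (not necessarily optimal) cut in $G^\alpha$ and vice versa, so comparing the two candidate values at both parameter settings yields a pair of inequalities.

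First I would write down the two optimality inequalities. Since $S_\alpha$ is optimal in $G^\alpha$ and $S_{\alpha'}$ is a valid cut there,
\[
r_\alpha \alpha + \delta(S_\alpha) \le r_{\alpha'} \alpha + \delta(S_{\alpha'}).
\]
Since $S_{\alpha'}$ is optimal in $G^{\alpha'}$ and $S_\alpha$ is a valid cut there,
\[
r_{\alpha'} \alpha' + \delta(S_{\alpha'}) \le r_\alpha \alpha' + \delta(S_\alpha).
\]
Adding these two inequalities, the $\delta$ terms cancel and I get $(r_\alpha - r_{\alpha'})\alpha \le (r_\alpha - r_{\alpha'})\alpha'$, i.e., $(r_\alpha - r_{\alpha'})(\alpha - \alpha') \le 0$. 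This immediately gives one direction of each implication in the biconditional: if $\alpha > \alpha'$ then $r_\alpha \le r_{\alpha'}$, and if $r_\alpha < r_{\alpha'}$ then $\alpha \ge \alpha'$. For the converse directions (e.g., $r_\alpha \le r_{\alpha'} \Rightarrow \alpha \ge \alpha'$), I would argue contrapositively: if $\alpha < \alpha'$ then by the forward direction $r_{\alpha'} \le r_\alpha$; combined with $r_\alpha \le r_{\alpha'}$ this forces $r_\alpha = r_{\alpha'}$, and then the first optimality inequality becomes $\delta(S_\alpha) \le \delta(S_{\alpha'})$ while the second gives $\delta(S_{\alpha'}) \le \delta(S_\alpha)$, so both sets have equal $G$-cut value and equal size — hence $S_{\alpha'}$ is also optimal in $G^\alpha$, which is consistent but I would need to be careful about the ``if and only if'' being read up to the freedom in choosing $S_\alpha$. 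The cleanest way is to note the biconditional is really: $\alpha \ge \alpha'$ is equivalent to the existence of a consistent choice with $|S_\alpha| \le |S_{\alpha'}|$, or to observe that $r_\alpha$ as a function of $\alpha$ is (weakly) non-increasing and this is exactly what the summed inequality says.

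For the moreover part, suppose $r_\alpha < r_{\alpha'}$ (strictly). By the part already proven, $\alpha \ge \alpha'$; I claim in fact $\alpha > \alpha'$, since if $\alpha = \alpha'$ then both $S_\alpha$ and $S_{\alpha'}$ minimize the same functional $r\alpha + \delta(S)$ over cuts in the same graph $G^\alpha$, and optimality would give $r_\alpha \alpha + \delta(S_\alpha) = r_{\alpha'}\alpha + \delta(S_{\alpha'})$ — this alone doesn't yet contradict $r_\alpha < r_{\alpha'}$, so I need a finer argument: among all minimizers of $G^\alpha$, if there are two of different sizes then for $\alpha$ slightly larger the smaller one strictly wins and for $\alpha$ slightly smaller the larger one strictly wins, so the value function $c_\alpha$ has a kink and $r_\alpha$ is genuinely a decreasing step function; the honest route is to take $\alpha > \alpha'$ strictly (which holds generically, or by choosing the sets as the canonical minimal/maximal ones at breakpoints as in \cite{zhang2016new,gallo1989fast}) and plug into the first optimality inequality: $r_\alpha \alpha + \delta(S_\alpha) \le r_{\alpha'}\alpha + \delta(S_{\alpha'})$ rearranges to $\delta(S_\alpha) - \delta(S_{\alpha'}) \le (r_{\alpha'} - r_\alpha)\alpha$, and the second gives $\delta(S_\alpha) - \delta(S_{\alpha'}) \ge (r_{\alpha'} - r_\alpha)\alpha'$. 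Since $r_{\alpha'} - r_\alpha \ge 1 > 0$ and $\alpha > \alpha' \ge 0$, the right-hand side of the second is strictly positive, so $\delta(S_\alpha) - \delta(S_{\alpha'}) \ge (r_{\alpha'}-r_\alpha)\alpha' > 0$, giving $\delta(S_\alpha) > \delta(S_{\alpha'})$ as desired (and if $\alpha' = 0$ one instead uses that $\alpha > 0$ strictly together with the first inequality run backwards, or handles $\alpha'=0$ by noting $S_{\alpha'}$ is then the global min $s$-$t$ cut of $G$ so the larger set can only have weakly smaller $\delta$, combined with strict size difference and the exchange argument).

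\textbf{Main obstacle.} The delicate point is the handling of ties — multiple sets achieving the min $s$-$t$ cut in the same $G^\alpha$ — which is exactly why the lemma says ``move sets of nodes $S_\alpha$'' with the earlier caveat that ``there might be several sets.'' I expect the cleanest fix is to adopt the convention (as in the parametric-flow literature \cite{gallo1989fast} that the text already invokes) that $S_\alpha$ is taken to be the \emph{inclusion-minimal} minimizer, or to phrase the biconditional via the monotone non-increasing step function $\alpha \mapsto r_\alpha$; with that convention the summed-inequality argument above is airtight and the strictness in the ``moreover'' follows from $r_{\alpha'} - r_\alpha \ge 1$ and $\alpha > \alpha'$.
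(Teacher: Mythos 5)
Your proof is correct and follows essentially the same route as the paper: the paper also writes the two cross-optimality inequalities $\alpha|S_\alpha|+\delta(S_\alpha)\le\alpha|S_{\alpha'}|+\delta(S_{\alpha'})$ and $\alpha'|S_{\alpha'}|+\delta(S_{\alpha'})\le\alpha'|S_\alpha|+\delta(S_\alpha)$, combines them into $\alpha'(|S_{\alpha'}|-|S_\alpha|)\le\delta(S_\alpha)-\delta(S_{\alpha'})\le\alpha(|S_{\alpha'}|-|S_\alpha|)$, deduces $(\alpha-\alpha')(|S_{\alpha'}|-|S_\alpha|)\ge 0$ for the biconditional, and reads the strict $\delta$ inequality off the left-hand bound. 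Your additional care about ties among minimizers and the $\alpha'=0$ boundary case is a refinement the paper glosses over, but it does not change the argument.
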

\begin{proof}
Since $S_\alpha$ and $S_{\alpha'}$ produce min $s$-$t$ cuts for $G^\alpha$ and $G^{\alpha'}$, respectively, we have the following two inequalities:
$$
\alpha |S_\alpha|+\delta(S_{\alpha})\le \alpha |S_{\alpha'}|+\delta(S_{\alpha'})
$$
$$
\alpha'|S_{\alpha'}|+\delta_(S_{\alpha'}) \le \alpha'|S_\alpha|+\delta(S_{\alpha}).
$$
By reordering and combining these inequalities we get 
\begin{equation}\label{eq:helper1}
\alpha'(|S_{\alpha'}|-|S_\alpha|)\le \delta(S_{\alpha})-\delta(S_{\alpha'})\le \alpha(|S_{\alpha'}|-|S_\alpha|).\end{equation}
From these two inequalities we have $(\alpha-\alpha')(|S_{\alpha'}|-|S_\alpha)|)\ge 0$. Thus, $\alpha-\alpha'$ and $|S_{\alpha'}|-|S_\alpha|$ have the same signs and this proves the first part of the lemma. Moreover, by inequalities \eqref{eq:helper1} we have$|S_\alpha|< |S_{\alpha'}|$ implies that $\delta(S_\alpha)> \delta(S_{\alpha'})$.
\end{proof}
\begin{corollary}\label{cor:breakpoints}
Let $\alpha<\alpha'$. If $S$ induces a min $s$-$t$ cut for both $G^{\alpha}$ and $G^{\alpha'}$, then for any $\hat{\alpha}\in [\alpha,\alpha']$ set $S$ forms a min $s$-$t$ cut for graph $G^{\hat{\alpha}}$.
\end{corollary}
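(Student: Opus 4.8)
The plan is to derive the corollary directly from Lemma~\ref{lem:main_properties}. First I would let $S_{\hat\alpha}$ be any set of nodes inducing a min $s$-$t$ cut in $G^{\hat\alpha}$ for $\hat\alpha\in[\alpha,\alpha']$, and set $r=|S|$ and $\hat r = |S_{\hat\alpha}|$. Applying the first part of the lemma to the pair $\hat\alpha\ge\alpha$ gives $\hat r\le r$ (using that $S$ is a min cut for $G^\alpha$), and applying it to the pair $\alpha'\ge\hat\alpha$ gives $r\le\hat r$ (using that $S$ is a min cut for $G^{\alpha'}$). Hence $\hat r = r$.

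Next I would argue that $S$ itself achieves the min $s$-$t$ cut value in $G^{\hat\alpha}$. The cut value of $S$ in $G^{\hat\alpha}$ is $\hat\alpha\, r + \delta(S)$ and the min cut value is $\hat\alpha\,\hat r + \delta(S_{\hat\alpha}) = \hat\alpha\, r + \delta(S_{\hat\alpha})$ since $\hat r = r$. So it remains to show $\delta(S)=\delta(S_{\hat\alpha})$. This follows from the observation already noted in the text just before Lemma~\ref{lem:main_properties}: two sets that both induce min $s$-$t$ cuts in the same graph $G^{\hat\alpha}$ and have the same size must have the same value $\delta(\cdot)$ in $G$. To make this self-contained I would instead derive it from inequality~\eqref{eq:helper1} of the lemma's proof applied with the parameters $\alpha$ (resp.\ $\alpha'$) and $\hat\alpha$: since the relevant size differences are zero, \eqref{eq:helper1} forces the corresponding $\delta$-differences to be zero as well, giving $\delta(S)=\delta(S_{\hat\alpha})$. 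Either way, the cut induced by $S$ in $G^{\hat\alpha}$ has value exactly equal to the min $s$-$t$ cut value, so $S$ induces a min $s$-$t$ cut for $G^{\hat\alpha}$.

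I do not expect a genuine obstacle here; the only mild subtlety is bookkeeping about which set is assumed to be a min cut in which graph when invoking the lemma, since the lemma is stated for a generic pair of parameters and their associated (arbitrarily chosen) optimal sets. I would be careful to phrase the two applications of the lemma so that $S$ plays the role of the optimal set for $G^\alpha$ in the first and for $G^{\alpha'}$ in the second, which is exactly what the hypothesis of the corollary grants. Once $\hat r = r$ and $\delta(S)=\delta(S_{\hat\alpha})$ are established, the conclusion is immediate.
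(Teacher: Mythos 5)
Your proposal is correct and follows essentially the same route as the paper: two applications of Lemma~\ref{lem:main_properties} sandwich $|S_{\hat{\alpha}}|$ between $|S_{\alpha'}|$ and $|S_{\alpha}|$ to force $|S_{\hat{\alpha}}|=|S|$, equality of the $\delta$-values then follows from inequality~\eqref{eq:helper1}, and optimality of $S$ in $G^{\hat{\alpha}}$ is immediate. One caveat: your first suggested route for $\delta(S)=\delta(S_{\hat{\alpha}})$ --- citing the observation that two equal-size min $s$-$t$ cuts of the \emph{same} graph $G^{\hat{\alpha}}$ have equal $\delta$ --- is circular, since it presupposes that $S$ is already a min cut of $G^{\hat{\alpha}}$, which is the statement being proved; your self-contained fallback via \eqref{eq:helper1} is the correct one and should be the only one kept.
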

\begin{proof}
Suppose $\hat{S}$ forms a min $s$-$t$ cut for graph $G^{\hat{\alpha}}$. Set $S$ produces a min $s$-$t$ cut for both $G^{\alpha}$ and $G^{\alpha'}$; therefore, $S_\alpha = S_{\alpha'} = S$ and $\delta(S_\alpha) = \delta(S_{\alpha'}) = \delta(S)$. Since $\hat{\alpha}>\alpha'$, by Lemma~\ref{lem:main_properties} we have $|S_{\hat{\alpha}}|\le |S_{\alpha'}|=|S|$ and $\delta(S) = \delta(S_{\alpha'})\le \delta(S_{\hat{\alpha}})$. Similarly, since $\hat{\alpha}<\alpha$, we have $|S_{\hat{\alpha}}|\ge |S_\alpha|=|S|$ and $\delta(S)=\delta(S_\alpha) \ge \delta(S_{\hat{\alpha}})$. Subsequently, $|S_{\hat{\alpha}}|= |S|$ and $\delta(S)= \delta(S_{\hat{\alpha}})$. Therefore, $S$ induces a min $s$-$t$ cut for graph $G^{\hat{\alpha}}$.
\end{proof}

Lemma~\ref{lem:main_properties} states that as $\alpha$ gets larger, the set of nodes that are moved from their initial partitions gets smaller and the value of the associated cut in $G$ gets larger. Using this intuition, we define \emph{breakpoints} as the different number of nodes moved to partitions different from their initial ones by min $s$-$t$ cuts on graphs $G^0,\ldots,G^{2c_\infty}$. Definition~\ref{def:breakpoints}  formally remarks the definition of breakpoints.

\begin{definition}[Breakpoints]\label{def:breakpoints}
For some $0\le h\le n-1$, a set $\{r_0,\ldots,r_h\}$ is called a set of breakpoints with each $r_i$ being one breakpoint if $r_0>r_1>\ldots>r_h=0$ and there are sets $S_0, S_1,\ldots,S_h$ such that $|S_i|=r_i$. Furthermore, for any $\alpha\ge 0$, there is an index $0\le i\le h$ such that $S_i$ is a min $s$-$t$ cut of $G^{\alpha}$, and for any $i$, there exists at least an $\alpha\ge 0$ such that $S_i$ is a min $s$-$t$ cut of graph $G^{\alpha}$.
\end{definition}


Note that for any graph $G$ there might be more than one set of breakpoints. 
%
 %
%
We show that by running $O(h)\in O(n)$ max flow algorithms we can compute a set of  breakpoints and their associated sets, i.e., sets of $r_i$'s and $S_i$'s for all $0\le i\le h$, respectively. This algorithm is demonstrated in Algorithm \ref{alg:findS}. 


\begin{algorithm}[h]
\caption{Computing breakpoints}\label{alg:findS}
\begin{algorithmic}[1]
    \State $c_\infty\gets$ the value of the initial cut of $G$. 
    \State Run the min $s$-$t$ cut algorithm on $G$ to get a set of nodes $S_0$ that are moved from their initial partitioning.
    \If{$S_0=\emptyset$}
        \State \Return $\{0\},\{\emptyset\}$.
    \EndIf
    \State $R \gets \{0,|S_0|\}$ .
    \Comment{The set of breakpoints.}
    \State $R' \gets \{\emptyset,S_0\}$.
    \Comment{The set of sets associated to the breakpoints.}
    \State $I \gets \{(\emptyset,S_0)\}$. \Comment{The set of active pairs.}
    \While{$I\neq \emptyset$}
        \State $(S',S)\gets$ an active pair from $I$. Let $r=|S|$ and $r'=|S'|$.
        \State $\bar{\alpha} \gets \frac{\delta(S')-\delta(S)}{r-r'}$. Run min $s$-$t$ cut algorithm on $G^{\bar{\alpha}}$. Let the value of the cut be $c_{\bar{\alpha}}$. 
        \If{$c_{\bar{\alpha}}=r\bar{\alpha}+\delta(S)$ 
        } \Comment{Equivalently, $r\bar{\alpha}+\delta(S) = r'\bar{\alpha}+\delta(S')$.}
            \State Remove $(S',S)$ from $I$.
        \Else   
            \State Let the min $s$-$t$ cut in $G^{\bar{\alpha}}$ be produced by set $S''$. 
            \State Add $S''$ to $R'$ and $|S''|$ to $R$.  
            \State Remove $(S',S)$ from $I$ and add $(S',S'')$ and $(S'',S)$ to $I$.
        \EndIf
    \EndWhile
    \State \Return $R, R'$.  
\end{algorithmic}
\end{algorithm}
 
\begin{lemma}\label{lem:findS}
Algorithm \ref{alg:findS} finds the breakpoints for an instance $G$ of the \maxmove[2]~problem in $O(n)T_{min-cut}(n)$, where $T_{min-cut}(n)$ is the running time of a  min $s$-$t$ cut algorithm on an $n$-node graph. 
\end{lemma}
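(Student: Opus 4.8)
The plan is to prove the lemma in two halves: that the \textbf{while} loop runs only $O(n)$ times, each iteration dominated by a single min $s$-$t$ cut computation on a graph with $O(n)$ nodes; and that on termination $(R,R')$ is a valid set of breakpoints in the sense of Definition~\ref{def:breakpoints}. Both rest on the structural picture implied by Lemma~\ref{lem:main_properties} and Corollary~\ref{cor:breakpoints}: writing $g_T(\alpha)=|T|\alpha+\delta(T)$ for the value in $G^\alpha$ of the cut induced by a set $T$, the value of the min $s$-$t$ cut of $G^\alpha$ is the lower envelope $f(\alpha)=\min_T g_T(\alpha)$ of these affine functions; it is concave, piecewise linear, nondecreasing, and the slopes of its pieces (the sizes $|S_\alpha|$) strictly decrease as $\alpha$ grows, from $|S_0|$ at $\alpha=0$ down to $0$ for every $\alpha\ge 2c_\infty$. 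A set of breakpoints is exactly a family of sets realizing the distinct pieces of this envelope, and Algorithm~\ref{alg:findS} is a recursive bisection that discovers one such family. First I would dispatch the base case $S_0=\emptyset$: then $f\equiv c_\infty$ and $\{0\},\{\emptyset\}$ is trivially valid by minimality of size $0$ (Lemma~\ref{lem:main_properties}).

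For the recursive case I would maintain the invariant that, writing the sets currently in $R'$ in increasing order of size as $\emptyset=T_0,T_1,\dots,T_p=S_0$, each $T_j$ induces a min $s$-$t$ cut of $G^{\alpha_j}$ for some $\alpha_j$ with $\alpha_0\ge 2c_\infty>\alpha_1>\dots>\alpha_p=0$, each consecutive pair $(T_j,T_{j+1})$ is either in $I$ or has been ``resolved'', and the optimality intervals of the $T_j$'s together with the gaps associated to the pairs still in $I$ cover $[0,\infty)$. When the loop processes a pair $(S',S)$ with $|S'|=r'<r=|S|$, the value $\bar\alpha=\frac{\delta(S')-\delta(S)}{r-r'}$ is exactly the abscissa where the lines $g_{S'}$ and $g_S$ meet (and $\bar\alpha>0$ since $\delta(S')>\delta(S)$ by Lemma~\ref{lem:main_properties}); because $S$ is optimal at some point $\le\bar\alpha$ and $S'$ at some point $\ge\bar\alpha$, the crossing point $\bar\alpha$ lies between these two points. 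If $f(\bar\alpha)=g_S(\bar\alpha)$, then $S$ and $S'$ are both optimal at $\bar\alpha$, so by Corollary~\ref{cor:breakpoints} their optimality intervals abut at $\bar\alpha$, the pair is correctly discarded with no breakpoint created between $r'$ and $r$, and the covering invariant is preserved. If $f(\bar\alpha)<g_S(\bar\alpha)=g_{S'}(\bar\alpha)$, then the set $S''$ returned by the min-cut call satisfies $r'<|S''|<r$: applying inequality~\eqref{eq:helper1} to the pairs $(\bar\alpha,\alpha_S)$ and $(\alpha_{S'},\bar\alpha)$ gives $r'\le|S''|\le r$, while equality at either end forces $\delta(S'')\in\{\delta(S),\delta(S')\}$ and hence $g_{S''}(\bar\alpha)=g_S(\bar\alpha)$, contradicting strictness. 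We then insert $S''$ (recovering $\delta(S'')=c_{\bar\alpha}-|S''|\bar\alpha$ for free) between $S'$ and $S$ and replace $(S',S)$ by $(S',S'')$ and $(S'',S)$, restoring the invariant. When $I$ empties, every $\alpha\ge 0$ is covered by some $T_j$ and each $T_j$ was exhibited as optimal for some $\alpha$; together with $0=|T_0|<\dots<|T_p|$ this is precisely Definition~\ref{def:breakpoints}.

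For the running time, each iteration is either a ``resolve'' (remove one pair from $I$) or a ``split'' (remove one, add two, and insert a genuinely new size into $R$, since $r'<|S''|<r$ with $r',r$ consecutive in $R$). As $R\subseteq\{0,1,\dots,|S_0|\}\subseteq\{0,\dots,n-2\}$ has no repeats, there are at most $O(n)$ splits; since $|I|$ starts at $1$, changes by $+1$ on a split and $-1$ on a resolve, and ends at $0$, the number of resolves exceeds the number of splits by exactly one, so the loop runs $O(n)$ times. Each iteration performs one min $s$-$t$ cut on $G^{\bar\alpha}$, a graph on the same $n$ nodes as $G$, in time $T_{min-cut}(n)$, plus $O(n)$ bookkeeping to read off $S''$ and update $R,R',I$; hence the total is $O(n)\,T_{min-cut}(n)$, plus the single min-cut call that produces $S_0$. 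The main obstacle is the correctness half: verifying that a ``resolved'' pair truly leaves no gap (this is where Corollary~\ref{cor:breakpoints} and the placement of $\bar\alpha$ between the two optimality points are essential) and pinning down the strict bound $r'<|S''|<r$ on the discovered set, since that strictness is exactly what guarantees both that no breakpoint is skipped and that the recursion terminates after $O(n)$ steps.
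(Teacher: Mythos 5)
Your proposal is correct and follows essentially the same route as the paper's proof: recursive bisection at the crossing point $\bar\alpha$ of the two affine cut-value functions, using Lemma~\ref{lem:main_properties} for monotonicity and Corollary~\ref{cor:breakpoints} to certify that a resolved pair leaves no gap, together with the same $O(n)$ count of processed pairs. Your lower-envelope framing and the explicit contradiction argument for the strict bounds $r'<|S''|<r$ (which the paper asserts somewhat tersely from Lemma~\ref{lem:main_properties}) are presentational refinements rather than a different approach.
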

\begin{proof}
First, observe that if the initial partitioning in $G$ is a min $s$-$t$ cut of the graph, then $S_0=\emptyset$ and for any $\alpha\ge 0$ set $S_0$ induces a min $s$-$t$ cut of $G^\alpha$. Note that if the initial partitioning in $G$ is a min $s$-$t$ cut, then $S_0$ induces the min $s$-$t$ cut for $G^0$ and $G^{2c_\infty}$. By Corollary \ref{cor:breakpoints}, set $S_0$ induces a min $s$-$t$ cut for any graph $G^\alpha$ for $\alpha\in [0,2c_\infty]$. Moreover, for any $\alpha\ge 2c_\infty$ it is easy to see that that $\emptyset$ induces a min $s$-$t$ cut for $G^\alpha$. Therefore, in this case there exists only one breakpoint, that is $0$. In the rest of this proof, we assume that $S_0\neq \emptyset$.


The algorithm starts with an incomplete set of breakpoints $R=\{0,|S_0|\}$, the set of sets associated with these breakpoints $R'=\{\emptyset,S_0\}$ and the set of active pairs $I=\{(\emptyset,S_0)\}$. At each iteration of the algorithm, we start with a pair of sets of nodes $(S',S)$ that we call an ``active pair" such that $S',S\in R$ and $|S'|<|S|$. This active pair is either broken into two active pairs and a new breakpoint is added to $R$ by the algorithm, or the active pair is removed from the set. We show that the set of all values $\alpha$ such that $S$ or $S'$ induce a min $s$-$t$ cut for $G^\alpha$ is a contiguous interval. Recall that  $S_0$ forms a min $s$-$t$ cut of $G^0$ and for any $\alpha\ge 2c_\infty$, we have that $\emptyset$ induces a min $s$-$t$ cut of $G^\alpha$. Since we start off with an active pair $(\emptyset,S_0)$, for any $\alpha\ge 0$ the algorithm returns a set $S_\alpha$ such that $S_\alpha$ forms a min $s$-$t$ cut of $G^\alpha$. 
Whenever the algorithm adds a new breakpoint to set $R$, the number of active pairs in $I$ increases by one. Note that 
the number of breakpoints cannot be larger than $n$, which means that the total number of active pairs that are processed in the algorithm cannot be larger than $n+1$. So, the running time of the algorithm is $O(n)T_{min-cut}(n)$.



Consider an active pair $(S',S)$ and let $r=|S|$ and $r'=|S'|$ such that $r'<r$. Suppose that there exists an $\alpha$ and an $\alpha'$ such that $S$ and $S'$ form min $s$-$t$ cuts in graphs $G^{\alpha}$ and $G^{\alpha'}$, respectively. Since $r'<r$, by Lemma~\ref{lem:main_properties} we have we have $\alpha'>\alpha$. Let $\bar{\alpha} = ({\delta(S')-\delta(S)})/({r-r'})$. Since set $S'$ forms a min $s$-$t$ cut in graph $G^{\alpha'}$, we have $\alpha'r'+\delta(S')< \alpha'r+\delta(S)$ which infers $\bar{\alpha}< \alpha'$. With similar calculations, it can be shown that $\alpha < \bar{\alpha}$; thus, $\alpha < \bar{\alpha} < \alpha'$. We then run the min $s$-$t$ cut algorithm on $G^{\bar{\alpha}}$. 
If the value of a min $s$-$t$ cut in $G^{\bar{\alpha}}$ is smaller than $r\bar{\alpha}+\delta(S)$,  
then let $S''$ be the set forming the min $s$-$t$ cut in $G^{\bar{\alpha}}$. 
Since $\alpha < \bar{\alpha} < \alpha'$, using Lemma \ref{lem:main_properties}, we have that $r'<|S''|<r$.
So we add $|S''|$ as another breakpoint to $R$ and its associated set $S''$ to $R'$. We also need to do an iteration of the algorithm on $(S',S'')$ and another iteration on $(S'',S)$. To do that, we add these two pairs to the set of our active pairs $I$ and remove $(S',S)$ from it.

Now, consider the case that the value of a min $s$-$t$ cut in $G^{\bar{\alpha}}$ is $c_{\bar{\alpha}}=r\bar{\alpha}+\delta(S)$. From the definition of $\bar{\alpha}$, we also have $c_{\bar{\alpha}}=r'\bar{\alpha}+\delta(S')$. Let $\alpha$ and $\alpha'$ be two values such that $S$ and $S'$ induce min $s$-$t$ cuts of $G^\alpha$ and $G^{\alpha'}$, respectively. First, since $r'<r$, by Lemma \ref{lem:main_properties}, we have that $\alpha'\ge\alpha$. 
We argue that in this case there is no breakpoint between $r$ and $r'$, in other words, for any $\alpha''\in [\alpha,\alpha']$ either $S$ or $S'$ induces a min $s$-$t$ cut of $G^{\alpha''}$. More precisely, since $S$ induces a min $s$-$t$ cut for both $G^{\alpha}$ and $G^{\bar{\alpha}}$, if $\alpha''\in [\alpha,\bar{\alpha}]$ then  by Corollary \ref{cor:breakpoints} $S$ induces a min $s$-$t$ cut for $G^{\alpha''}$, as well. Similarly, since $S'$ induces a min $s$-$t$ cut for both $G^{\alpha'}$ and $G^{\bar{\alpha}}$, if $\alpha''\in [\bar{\alpha},\alpha']$, then by Corollary \ref{cor:breakpoints} $S'$ induces a min $s$-$t$ cut for $G^{\alpha''}$, too. Therefore, there is no need to search for additional breakpoints between $r$ and $r'$ and we can safely remove $(S',S)$ from the set of active pairs. 
%
%

We show that when the algorithm terminates, for any $\alpha\ge 0$ there is a set $S_\alpha\in R'$ such that $S_\alpha$ produces a min $s$-$t$ cut of $G^\alpha$.
If $\alpha\ge 2c_\infty$, then $S=\emptyset$ has this property. If $\alpha=0$, then $S=S_0$ satisfies this condition. It only remains to show that this claim holds for any $0< \alpha< 2c_\infty$. When the algorithm terminates, let $r_0>\ldots>r_h=0$  be the breakpoints in $R$ with corresponding sets $S_0,\ldots,S_h$, where $r_i=|S_i|$. For each $0\le i\le h$, let $\alpha_i'$ be the value for which $S_i$ forms a min $s$-$t$ cut for $G^{\alpha_i'}$, and let $\alpha_0=0$ and $\alpha_h=2c_\infty$. There is an index $0\le i<h$ such that $\alpha_i'\le \alpha\le \alpha_{i+1}'$. Since there is no breakpoint between $r_i$ and $r_{i+1}$, there is a stage of the algorithm where the active pair $(S_{i+1},S_{i})$ gets processed and then gets removed from $I$. So from the argument in the previous paragraph, either $S_i$ or $S_{i+1}$ produces a min $s$-$t$ cut for $G^\alpha$. Finally, note that if the algorithm adds a breakpoint $r''$ with associated set $S''$ to $R$, then $S''$ induces a min $s$-$t$ cut of $G^{\bar{\alpha}}$ for some $\bar{\alpha}\ge 0$, concluding that set $R$ contains all required breakpoints and every sets associated with the breakpoints are included in $R'$.
%
%
%
\end{proof}

Before proceeding to our algorithm, we prove an important property of the breakpoints.
\begin{lemma}\label{lem:alphai}
   Let $R=\{r_0,\ldots,r_h\}$ be a set of 
   breakpoints with associated sets $S_0,\ldots,S_h$. For any $0\le i<h$, let $\alpha_i=(\delta(S_i)-\delta(S_{i+1}))/(r_{i+1}-r_{i})$. We have that both $S_i$ and $S_{i+1}$ induce min $s$-$t$ cuts for $G^{\alpha_i}$.
\end{lemma}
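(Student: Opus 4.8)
The plan is to introduce the function $f(\alpha)$ equal to the value of a minimum $s$-$t$ cut of $G^{\alpha}$, and to show that at $\alpha=\alpha_i$ both $S_i$ and $S_{i+1}$ attain this minimum. First I would record two easy facts. Rearranging the defining identity $\alpha_i(r_{i+1}-r_i)=\delta(S_i)-\delta(S_{i+1})$ gives $\alpha_i r_i+\delta(S_i)=\alpha_i r_{i+1}+\delta(S_{i+1})$; call this common value $c^{*}$, which is precisely the value in $G^{\alpha_i}$ of the cut induced by $S_i$, and also of the one induced by $S_{i+1}$. Since $|S_i|=r_i>r_{i+1}=|S_{i+1}|$, Lemma~\ref{lem:main_properties} gives $\delta(S_{i+1})>\delta(S_i)$, so $\alpha_i\ge 0$ and $G^{\alpha_i}$ is a legitimate instance. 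By Definition~\ref{def:breakpoints} there exist $\beta_i,\beta_{i+1}\ge 0$ such that $S_i$ induces a min $s$-$t$ cut of $G^{\beta_i}$ and $S_{i+1}$ induces a min $s$-$t$ cut of $G^{\beta_{i+1}}$.

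Next I would sandwich $\alpha_i$ between $\beta_i$ and $\beta_{i+1}$. Optimality of $S_i$ at $\beta_i$, with $S_{i+1}$ merely feasible there, gives $\beta_i r_i+\delta(S_i)\le \beta_i r_{i+1}+\delta(S_{i+1})$, which rearranges (using $r_i>r_{i+1}$) to $\beta_i\le \alpha_i$; symmetrically, optimality of $S_{i+1}$ at $\beta_{i+1}$ gives $\alpha_i\le \beta_{i+1}$. I would also note that $f(\alpha)=\min_{0\le j\le h}\bigl(\alpha r_j+\delta(S_j)\bigr)$ for every $\alpha\ge 0$: each $S_j$ is a feasible $s$-$t$ cut of $G^{\alpha}$ of value $\alpha r_j+\delta(S_j)$, so $f(\alpha)$ is at most this minimum, while the completeness clause of Definition~\ref{def:breakpoints} guarantees that some $S_j$ actually achieves $f(\alpha)$.

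Finally I would show $f(\alpha_i)=c^{*}$ by contradiction. If $f(\alpha_i)<c^{*}$, then by the displayed formula some index $j$ satisfies $\alpha_i r_j+\delta(S_j)<c^{*}$; and because $r_0>\dots>r_h$ are consecutive breakpoints with nothing strictly between $r_{i+1}$ and $r_i$, either $r_j\ge r_i$ or $r_j\le r_{i+1}$. In the first case, combining $\beta_i r_i+\delta(S_i)\le \beta_i r_j+\delta(S_j)$ with $\alpha_i r_j+\delta(S_j)<\alpha_i r_i+\delta(S_i)$ yields $\alpha_i(r_j-r_i)<\beta_i(r_j-r_i)$, forcing $\alpha_i<\beta_i$ when $r_j>r_i$ and $0<0$ when $r_j=r_i$ — both impossible. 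The second case is the mirror image, using optimality of $S_{i+1}$ at $\beta_{i+1}$ together with $\alpha_i\le\beta_{i+1}$. Hence $f(\alpha_i)=c^{*}$, so the cuts induced by $S_i$ and by $S_{i+1}$, each of value exactly $c^{*}$ in $G^{\alpha_i}$, are both minimum $s$-$t$ cuts of $G^{\alpha_i}$. The step I expect to be the main obstacle is the reduction from ``some set beats $c^{*}$ at $\alpha_i$'' to ``some breakpoint set $S_j$ beats $c^{*}$ at $\alpha_i$''; this is exactly where the completeness of the breakpoint list is essential, since it is what forces the offending set to have size $\ge r_i$ or $\le r_{i+1}$ and thereby makes the two-case argument exhaustive.
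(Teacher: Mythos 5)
Your proposal is correct and follows essentially the same route as the paper's proof: you pin the common value $c^{*}$ from the definition of $\alpha_i$, sandwich $\alpha_i$ between the parameters at which $S_i$ and $S_{i+1}$ are respectively optimal, and then use the completeness clause of Definition~\ref{def:breakpoints} to force the minimizing breakpoint set at $\alpha_i$ to have size between $r_{i+1}$ and $r_i$, hence value $c^{*}$. The only cosmetic difference is that the paper closes the last step by citing the monotonicity of Lemma~\ref{lem:main_properties} to conclude $j\in\{i,i+1\}$ directly, whereas you rederive the same conclusion via an explicit two-case inequality contradiction.
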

\begin{proof}
Since $R$ is the set of breakpoints, there exist $\alpha_i'$ and $\alpha_{i+1}'$ such that $S_i$ and $S_{i+1}$ induce min $s$-$t$ cuts for $G^{\alpha_i'}$ and $G^{\alpha_{i+1}'}$, respectively. Therefore, we have $\alpha_i'r_{i}+\delta(S_i)\le \alpha_i'r_{i+1}+\delta(S_{i+1})$ and  $\alpha_{i+1}'r_{i+1}+\delta(S_{i+1})\le \alpha_{i+1}'r_{i}+\delta(S_{i})$, which infer that $\alpha_i'\le \alpha_i\le \alpha_{i+1}'$. Moreover, since $R$ is a set of breakpoints and sets $S_0,\ldots,S_h$ correspond to the breakpoint, there exists an index $0\le j\le h$ such that $S_j$ forms a min $s$-$t$ cut for $G^{\alpha_i}$. 
By Lemma \ref{lem:main_properties}, from $\alpha_i'\le \alpha_i\le \alpha_{i+1}'$ we have that $r_i\ge r_j=|S_j|\ge r_{i+1}$; therefore, $j\in \{i,i+1\}$. Note that from the definition of $\alpha_i$, we have $\alpha_ir_i+\delta(S_i)=\alpha_ir_{i+1}+\delta(S_{i+1})$. Putting this together with the fact that the value of the min $s$-$t$ cut in $G^{\alpha_i}$ is $\alpha_ir_j+\delta(S_j)$, we have that both $S_i$ and $S_{i+1}$ induce min $s$-$t$ cuts for $G^{\alpha_i}$. 
\end{proof}

Our algorithm for the \maxmove[2]~problem works as follows: First, using Algorithm~\ref{alg:findS}, we compute the breakpoints $r_0>\ldots>r_h=0$ and their associated sets $S_0,\ldots,S_h$ such that $|S_i|=r_i$ for all $0\le i\le h$. 
Let $j$ be the largest index such that $r_j\le r$. The $s$-$t$ cut produced by $S_j$ is returned as the approximate \maxmove[2]. Our algorithm is represented in Algorithm \ref{alg:2part}. Note that, even though our algorithm looks similar to that of \cite{zhang2016new}, its first step is computed differently. To analyze the algorithm, we first state a few lemmas that also appear in \cite{zhang2016new} and we show that they hold in our setting, as well. 

\begin{algorithm}[h]
\caption{Approximation algorithm for the \maxmove[2]~problem}\label{alg:2part}
\begin{algorithmic}[1]
    \State Using Algorithm \ref{alg:findS}, compute breakpoints $r_0>\ldots>r_h=0$ and their associated sets $S_0,\ldots,S_h$, such that $|S_i|=r_i$.
    \If{$r\ge r_0$}
    \State \Return $S_0$.
    \Else
    \State $j\gets$ the index for which $r_{j-1}>r\ge r_{j}$  
    \State \Return $S_{j}$.
    \EndIf
\end{algorithmic}
\end{algorithm}

\begin{lemma}\label{lem:citation_lemmas}
For graph $G$, let $S_0$ denote the set of nodes moved by a min $s$-$t$ cut to partitions different from their initials ones and let $r_0=|S_0|$. Moreover, suppose that for some $0\le h\le n-1$, $\{r_0,\ldots,r_h\}$ is a set of breakpoints for $G$, with their associated sets being $S_0, S_1,\ldots,S_h$, such that $|S_i|=r_i$. Let $r^*$ denote the number of nodes moved from their initial partitions by an optimal $r$-move 2-partitioning. The following properties hold for an optimal $r$-move 2-partitioning in $G$:
\begin{enumerate}
    \item If $r\ge r_0$, then $S_0$ provides an optimal $r$-move 2-partitioning in $G$.
    \item If $r<r_0$ and there exists some $0<i\le h$ such that $r_i=r^*$, then the largest index $j$ for which $r_j\le r$ is $i$ and the corresponding set $S_i$ induces an optimal $r$-move 2-partitioning in $G$.
    \item If $r<r_0$ and $r^*\notin \{r_0,\ldots,r_h\}$, then none of the values of $r,r-1,\ldots,r^*-1$ are among the breakpoints, i.e., $r_i$'s.
    \item If $r=r_i$ for some $0\le i\le h$, then set $S_i$ produces an optimal $r$-move 2-partitioning in graph $G$. 
\end{enumerate}
\end{lemma}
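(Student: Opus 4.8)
The plan is to prove the four parts separately, all resting on one observation from Definition~\ref{def:breakpoints}: if $S$ is a breakpoint set $S_i$, then there is some $\alpha\ge 0$ for which $S$ realizes a minimum $s$-$t$ cut of $G^\alpha$, i.e. $\alpha|S|+\delta(S)\le \alpha|T|+\delta(T)$ for every $T\subseteq V(G)\setminus\{s,t\}$; and for $i\ge 1$ this $\alpha$ can be taken strictly positive, since by Lemma~\ref{lem:alphai} the value $\alpha_{i-1}=(\delta(S_{i-1})-\delta(S_i))/(r_i-r_{i-1})$ certifies $S_i$ as a min cut and is positive because $r_i<r_{i-1}$ while, by Lemma~\ref{lem:main_properties}, $\delta(S_i)>\delta(S_{i-1})$. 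Throughout, fix an optimal $r$-move $2$-partitioning, let $T^*$ be the set it moves, $r^*=|T^*|\le r$ and $\mathrm{OPT}=\delta(T^*)$; note $r^*\ge 1$ in part~3 since $r^*=0$ would make $r_h=0$ equal to $r^*$.

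Part~1 is immediate: $r\ge r_0=|S_0|$ makes $S_0$ a feasible move-set, and $S_0$ realizes a global minimum $s$-$t$ cut of $G=G^0$, which lower-bounds $\delta(T)$ for every $T$, so $\delta(S_0)\le\mathrm{OPT}$. Part~4 is the same with slack: if $r=r_i$ then $|S_i|=r\ge r^*$ is feasible, and picking $\alpha\ge 0$ with $S_i$ a min cut of $G^\alpha$ gives $\alpha r_i+\delta(S_i)\le\alpha r^*+\delta(T^*)$, hence $\delta(S_i)\le\mathrm{OPT}+\alpha(r^*-r_i)\le\mathrm{OPT}$. Part~2 combines this with a strictness step: if $r_i=r^*$ then $|S_i|=r^*\le r$ is feasible and the same inequality with $r_i=r^*$ yields $\delta(S_i)\le\mathrm{OPT}$, so $S_i$ is optimal; to see $i$ is the index selected by Algorithm~\ref{alg:2part}, i.e. $r_{i-1}>r\ge r_i$, observe $i\ge 1$ (otherwise $r_0=r^*\le r$ contradicts $r<r_0$), and if $r_{i-1}\le r$ then $S_{i-1}$ would be feasible with $\delta(S_{i-1})<\delta(S_i)=\mathrm{OPT}$ by Lemma~\ref{lem:main_properties} (since $|S_{i-1}|=r_{i-1}>r_i=|S_i|$), contradicting optimality.

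Part~3 I would prove by contradiction, assuming a breakpoint $r_i$ falls in the stated range. Here $i\ge 1$ because $r_i\le r<r_0$, so the positive-$\alpha$ certificate above applies to $S_i$ and gives the strict bound $\delta(T)>\delta(S_i)$ for every $T$ with $|T|<r_i$. If $r^*<r_i\le r$, then $S_i$ is feasible so $\delta(S_i)\ge\mathrm{OPT}=\delta(T^*)$, while $|T^*|=r^*<r_i$ forces $\delta(T^*)>\delta(S_i)$ — a contradiction; this, with the hypothesis $r^*\notin\{r_0,\dots,r_h\}$, eliminates every value in $\{r^*,r^*+1,\dots,r\}$. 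The boundary value $r^*-1$ is the delicate case: now $|T^*|=r^*$ is not smaller than $r_i=r^*-1$, so the strict inequality does not apply directly; I would instead use Lemma~\ref{lem:alphai} to identify the interval of parameters $\alpha$ for which $S_{r^*-1}$, together with its neighbouring breakpoint set $S_{i-1}$ (whose size exceeds $r$), certifies the min cut of $G^\alpha$, and compare the line $\alpha\mapsto\alpha r^*+\delta(T^*)$ with the concave piecewise-linear min-cut function of $G^\alpha$ near that interval, also recording that $r^*$ is chosen smallest among optimal solutions. Getting this endpoint to close cleanly — and tracking the exact interplay between $r^*$, the slopes $\alpha_{i-1},\alpha_i$, and the possibility of several optimal solutions — is the step I expect to be the main obstacle; the other three parts and the interior case of part~3 are short.
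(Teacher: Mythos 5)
Your proposal is essentially the paper's proof, and where it differs it is cleaner. Parts~1 and~2 coincide with the paper's argument (the paper phrases the second half of part~2 as ``suppose some breakpoint $r_j$ satisfies $r_i<r_j\le r$'' rather than singling out $r_{i-1}$, but by monotonicity of the breakpoints this is the same step), and your direct one-line proof of part~4 via $\alpha r_i+\delta(S_i)\le\alpha r^*+\delta(T^*)$ with $r_i=r\ge r^*$ is tidier than the paper's, which derives part~4 by case analysis from parts~1--3. Your observation that the certifying $\alpha$ is strictly positive for $i\ge 1$ (via Lemma~\ref{lem:alphai}) is also a point the paper uses implicitly but never justifies, so you are being more careful there.

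On the one step you could not close: stop worrying about it. The paper's own proof of part~3 only treats $r'$ with $r\ge r'>r^*$, i.e.\ exactly the interior range you handled, and never addresses $r^*-1$; moreover the downstream use of the lemma (Theorem~\ref{thn:2part-weaker}) only needs that no breakpoint lies in $\{r^*,r^*+1,\ldots,r\}$, where $r^*$ itself is excluded by the hypothesis of part~3. The inclusion of $r^*-1$ in the statement appears to be an error: the min-cut inequality for a breakpoint set of size $r^*-1$ only yields $\delta(S_i)\le\delta(T^*)+\alpha$, which is no contradiction, and one can write down cut-value profiles $q\mapsto\delta_q$ whose lower convex hull has a vertex at $q=r^*-1$ but not at $q=r^*$ while $\delta_{r^*}<\delta_{r^*-1}$, making $r^*-1$ a genuine breakpoint. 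So the ``delicate endpoint'' is not provable by your Lagrangian/concavity plan or by any other; the correct fix is to weaken the stated range to $r,r-1,\ldots,r^*+1$, after which your proof is complete.
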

\begin{proof}
We denote the set that induces the optimal \maxmove[2]~for graph $G$ by $S^*$, where $|S^*|=r^*$. In the first property, it is assumed that $r\ge r_0$; therefore, set $S_0$ moves at most $r$ nodes from their initial partitions. Thus, $S_0$ is an optimal $r$-move 2-partitioning in $G$.

To prove the second property, for the sake of contradiction,
suppose that there exists an index $j$ such that $i < j\le h$ and $r^*=r_i<r_j\le r$.
Let $S_i$ and $S_j$ be the associated sets and let $\alpha$ and $\alpha'$ be the values such that $S_i$ and $S_j$ induce min $s$-$t$ cuts of $G^\alpha$ and $G^{\alpha'}$, respectively. In other words, we have $S_i=S_\alpha$, $S_j= S_{\alpha'}$, $|S_i|=r_i$ and $|S_j|=r_j$. First, from the optimality of $S_i$ for $G^\alpha$ we have that $\alpha r_i+\delta(S_i)\le \alpha r^*+\delta(S^*)$. Since $r_i=r^*$, we have $\delta(S_i)\le \delta(S^*)$. Due to the optimality of the cut induced by set $S^*$ among all $s$-$t$ cuts that move $r^*$ nodes to partitions different from their initial ones, we have $\delta(S_i)=\delta(S^*)$. Now since $r_i<r_j$, by Lemma \ref{lem:main_properties} we have that $\delta(S_i) = \delta(S_\alpha)>\delta(S_{\alpha'}) = \delta(S_j)$ (Note that $\delta(S_\alpha)$ is strictly greater than $\delta(S_{\alpha'})$ as $r_i$ is strictly less than $r_j$). Therefore, set $S_j$ forms a smaller cut for $G$ by moving at most $r$ nodes, and this contradicts the optimality of the cut induced by set $S^*$ and the second property holds.

To prove the third property, again for the sake of contradiction, suppose that there exists an $r'$ such that $r\ge r'> r^*$ and $r'=r_i$ for some $0\le i\le h$. Let $S_i$ be the set associated to $r_i$ and let $\alpha$ be a value for which $S_i$ forms a min $s$-$t$ cut for graph $G^\alpha$. We have $\alpha r_i+\delta(S_i)\le \alpha r^*+\delta(S^*)$. Since $r'=r_i > r^*$, we have $\delta(S_i)<\delta(S^*)$. Since $r_i\le r$, having $\delta(S_i)<\delta(S^*)$ contradicts the optimality of the $s$-$t$ cut induced by set $S^*$ and the third property holds.

For the fourth property, note that if $r=r_0$, then by the first property set $S_0$ forms an optimal \maxmove[2] for $G$. If $r=r^*=r_i$, then the second property gives us the result. If $r_i=r>r^*$, then this contradicts the third property, so the only way that $r=r_i$ is that $r=r^*$ or $r=r_0$.
\end{proof}


\begin{theorem}\label{thn:2part-weaker}
There is an $(r^*+1)$-approximation algorithm that solves the weighted instances of the \maxmove[2]~problem, where $r^*$ is the number of nodes moved from their initial partitions by an optimal solution. 
This approximation algorithm runs in time $O(n)T_{min-cut}(n)$, where $T_{min-cut}(n)$ is the running time of a $s$-$t$ min-cut algorithm on an $n$-node graph.
\end{theorem}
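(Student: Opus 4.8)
The plan is to treat the running time and the approximation ratio separately. The running time is essentially free: by Lemma~\ref{lem:findS}, Algorithm~\ref{alg:findS} returns the breakpoints $r_0>\cdots>r_h=0$ together with their associated sets $S_0,\ldots,S_h$ in time $O(n)\,T_{min-cut}(n)$, and the remaining steps of Algorithm~\ref{alg:2part} merely locate the largest breakpoint $r_j$ with $r_j\le r$, which costs $O(n)$. So the whole theorem reduces to proving $\delta(S_j)\le (r^*+1)\,\delta(S^*)$, where $S^*$ (with $|S^*|=r^*$) induces an optimal \maxmove[2] in $G$ and $S_j$ is the set the algorithm outputs.

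First I would clear the easy cases using Lemma~\ref{lem:citation_lemmas}. If $r\ge r_0$, Property~1 says the set $S_0$ returned by the algorithm is already optimal. If $r<r_0$ but $r^*$ coincides with some breakpoint, Property~2 says the index selected by the algorithm is exactly that breakpoint's index and the corresponding set is optimal. In both situations the approximation factor is $1\le r^*+1$, so the only case left is $r<r_0$ with $r^*\notin\{r_0,\ldots,r_h\}$. Here Property~3 guarantees that none of $r,r-1,\ldots,r^*-1$ is a breakpoint; combined with the monotonicity in Lemma~\ref{lem:main_properties}, this forces the selected breakpoint $r_j$ to satisfy $r_j<r^*$, while $r_{j-1}>r\ge r^*$ (in particular $j\ge 1$, so $S_{j-1}$ exists).

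The heart of the proof is this last case, handled by a parametric comparison at a single value of $\alpha$. I would apply Lemma~\ref{lem:alphai} with $i=j-1$: setting $\alpha_{j-1}=\frac{\delta(S_{j-1})-\delta(S_j)}{r_j-r_{j-1}}>0$, both $S_{j-1}$ and $S_j$ induce a min $s$-$t$ cut of $G^{\alpha_{j-1}}$, so the min $s$-$t$ cut value of $G^{\alpha_{j-1}}$ equals $\alpha_{j-1}r_j+\delta(S_j)=\alpha_{j-1}r_{j-1}+\delta(S_{j-1})$. The set $S^*$ moves $r^*$ nodes, so it induces in $G^{\alpha_{j-1}}$ a cut of value $\alpha_{j-1}r^*+\delta(S^*)$, which cannot fall below the minimum; this yields the two inequalities
\[
\delta(S_j)\;\le\;\delta(S^*)+\alpha_{j-1}(r^*-r_j),\qquad \alpha_{j-1}(r_{j-1}-r^*)\;\le\;\delta(S^*)-\delta(S_{j-1})\;\le\;\delta(S^*).
\]
Since $r_{j-1}-r^*\ge 1$, the second inequality gives $\alpha_{j-1}\le\delta(S^*)$; substituting into the first and using $r_j\ge 0$ gives $\delta(S_j)\le\delta(S^*)(1+r^*-r_j)\le (r^*+1)\,\delta(S^*)$, as wanted.

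The main obstacle I anticipate is not the algebra but keeping the case analysis honest: one must check that $j\ge 1$ so $S_{j-1}$ is defined (which is exactly why the case $r\ge r_0$ is separated out), that $S_{j-1}$ and $S_j$ are genuinely consecutive breakpoints so Lemma~\ref{lem:alphai} applies, that dropping $\delta(S_{j-1})\ge 0$ is legitimate (it is, assuming non-negative edge weights), and that the degenerate values $r^*=0$ and $r^*=r$ do not slip through — the former is always a breakpoint and handled by Property~2, the latter still satisfies $r_{j-1}>r^*$. I would also double-check that when $r^*$ is a breakpoint the index picked by Algorithm~\ref{alg:2part} matches the one named in Property~2, since the breakpoints are indexed in decreasing order of size.
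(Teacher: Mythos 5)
Your proposal is correct and follows essentially the same route as the paper: compute the breakpoints via Algorithm~\ref{alg:findS}, dispose of the cases where $r\ge r_0$ or $r^*$ is a breakpoint via Lemma~\ref{lem:citation_lemmas}, and in the remaining case $r_j<r^*\le r<r_{j-1}$ compare $S^*$ against the two consecutive breakpoint sets at the common parameter $\alpha_{j-1}$ of Lemma~\ref{lem:alphai} to get $\alpha_{j-1}\le\delta(S^*)$ and hence $\delta(S_j)\le(r^*+1)\delta(S^*)$. Your extra bookkeeping (checking $j\ge1$, non-negative weights, the degenerate values of $r^*$) only tightens a case analysis the paper states more loosely, and you correctly use the condition $r\ge r_0$ where the paper's text has a typo ($r\le r_0$).
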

\begin{proof}
We show that Algorithm \ref{alg:2part} provides the claimed guarantees. Let $G$ be the original graph of $n$-nodes. Let  $0\le h\le n-1$ and $\{r_0,\ldots,r_h\}$ be the set of breakpoints computed in step 1 of the algorithm, with their associated sets being $S_0, S_1,\ldots,S_h$, such that $|S_i|=r_i$. Let $S^*$ denote the set that induces the optimal \maxmove[2]~for graph $G$ with a cut value of $\delta(S^*)$, where $|S^*|=r^*$. 

If $r\le r_0$, then from the first property of Lemma \ref{lem:citation_lemmas} we have that $S_0$ provides an optimal $r$-move $2$-partitioning of $G$ which is the set that the algorithm outputs. So suppose that $r>r_0$.

From the second and fourth properties of Lemma \ref{lem:citation_lemmas}, 
we have that if $r^*$ or $r$ are among the breakpoints, then the algorithm finds the set inducing the optimal cut. Otherwise, if $r_{j}$ is the largest breakpoint smaller than $r$, then we have $r_{j}<r^*\le r<r_{j-1}$. This is because if $r_{j+1}<r^*< r_j\le r<r_{j-1}$, then by Lemma~\ref{lem:main_properties} set $S_j$ forms a smaller cut than $S^*$ in $G$, contradicting the optimality of the $s$-$t$ cut induced by $S^*$. The rest of this proof focuses on the case that  $r_{j}<r^*\le r<r_{j-1}$.

Let $\alpha=\alpha_{j-1}$ as defined in Lemma \ref{lem:alphai}, such that for $r_{j-1} = |S_{j-1}|, r_{j}=|S_{j}|$ we have $\alpha r_{j-1}+\delta(S_{j-1})=\alpha r_{j} +\delta(S_{j})$. 
Since $S_{j-1}$ and $S_{j}$ produce min $s$-$t$ cuts for $G^\alpha$, we have that $\alpha r^*+\delta(S^*)\ge \alpha r_{j-1}+\delta(S_{j-1})=\alpha r_{j} +\delta(S_{j})$, providing the following two inequalities
\begin{align}
    \delta(S^*)-\delta(S_{j-1})\ge \alpha(r_{j-1}-r^*), \label{eq:2apx-1}\\
    \delta(S_{j})\le \delta(S^*)+\alpha(r^*-r_{j}). \label{eq:2apx-2}
\end{align}
Since $r_{j-1}>r^*$, from inequality~\eqref{eq:2apx-1} we have $\delta(S^*)-\delta(S_{j-1})\ge \alpha(r_{j-1}-r^*)\ge \alpha$, providing $\alpha \le \delta(S^*)$. Combining this with inequality~\eqref{eq:2apx-2}, we have $\delta(S_{j})\le \delta(S^*)(1+r^*-r{j})\le \delta(S^*)(r^*+1)\le \delta(S^*)(r+1)$. Therefore, the output of Algorithm~\ref{alg:2part}, i.e., $S_j$, produces an $(r^*+1)$-approximation solution for the \maxmove[2]~problem. By Lemma \ref{lem:findS}, the running time of step 1 of Algorithm~\ref{alg:2part} is $O(n)T_{min-cut}(n)$. The rest of the algorithm works in linear time, providing an overall running time of $O(n)T_{min-cut}(n)$.
\end{proof}

\subsection{$\frac{r+1}{r+1-r^*}$-Approximation}\label{sec:completecase}

We show that Algorithm \ref{alg:2part} has an approximation factor of $\frac{r+1}{r+1-r^*}$.

\begin{restatable}{theorem}{thmtwopart}\label{thm:2part}Given a positive integer $r$, there is an approximation algorithm that solves the weighted $n$-node $m$-edge instances of the \maxmove[2] problem with an approximation factor $\frac{r+1}{r+1-r^*}$, where $r^*$ is the number of nodes moved from their initial partitions by an optimal solution. 
The algorithm runs in  $O(n)T_{min\text{-}cut}(m)$ time, where $T_{min\text{-}cut}(m)$ is the running time of the $s$-$t$ min-cut algorithm on an $m$-edge graph.
\end{restatable}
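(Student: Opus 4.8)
The plan is to show that Algorithm~\ref{alg:2part} --- the very same algorithm analyzed in Theorem~\ref{thn:2part-weaker} --- actually achieves the stronger ratio $\frac{r+1}{r+1-r^*}$, by reusing the structure of that proof but extracting sharper inequalities from the \emph{integrality} of the breakpoints and the nonnegativity of cut values. Throughout, recall that an optimal solution is feasible, so $r^*\le r$ and hence $\frac{r+1}{r+1-r^*}\ge 1$ with positive denominator.

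First I would dispatch the easy cases exactly as in the proof of Theorem~\ref{thn:2part-weaker}. Compute the breakpoints $r_0>\dots>r_h=0$ and their associated sets $S_0,\dots,S_h$ via Algorithm~\ref{alg:findS} (Lemma~\ref{lem:findS}). If $r\ge r_0$, the algorithm returns $S_0$, which is optimal by property~1 of Lemma~\ref{lem:citation_lemmas}, so the ratio is $1$. Otherwise $r<r_0$ and the algorithm returns $S_j$ for the index $j$ with $r_{j-1}>r\ge r_j$ (which exists with $1\le j\le h$ since $r_h=0\le r<r_0$). If $r\in\{r_0,\dots,r_h\}$ or $r^*\in\{r_0,\dots,r_h\}$, then by properties~2 and~4 of Lemma~\ref{lem:citation_lemmas} the returned set is optimal and the ratio is again $1$. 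Arguing exactly as in Theorem~\ref{thn:2part-weaker} (using Lemma~\ref{lem:main_properties} and the optimality of $S^*$), the only remaining possibility is $r_j<r^*\le r<r_{j-1}$, which is the case that requires real work.

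In this main case I would set $\alpha=\alpha_{j-1}=\bigl(\delta(S_j)-\delta(S_{j-1})\bigr)/(r_{j-1}-r_j)$ and invoke Lemma~\ref{lem:alphai}, so that $S_{j-1}$ and $S_j$ both induce min $s$-$t$ cuts of $G^{\alpha}$ and hence $\alpha r_{j-1}+\delta(S_{j-1})=\alpha r_j+\delta(S_j)$. Since the cut induced by the optimal set $S^*$ has value $\delta(S^*)+\alpha r^*$ in $G^{\alpha}$, it is at least the min $s$-$t$ cut value of $G^{\alpha}$, yielding the two inequalities $\delta(S^*)-\delta(S_{j-1})\ge\alpha(r_{j-1}-r^*)$ and $\delta(S_j)\le\delta(S^*)+\alpha(r^*-r_j)$, i.e.\ \eqref{eq:2apx-1} and \eqref{eq:2apx-2}. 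Now come the two sharpenings over Theorem~\ref{thn:2part-weaker}: because $r$ and $r_{j-1}$ are integers with $r_{j-1}>r$ we have $r_{j-1}\ge r+1$, so \eqref{eq:2apx-1} gives $\alpha(r+1-r^*)\le\delta(S^*)-\delta(S_{j-1})\le\delta(S^*)$ (cut values are nonnegative), hence $\alpha\le\frac{\delta(S^*)}{r+1-r^*}$; and because $r_j\ge 0$, \eqref{eq:2apx-2} gives $\delta(S_j)\le\delta(S^*)+\alpha r^*$. Substituting the bound on $\alpha$ yields $\delta(S_j)\le\delta(S^*)\bigl(1+\tfrac{r^*}{r+1-r^*}\bigr)=\tfrac{r+1}{r+1-r^*}\,\delta(S^*)$, the claimed approximation guarantee. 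The running time is $O(n)T_{\min\text{-}cut}(m)$ by Lemma~\ref{lem:findS} (which performs $O(n)$ min-cut computations) plus linear-time post-processing.

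I do not expect a serious obstacle: the heavy lifting is already contained in Lemmas~\ref{lem:findS}, \ref{lem:alphai}, \ref{lem:citation_lemmas} and in Theorem~\ref{thn:2part-weaker}. The only delicate point is the bookkeeping that reduces everything to the case $r_j<r^*\le r<r_{j-1}$ (making sure no optimal-ratio case is mislabeled), after which the improvement is precisely the observation that the crude estimates ``$r_{j-1}-r^*\ge 1$'' used in Theorem~\ref{thn:2part-weaker} can be replaced by ``$r_{j-1}-r^*\ge r+1-r^*$'', which is exactly what upgrades the factor from $r^*+1$ to $\frac{r+1}{r+1-r^*}$.
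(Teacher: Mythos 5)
Your proof is correct, but it reaches the bound by a genuinely different route than the paper. The paper does not re-examine the inequalities \eqref{eq:2apx-1}--\eqref{eq:2apx-2} at all: it introduces the LP formulation LP~\ref{tab:maxmove2-lp} for $k=2$, its Lagrangian relaxation LP~\ref{tab:lagrangian} with multiplier $\alpha'=\alpha_{j-1}$, observes that the latter is the min $s$-$t$ cut LP for $G^{\alpha'}$ (LP~\ref{tab:stcut}) and hence has integral optima, and then (Lemma~\ref{lem:helper}) takes a convex combination $(x^*,y^*)=(1-\gamma)(x^-,y^-)+\gamma(x^+,y^+)$ of the integer optima corresponding to $S_{j-1}$ and $S_j$, calibrated so that the move count equals $r^*$; optimality of this combination for LP~\ref{tab:maxmove2-lp} plus $y^*_e\ge\gamma y^+_e\ge\lambda y^+_e$ gives $\delta(S_j)\le\frac{1}{\lambda}\delta(S^*)$ with $\lambda=1-\frac{r^*}{r+1}$. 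Your argument instead stays entirely combinatorial: you keep the two inequalities already derived for Theorem~\ref{thn:2part-weaker} and simply replace the crude estimate $r_{j-1}-r^*\ge 1$ by $r_{j-1}-r^*\ge r+1-r^*$ (integrality of breakpoints) together with $\delta(S_{j-1})\ge 0$, which yields $\alpha\le\frac{\delta(S^*)}{r+1-r^*}$ and then the claimed factor. Both are valid; your version is shorter and avoids the LP machinery entirely (indeed, the same two-line computation with $r_{j-1}\ge\frac{r^*}{1-\lambda}$ recovers the full statement of Lemma~\ref{lem:helper}), while the paper's Lagrangian route has the mild advantage of bounding $\delta(S_j)$ against the \emph{fractional} optimum of LP~\ref{tab:maxmove2-lp} rather than against $\delta(S^*)$, and of making the parallel with Zhang's analysis of the \stcutatmost~problem explicit. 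One small caveat common to both arguments: the step $\delta(S^*)-\delta(S_{j-1})\le\delta(S^*)$ uses nonnegativity of edge weights, which the paper also assumes implicitly.
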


Note that when $k=2$, in LP~\ref{tab:maxmove-lp} we can have one variable $x_v$ instead of a vector $X_v$, and if the terminals are represented by $s$ and $t$, then we let $x_v=1$ stand for the case that node $v$ is assigned to the same partition as terminal $s$, and $x_v=0$ otherwise (i.e., the case that node $v$ is allocated the same partition as terminal $t$). By defining $y_{uv} = |x_u-x_v|$ we have $d(u,v)=y_{uv}$. Let cluster $1$ be the cluster that terminal $s$ is in and cluster $2$ be the cluster that terminal $t$ is in. Using these notation we can simplify LP~\ref{tab:maxmove-lp} to LP~\ref{tab:maxmove2-lp}, see Table~\ref{tab:maxmove2-lp}. Note that if node $v$ is initially in cluster $1$, that is, $\ell_v=1$, then $(1-x_v)$ is $1$ if $v$ has moved to cluster $2$ and it is zero otherwise. If node $v$ is initially in cluster $2$, that is, $\ell_v=2$, then $x_v$ is $1$ if $v$ has moved to cluster $1$ and zero otherwise. 

\begin{table}[hht]
    \centering
    \begin{tabular}{lllr}
    \toprule
        \text{Minimize:} & $\sum_{(u,v)\in E(G)}c_{uv}y_{uv}$ & &\\
        \text{Subject to:} & & &\\
        & $y_{uv}\ge x_u-x_v$ & $\forall (u,v)\in E(G)$ & (C1)\\
        & $y_{uv}\ge x_v-x_u$ & $\forall (u,v)\in E(G)$ & (C2)\\
        & $x_{s}=1$ &  & (C3)\\
        & $x_{t}=0$ &  & (C4)\\
        & $0\le x_v$ & $\forall v\in V(G)$ & (C5)\\
        & $\sum_{v\in V(G), \ell_v=1}(1-x_{v})+\sum_{v \in V(G), \ell_v=2}x_{v} \le r$ & & (C6)\\
        \bottomrule
    \end{tabular}
    \caption{$r$-move $2$-partitioning LP (LP~\ref{tab:maxmove2-lp})}
    \label{tab:maxmove2-lp}
\end{table}

Next, using a non-negative parameter $\alpha'$ we formulate the Lagrangian relaxation of LP~\ref{tab:maxmove2-lp} to obtain LP~\ref{tab:lagrangian}, see Table~\ref{tab:lagrangian}. We can show that LP~\ref{tab:lagrangian} has integer optimal solutions by showing that it is in fact the min $s$-$t$ cut LP for graph $G^{\alpha'}$. Recall that as discussed in Section~\ref{subsec:2partitioning}, to obtain a graph $G^\alpha$ we add $\alpha$ weighted edges $(s,v)$ and $(t,u)$ for each $v$ and $u$ with $\ell_v=1$ and $\ell_u=2$. If for each $v$ and $u$ we let $y_{sv}=1-x_v = |x_s-x_v|$ and $y_{tu} = x_u = |x_t-x_u|$, respectively, then we can rewrite  LP~\ref{tab:lagrangian} as LP~\ref{tab:stcut}, see Table~\ref{tab:stcut}. Since LP~\ref{tab:stcut} has integer optimal solutions \cite{wong1983combinatorial,zhang2016new}, LP~\ref{tab:lagrangian} also has integer optimal solutions.  

The rest of the proof follows similar to \cite{zhang2016new}, which we state here for completeness. Recall that $S^*$ is an optimal solution to the \maxmove[2]~problem with $|S^*|=r^*$ and $\delta(S^*)$ is the value of cut induced by $S^*$ in the original graph $G$.

\begin{table}[hht]
    \centering
    \begin{tabular}{lllr}
    \toprule
        \text{Minimize:} & $\alpha'(\sum_{v\in V(G), \ell_v=1}(1-x_{v})+\sum_{v\in V(G), \ell_v=2}x_{v}) + \sum_{(u,v)\in E(G)}c_{uv}y_{uv}$ & &\\
        \text{Subject to:} & & &\\
        & $y_{uv}\ge x_u-x_v$ & $\forall (u,v)\in E(G)$ & (C1)\\
        & $y_{uv}\ge x_v-x_u$ & $\forall (u,v)\in E(G)$ & (C2)\\
        & $x_{s}=1$ &  & (C3)\\
        & $x_{t}=0$ &  & (C4)\\
        & $0\le x_v$ & $\forall v\in V(G)$ & (C5)\\
        \bottomrule
    \end{tabular}
    \caption{Lagrangian of LP~\ref{tab:maxmove2-lp} (LP~\ref{tab:lagrangian})}
    \label{tab:lagrangian}
\end{table}

\begin{table}[hht]
    \centering
    \begin{tabular}{lllr}
    \toprule
        \text{Minimize:} & $ \sum_{(u,v)\in E(G^{\alpha'})}c_{uv}y_{uv}$ & &\\
        \text{Subject to:} & & &\\
        & $y_{uv}\ge x_u-x_v$ & $\forall (u,v)\in E(G^{\alpha'})$ & (C1)\\
        & $y_{uv}\ge x_v-x_u$ & $\forall (u,v)\in E(G^{\alpha'})$ & (C2)\\
        & $x_{s}=1$ &  & (C3)\\
        & $x_{t}=0$ &  & (C4)\\
        & $0\le x_v$ & $\forall v\in V(G^{\alpha'})$ & (C5)\\
        \bottomrule
    \end{tabular}
    \caption{Min $s$-$t$ cut  LP for $G^{\alpha'}$ (LP~\ref{tab:stcut})}
    \label{tab:stcut}
\end{table}


\begin{lemma}\label{lem:helper}
Let $S_j$ be the set of nodes that Algorithm \ref{alg:2part} outputs. If for some $\lambda>0$ we have $|S_{j-1}|\ge\frac{1}{1-\lambda}r^*$, then $\delta(S_{j})\le \frac{1}{\lambda}\delta(S^*)$.
\end{lemma}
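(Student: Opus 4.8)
First I would dispose of the degenerate case: if $r\ge r_0$ then Algorithm~\ref{alg:2part} returns $S_0$ and there is no set $S_{j-1}$, so the hypothesis is only meaningful when $r<r_0$; in that regime the algorithm outputs $S_j$ with $r_{j-1}>r\ge r_j$ and $1\le j\le h$ (and in particular $r^*\le r<r_{j-1}$ by feasibility of $S^*$). The heart of the argument is the same parametric comparison used to prove Theorem~\ref{thn:2part-weaker}, carried out so as to control the quantity $\alpha r^*$. Concretely, I would set $\alpha:=\alpha_{j-1}$ from Lemma~\ref{lem:alphai}, so that both $S_{j-1}$ and $S_j$ induce min $s$-$t$ cuts of $G^\alpha$ and hence $\alpha r_{j-1}+\delta(S_{j-1})=\alpha r_j+\delta(S_j)$. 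Moving the nodes of $S^*$ severs precisely the $r^*=|S^*|$ weight-$\alpha$ edges that $G^\alpha$ adds at those nodes and none of the others, so the cut induced by $S^*$ in $G^\alpha$ has value $\alpha r^*+\delta(S^*)$; minimality of $S_{j-1}$ and $S_j$ in $G^\alpha$ then gives $\alpha r_{j-1}+\delta(S_{j-1})=\alpha r_j+\delta(S_j)\le \alpha r^*+\delta(S^*)$.

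From $\alpha r_{j-1}+\delta(S_{j-1})\le \alpha r^*+\delta(S^*)$ I would read off $\delta(S^*)-\delta(S_{j-1})\ge \alpha(r_{j-1}-r^*)$, and from $\alpha r_j+\delta(S_j)\le \alpha r^*+\delta(S^*)$ that $\delta(S_j)\le \delta(S^*)+\alpha(r^*-r_j)$. Next I would feed in the hypothesis: $r_{j-1}=|S_{j-1}|\ge \tfrac{1}{1-\lambda}r^*$ yields $r_{j-1}-r^*\ge \tfrac{\lambda}{1-\lambda}r^*$, and since cut values are nonnegative ($\delta(S_{j-1})\ge 0$), the first inequality becomes $\delta(S^*)\ge \alpha\cdot\tfrac{\lambda}{1-\lambda}r^*$, i.e.\ $\alpha r^*\le \tfrac{1-\lambda}{\lambda}\,\delta(S^*)$. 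Substituting this into the second inequality and discarding the nonnegative term $\alpha r_j$ gives $\delta(S_j)\le \delta(S^*)+\alpha r^*\le \bigl(1+\tfrac{1-\lambda}{\lambda}\bigr)\delta(S^*)=\tfrac{1}{\lambda}\delta(S^*)$, which is the claim.

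I do not anticipate a genuine obstacle; once Lemma~\ref{lem:alphai} is in hand this is a few lines of algebra, and it is exactly the same ledger already used in Theorem~\ref{thn:2part-weaker}, only bookkept to isolate $\alpha r^*$ instead of bounding it by $1$. The only points that warrant a sentence of justification are: (i) the statement is meant for $0<\lambda<1$, where $1-\lambda>0$ makes the manipulation legal (for $\lambda\ge 1$ the hypothesis degenerates and the conclusion can fail), which is how the lemma will be invoked; (ii) $\alpha=\alpha_{j-1}\ge 0$, which holds because $r_{j-1}>r_j$ forces $\delta(S_{j-1})<\delta(S_j)$ by Lemma~\ref{lem:main_properties}, so the defining ratio of $\alpha_{j-1}$ is a quotient of two negative numbers; and (iii) the identity that $S^*$ has cost exactly $\alpha r^*+\delta(S^*)$ in $G^\alpha$, which is immediate from the definitions of $G^\alpha$ and of the cut induced by a set of moved nodes.
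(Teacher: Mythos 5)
Your proof is correct, but it takes a genuinely different route from the paper's. The paper proves this lemma through the Lagrangian relaxation LP~\ref{tab:lagrangian}: it takes the integer optimal solutions $(x^-,y^-)$ and $(x^+,y^+)$ corresponding to $S_{j-1}$ and $S_j$, forms the convex combination $(x^*,y^*)$ whose move count is exactly $r^*$, argues that $(x^*,y^*)$ is then optimal for the budgeted LP and hence has objective at most $\delta(S^*)$, and finally uses $\gamma\ge\lambda$ together with $y_e^*\ge\gamma y_e^+$ to conclude $\delta(S_j)\le\frac{1}{\lambda}\delta(S^*)$. You instead bypass the LP entirely and rerun the elementary parametric-cut ledger from Theorem~\ref{thn:2part-weaker} --- the equality $\alpha r_{j-1}+\delta(S_{j-1})=\alpha r_j+\delta(S_j)\le\alpha r^*+\delta(S^*)$ from Lemma~\ref{lem:alphai} and the feasibility of $S^*$ in $G^\alpha$ --- but bookkeep it to isolate $\alpha r^*\le\frac{1-\lambda}{\lambda}\delta(S^*)$ rather than the cruder bound $\alpha\le\delta(S^*)$ used there; the algebra then closes exactly as you write. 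Your version is shorter, uses only facts already established (Lemmas~\ref{lem:main_properties} and \ref{lem:alphai}, nonnegativity of cut values and of $\alpha_{j-1}$), and makes the restriction $0<\lambda<1$ explicit, which is indeed how the lemma is invoked with $\lambda=1-\frac{r^*}{r+1}$. What the paper's LP route buys in exchange for its extra machinery is a slightly stronger statement: its chain bounds $\delta(S_j)$ by $\frac{1}{\lambda}$ times the \emph{fractional} optimum of LP~\ref{tab:maxmove2-lp}, whereas your argument compares only against the integer optimum $\delta(S^*)$; since the lemma as stated asks only for the latter, both arguments suffice.
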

\begin{proof}
By Lemma~\ref{lem:alphai} we have that for $\alpha':=\alpha_{j-1}=\frac{\delta(S_{j})-\delta(S_{j-1})}{r_{j-1}-r_{j}}$, both $S_j$ and $S_{j-1}$ produce min $s$-$t$ cuts for graph $G^{\alpha'}$. LP~\ref{tab:lagrangian} and LP~\ref{tab:stcut} provide integer optimal solutions to the min $s$-$t$ cut problem on graph $G^{\alpha'}$. Therefore, the $s$-$t$ cuts produced by sets $S_j$ and $S_{j-1}$ are both optimal solutions of LP~\ref{tab:lagrangian} and LP~\ref{tab:stcut} for parameter $\alpha'$. 
Let the LP~\ref{tab:lagrangian} integer solutions associated with sets $S_{j-1}$ and $S_{j}$ be $(x^-,y^-)$ and $(x^+,y^+)$, respectively. 
Any linear combination of these two solutions is also an optimal solution of LP~\ref{tab:lagrangian}. We choose $\gamma\in (0,1)$ such that 
\begin{equation}\label{eq:gamma}
(1-\gamma)(\sum_{v, \ell_v=1}(1-x^-_{v})+\sum_{v, \ell_v=2}x^-_{v})+\gamma(\sum_{v, \ell_v=1}(1-x^+_{v})+\sum_{v, \ell_v=2}x^+_{v}) = r^*.
\end{equation}
Note that since $r_{j-1} = \sum_{v, \ell_v=1}(1-x^-_{v})+\sum_{v, \ell_v=2}x^-_{v}$, $r_{j}=\sum_{v, \ell_v=1}(1-x^+_{v})+\sum_{v, \ell_v=2}x^+_{v}$ and $r_{j} \le r < r_{j-1}$, such $\gamma$ exists. Let $(x^*,y^*)$ be the solution obtained by taking a linear combination of $(x^-,y^-)$ and $(x^+,y^+)$ with weight $\gamma$. More formally,
\begin{align}
(x^*,y^*) &= (1-\gamma)(x^-,y^-)+\gamma(x^+,y^+) \label{eq:gamma2-1}\\
r^* &= \sum_{v, \ell_v=1}(1-x_{v}^*)+\sum_{v, \ell_v=2}x^*_{v}\label{eq:gamma2-2}
\end{align}
Since both $(x^-,y^-)$ and $(x^+,y^+)$ are optimal solutions of LP~\ref{tab:lagrangian}, it follows that $(x^*,y^*)$ is also an optimal solution to LP~\ref{tab:lagrangian}. We show that $(x^*,y^*)$ is an optimal solution to LP~\ref{tab:maxmove2-lp}, as well. To see this, suppose that $(x',y')$ is an optimal solution to LP~\ref{tab:maxmove2-lp}, while $(x^*,y^*)$ is not an optimal solution to LP~\ref{tab:maxmove2-lp}. Then, we have $\sum_{v, \ell_v=1}(1-x_{v}')+\sum_{v, \ell_v=2}x_{v}'\le r^*$ and $\sum_{e\in E(G)} c_ey_e'< \sum_{e\in E(G)} c_ey^*_e$. Combining this with equation~\eqref{eq:gamma2-2} we have
\begin{align*}
\alpha'(\sum_{v, \ell_v=1}(1-x_{v}')+\sum_{v, \ell_v=2}x_{v}') + \sum_{e\in E(G)} c_ey_e'<\alpha'(\sum_{v, \ell_v=1}(1-x_{v}^*)+\sum_{v, \ell_v=2}x^*_{v})+\sum_{e\in E(G)} c_ey^*_e, 
\end{align*}
implying that $(x',y')$ is a solution to LP~\ref{tab:lagrangian} with a better objective value than $(x^*,y^*)$, which is a contradiction. Thus, $(x^*,y^*)$ is an optimal solution to LP~\ref{tab:maxmove2-lp} and we have that 
\begin{equation}\label{eq:helper3}
    \sum_{e\in E(G)}c_ey_e^*\le \delta(S^*).
\end{equation}
By equation \eqref{eq:gamma}, we have that $|S_{j-1}|=\sum_{v, \ell_v=1}(1-x_{v}^-)+\sum_{v, \ell_v=2}x_{v}^-\le r^*/(1-\gamma)$. By the lemma's assumption, we have $|S_{j-1}|\ge r^*/(1-\lambda)$; therefore, it holds that $\gamma\ge \lambda$. Using equation \eqref{eq:gamma2-1}, we have that $y_e^*\ge \gamma y^+_e\ge \lambda y_e^+$, and so 
$$
\delta(S_{j})=\sum_{e\in E(G)} c_ey_e^+\le \frac{1}{\lambda}\sum_{e\in E(G)} c_ey_e^*\le \frac{1}{\lambda}\delta(S^*),
$$
where the last inequality uses inequality \eqref{eq:helper3}.
\end{proof}

To finish the proof of Theorem~\ref{thm:2part}, let $\lambda = 1-\frac{r^*}{r+1}$. Then $|S_{j-1}|\ge \frac{1}{1-\lambda}r^*= r+1$, which holds by the construction of Algorithm~\ref{alg:2part}. Therefore, by Lemma \ref{lem:helper} we have $\delta(S_{j})\le \frac{1}{\lambda}\delta(S^*)$.
Note that $\frac{1}{\lambda}=\frac{r+1}{r+1-r^*}$, concluding the proof of the theorem.
\section{HARDNESS RESULTS}\label{sec:hardness}
In this section, we state our hardness results, including lower bounds on the integrality gap of LP~\ref{tab:maxmove-lp} and our W[1]-hardness findings.
\subsection{Integrality Gap of the  $r$-Move $k$-Partitioning Linear Program}\label{sec:integrality-gap}



The following lemma discusses the integrality gap of the \maxmove~linear program (i.e., LP~\ref{tab:maxmove-lp}).
\begin{lemma}
    \label{lem:integrality_gap}
    LP \ref{tab:maxmove-lp} has an integrality gap of at least $r+1$.
\end{lemma}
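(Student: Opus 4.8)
The plan is to exhibit a family of instances of LP~\ref{tab:maxmove-lp} on which the best integral labeling costs a factor $r+1$ (or $r+1-o(1)$) more than the LP optimum. I would work with $k=2$, terminals $s,t$, and $r+1$ non-terminal nodes $v_0,v_1,\dots,v_r$, all placed in the initial partition of $s$ (so $\ell_{v_j}=1$) with $t$ in partition $2$. The edges are a ``spider'': one weight-$1$ edge $(v_0,t)$ and, for each $i\in\{1,\dots,r\}$, a weight-$1$ edge $(v_0,v_i)$, so the hub $v_0$ carries one leg into partition $2$ and is tied to $r$ leaves sitting in partition $1$. (If the initial labeling must induce connected parts, add a weight-$\epsilon$ edge $(s,v_0)$; this perturbs every quantity below by $O(\epsilon)$ and drops out as $\epsilon\to 0$, and can be omitted otherwise.)

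For the LP value I would plug in the symmetric fractional point $\bar X_{v_j}=\bigl(\tfrac1{r+1},\tfrac r{r+1}\bigr)$ for every $j\in\{0,\dots,r\}$, together with the forced $\bar X_s=\mathbf{e}_1$, $\bar X_t=\mathbf{e}_2$. Then constraint (C7) reads $\sum_j(1-\bar X^1_{v_j})=(r+1)\cdot\tfrac r{r+1}=r$, so it holds with equality; each edge $(v_0,v_i)$ has coincident endpoints and contributes $0$, while $d_{\bar X}(v_0,t)=\bar X^1_{v_0}=\tfrac1{r+1}$, so the objective of this solution is $\tfrac1{r+1}$ and the LP optimum is at most $\tfrac1{r+1}$ (equality can be confirmed with the triangle-inequality argument behind Lemma~\ref{lem:size}, but only the upper bound is needed). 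For the integral value, take any labeling $\ell^*$ fixing $\ell^*_s=1,\ \ell^*_t=2$ and changing at most $r$ non-terminals: if $\ell^*_{v_0}=1$ then edge $(v_0,t)$ is cut; if $\ell^*_{v_0}=2$ then one move is already spent on $v_0$, so at most $r-1$ of the $r$ leaves could have moved, hence some leaf $v_i$ still has $\ell^*_{v_i}=1$ and edge $(v_0,v_i)$ is cut. In both cases the cut value is at least $1$, and the initial labeling attains exactly $1$, so the integral optimum is $1$.

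Dividing, the integrality gap is at least $1/(1/(r+1))=r+1$; with the $\epsilon$-edge variant it is at least $\tfrac{r+1}{1+O(\epsilon)}$, which still gives $\ge r+1$ since the integrality gap is a supremum over instances. The substantive point — and the only place requiring care — is the integral lower bound: it hinges on recognizing that the $r$ leaf edges ``pin'' the hub, so relocating $v_0$ out of partition $1$ forces relocating all $r$ leaves, which the budget $r$ forbids; this is exactly the obstruction that the LP sidesteps by nudging each of the $r+1$ nodes only a $\tfrac1{r+1}$ fraction of the way. The rest (the two displayed evaluations) is routine arithmetic; designing the pinning gadget is the creative step, and making a connected instance attain the ratio exactly $r+1$ is genuinely impossible because any positive-weight edge from $s$ into the movable side adds to the LP objective, which is why the limiting family is used.
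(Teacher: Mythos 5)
Your proof is correct and follows essentially the same approach as the paper: a tree gadget of $r+1$ non-terminal nodes that must all move together (which the budget $r$ forbids integrally), while the LP gives all $r+1$ nodes identical fractional coordinates shifted by $r/(r+1)$ — saturating (C7) — so that the internal ``pinning'' edges cost nothing and the single detector edge to the opposite terminal shrinks to $1/(r+1)$ of its weight. The only differences are cosmetic: the paper's gadget is a path with an $\epsilon$-weight detector edge and the movable group initially sits on the $t$-side, whereas yours is a star whose hub is tied to $t$; both yield the ratio $r+1$ by the same mechanism (and note the paper's own instance is also disconnected, so your $\epsilon$-edge connectivity fix is an unneeded precaution).
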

\begin{proof}
Let $P_1: (v_1,\ldots,v_{r+2})$ be a path in graph $G=(V,E)$ such that node $v_1$ is in partition $1$ and the rest of the nodes on $P_1$ belong in partition $2$, see Figure \ref{fig:max-ip}. 
Let the weight of the edge $(v_1,v_2)$ be $\epsilon$ and the weight of all of the other edges on $P_1$ be one. Let $P_2: (v_{r+3},\ldots,v_n)$ be a path 
with edges all of weight one. All nodes of $P_2$ belong in partition $2$. Let $v_1$ and $v_n$ be the terminals for partition $1$ and $2$, respectively. The defined partitions introduce a cut that has a weight of $\epsilon$. Moving any node from $P_2$ to partition $1$ only increases the cut value, so any optimal integer solution keeps all the nodes in $P_2$ in their initial partition. Moving any proper subset of nodes of $P_1$ to partition $1$ increases the cut value to at least one. Furthermore, one cannot move all of the nodes in $P_1$ to partition $1$ without violating the move constraint. So an optimal integer solution to LP \ref{tab:maxmove-lp} does not move any node in this graph and has a cut value of $\epsilon$.

On the other hand, any (fractional) optimal solution of LP \ref{tab:maxmove-lp} on this graph has a cut value of at most $\frac{\epsilon}{r+1}$: For any node $v_i$, $i\in\{2,\ldots,r+2\}$, this is achieved by setting $X_{v_i}^1=\frac{r}{r+1}$ and $X_{v_i}^2=\frac{1}{r+1}$. Let the nodes in $P_2$ stay in partition 2, i.e., $X_{v_i}^1 = 0, X_{v_i}^2 = 1$ for $i\in \{r+3,\ldots,n\}$. The value of the cut in this solution is $\frac{\epsilon}{r+1}$. So the integrality gap is at least $r+1$. Note that this proof can be generalized to any $k>2$ by adding dummy partitions with singletons in them.
\end{proof}

\begin{figure}
    \centering
    \includegraphics{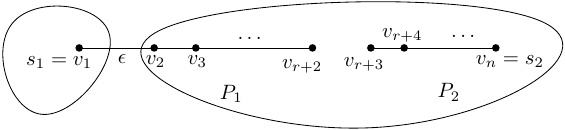}
    \caption{The example graph for the proof of Lemma~\ref{lem:integrality_gap}.}
    \label{fig:max-ip}
\end{figure}
Since the construction of Lemma \ref{lem:integrality_gap} is in fact a \stcutatmost construction, we have the following corollary.
\begin{corollary}
The linear program of the \stcutatmost~problem stated in \cite{zhang2016new} has an integrality gap of at least $r+1$.
\end{corollary}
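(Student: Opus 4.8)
The corollary follows almost immediately from the construction already used in the proof of Lemma~\ref{lem:integrality_gap}, so the plan is essentially to verify that the same instance is a legitimate \stcutatmost~instance and that the two linear programs coincide on it. First I would recall that the \stcutatmost~problem is exactly the $(r-1)$-move $2$-partitioning problem in which every non-terminal node starts in the partition of $t$; equivalently, one asks for a minimum $s$-$t$ cut whose $s$-side has at most $r$ nodes. So I would reindex the construction of Lemma~\ref{lem:integrality_gap}: take $s=v_1$ and $t=v_n$, declare all of $v_2,\dots,v_n$ to be initially on the $t$-side, and set the size bound on the $s$-side to be $r+1$ (so that at most $r$ non-terminal nodes may join $s$). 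The path $P_1=(v_1,\dots,v_{r+2})$ with the cheap edge $(v_1,v_2)$ of weight $\epsilon$ and all remaining edges of weight $1$, together with the dangling path $P_2$ of unit-weight edges, is then verbatim the same graph.

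Next I would re-run the two observations from Lemma~\ref{lem:integrality_gap} in this language. For the integral side: pulling any node of $P_2$ to the $s$-side strictly increases the cut, and pulling any nonempty proper prefix of $P_1$ to the $s$-side raises the cut value to at least $1$, while pulling all of $v_2,\dots,v_{r+2}$ over would put $r+2$ nodes on the $s$-side, violating the size constraint. Hence the optimal integral $s$-$t$ cut of size at most $r+1$ is the trivial one cutting only the edge $(v_1,v_2)$, of value $\epsilon$. For the fractional side: set $x_{v_i}=\tfrac{r}{r+1}$ (in the single-variable encoding, i.e.\ weight $\tfrac{r}{r+1}$ towards $s$) for $i\in\{2,\dots,r+2\}$ and $x_{v_i}=0$ for $i\ge r+3$; this is feasible for the size constraint since $\sum_{i=2}^{r+2}\tfrac{r}{r+1}=r$, and it makes every edge of $P_1$ other than $(v_1,v_2)$ have zero contribution while $(v_1,v_2)$ contributes $\epsilon\cdot\tfrac{1}{r+1}$, giving LP value $\tfrac{\epsilon}{r+1}$.

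Finally I would note the one genuine content point: that the linear program written down by Zhang~\cite{zhang2016new} for \stcutatmost~is, on this instance, identical to LP~\ref{tab:maxmove2-lp} (the two-partition specialization of LP~\ref{tab:maxmove-lp}) — the objective $\sum c_{uv}y_{uv}$ with $y_{uv}\ge|x_u-x_v|$, the terminal fixings $x_s=1,x_t=0$, and the single size constraint $\sum_{v\ \text{on the }t\text{-side initially}}x_v\le r+1$. This is routine to check from the two formulations, and it is the only step that is not a direct quotation of Lemma~\ref{lem:integrality_gap}; I expect it to be the main (very mild) obstacle, mostly a matter of matching notation between the two papers. Combining the three points, the ratio of the integral optimum to the LP optimum on this instance is at least $\epsilon/(\epsilon/(r+1))=r+1$, which proves the corollary.
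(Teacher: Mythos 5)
Your proposal is correct and follows essentially the same route as the paper: the paper's proof of this corollary is a one-line observation that the Lemma~\ref{lem:integrality_gap} construction is already a \stcutatmost~instance (all non-terminals initially on the $t$-side, a budget on how many may join $s$), and you simply spell out that reinterpretation, re-verify the integral lower bound of $\epsilon$ and the fractional solution of value $\epsilon/(r+1)$, and note the routine matching with Zhang's LP. The only (harmless) wrinkle is the off-by-one between the size bound on the $s$-side and the move parameter, which you handle explicitly and which the paper itself glosses over.
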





\subsection{The $r$-Move $k$-Partitioning Problem is $W[1]$-hard}
We begin by introducing two new notations here that will be useful for the proofs of this section and then discuss the computational complexity of the \maxmove~problem. For a graph $G$ and any subsets of its nodes $C_1, C_2\subseteq V(G)$, let $E(C_1,C_2)$ refer to the set of edges between subsets $C_1$ and $C_2$.  For any node $v\in C_1$, we define $\mathrm{deg}_{C_1}(v)$ as the number of neighbors of node $v$ that are in $C_1$; that is, the number of nodes in $C_1$ that share an edge with node $v$.

\begin{restatable}{theorem}{reductionone}\label{thm:reduction-maxmove}
Given an $n$-node graph $G$ as an instance of the densest $r$-subgraph problem, there is an $O(n^2)$-node graph $G'$ with initial partitions $\{A,B\}$ 
such that the densest $r$-subgraph of $G$ has $m^*$ edges if and only if the optimal solution of the \maxmove[2]~problem on $G'$ reduces the initial cut value by $2m^*$. 
\end{restatable}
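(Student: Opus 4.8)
The plan is a direct gadget reduction that encodes ``edges inside the moved set'' as ``weight removed from the cut''. I would build $G'$ as follows. Put all $n$ vertices $v_1,\dots,v_n$ of $G$ into partition $B$ together with the terminal $t$, and put the terminal $s$ alone into partition $A$. For each edge $(v_i,v_j)\in E(G)$ place a single unit-weight edge $(v_i,v_j)$ in $G'$. For each vertex $v_i$, attach it to $s$ by $\deg_G(v_i)$ internally disjoint length-two paths $s\!-\!x\!-\!v_i$, where each intermediate node $x$ is a fresh node placed in partition $A$ (this keeps $G'$ unweighted; it also accounts for the $O(n^2)$ bound, since the number of auxiliary nodes is $\sum_i\deg_G(v_i)=2|E(G)|=O(n^2)$). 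The only non-terminal nodes that can meaningfully move are the $v_i$ (to $A$) and the auxiliary $x$-nodes (to $B$). In the initial partitioning each length-two path $s\!-\!x\!-\!v_i$ contributes exactly one cut edge (the edge incident to $v_i$), so the initial cut value is $\sum_i\deg_G(v_i)=2|E(G)|$.

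The core step is a single cut computation. First I would argue that an optimal solution never moves an auxiliary node: for a path $s\!-\!x\!-\!v_i$, if $v_i$ ends in $B$ then that path contributes exactly one cut edge whether or not $x$ is moved, and if $v_i$ ends in $A$ then moving $x$ changes the path's contribution from $0$ to $2$; so removing any auxiliary move never increases the cut and only frees budget. Assuming no auxiliary node is moved, let $S\subseteq\{v_1,\dots,v_n\}$ be the set of vertices moved to $A$, so $|S|\le r$. Then the cut edges of $G'$ are exactly: the path edge incident to each $v_i\notin S$ (contributing $\deg_G(v_i)$) and the edges of $G$ with exactly one endpoint in $S$ (contributing $|E(S,V(G)\setminus S)|$). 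Using $\sum_{v_i\in S}\deg_G(v_i)=2\,e_G(S)+|E(S,V(G)\setminus S)|$, where $e_G(S)$ denotes the number of edges of $G$ with both endpoints in $S$, the final cut value is
\[
\Big(\sum_{i}\deg_G(v_i)-\sum_{v_i\in S}\deg_G(v_i)\Big)+\bigl|E(S,V(G)\setminus S)\bigr| \;=\; 2|E(G)|-2\,e_G(S),
\]
so moving $S$ reduces the initial cut by exactly $2\,e_G(S)$; the boundary term $|E(S,V(G)\setminus S)|$ cancels the degree defect, and this cancellation is the whole point of attaching $v_i$ to $s$ with multiplicity precisely $\deg_G(v_i)$.

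With this in hand both directions are short. For the ``if'' direction, if the densest $r$-subgraph of $G$ has $m^*$ edges, witnessed by a vertex set $S^*$ with $|S^*|\le r$ (we may take $|S^*|=\min(r,n)$, since adding vertices to a subgraph never decreases its edge count), then moving $S^*$ is a feasible $r$-move, and by the computation it reduces the cut by $2m^*$; hence the optimum reduces it by at least $2m^*$. For the converse, every feasible solution is dominated (same or smaller cut, no extra budget used) by one that moves only original vertices, so the largest possible reduction is $2\max_{|S|\le r}e_G(S)=2m^*$; therefore the optimum reduces the cut by exactly $2m^*$, and this value equals twice the edge count of the densest $r$-subgraph. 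I expect the main obstacle to be precisely this last ``no cleverer solution'' step — carefully ruling out that moving a mixture of original and subdivision nodes (or only subdivision nodes) could beat moving a densest vertex set — together with checking the degenerate cases ($G$ edgeless, or $r\ge n$). The rest is bookkeeping once the multiplicity-$\deg_G(v_i)$ gadget is fixed.
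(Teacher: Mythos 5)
Your proof is correct, but your gadget is genuinely different from the paper's. The paper keeps a copy of $G$ as partition $A$ (with a terminal $t$ joined to all of $A$), makes partition $B$ a clique on $2n^2+n$ nodes containing one designated node per edge of $G$, joins each edge-node to the two endpoints of its edge, and joins $s$ to everything; the huge clique is what forbids moving $B$-nodes, and the edge-nodes are what make the cut drop by exactly $2|E(X)|$ when $X$ is moved. You instead encode the degree of each $v_i$ by $\deg_G(v_i)$ subdivided parallel edges to $s$ and obtain the identical cancellation $\sum_{v\in S}\deg_G(v)=2e_G(S)+|E(S,V\setminus S)|$, replacing the clique-penalty argument with a clean local exchange showing that un-moving any subdivision node never increases the cut and only frees budget; both proofs then reduce to the same identity ``cut reduction $=2e_G(S)$.'' Your construction is smaller ($n+2|E(G)|+2$ nodes versus $\Theta(n^2)$), unweighted, and avoids the clique entirely, while the paper's version keeps both initial partitions connected (your partition $B=\{t\}\cup V(G)$ has an isolated terminal $t$ and need not be connected, which clashes with the paper's formal definition of a $k$-cut as producing $k$ connected components; this is harmless and fixable by adding weight-$0$ edges from $t$ to every $v_i$, and the paper itself uses isolated singleton terminals in its $k>2$ reduction, but you should say it). The paper's gadget also transfers almost verbatim to its companion hardness proof for the \stcutatmost~problem, which yours would need to be re-adapted for. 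Everything else in your argument — the domination of mixed solutions by vertex-only solutions, the equivalence of ``exactly $r$'' and ``at most $r$'' via monotonicity, and the degenerate cases — is handled correctly.
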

\begin{proof}
We reduce the densest $r$-subgraph problem to the \maxmove[2]~problem. Let $G$ be an instance of the densest $r$-subgraph problem that has $n$ nodes, $m$ edges and unit edge weights. Using $G$, we construct a graph $G'$ as an instance of the \maxmove[2]~problem. Graph $G'$ consists of two subgraphs $A$ and $B$, where $A$ is an exact copy of graph $G$ (with $n$ nodes and $m$ edges) and $B$ is a clique of size $2n^2+n$. Furthermore, $m$ nodes in $B$ are reserved for the $m$ edges of $A$; in other words, for each edge $e\in E(A)$ there exists a node $e\in V(B)$. Then, the endpoints of each edge $e=(u,v)\in E(A)$ gets connected to node $e\in V(B)$. Next, we add a terminal node $t$ to subgraph $A$ and connect it to all nodes in $A$. Similarly, a terminal node $s$ is added to subgraph $B$ and it gets connected to all nodes in $V(G')$. Note that after the addition of node $s$, subgraph $B$ becomes a $2n^2+n+1$ clique. This finishes the construction of graph $G'$, see Figure \ref{fig:reduction}. Let the \emph{initial cut} induced on $G'$ split the graph into two partitions $A$ and $B$. This cut has a value $2m+n< n^2+1$.

\begin{figure}
    \centering
    \includegraphics{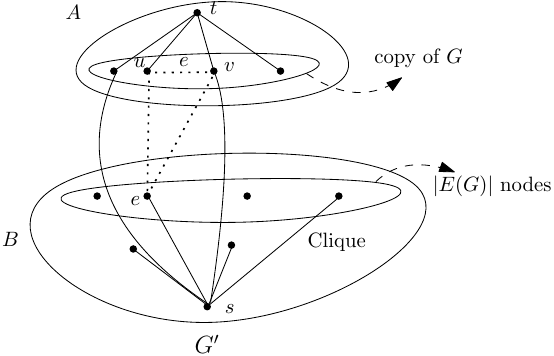}
    \caption{Reduction graph $G'$ created from a densest $r$-subgraph instance $G$. The solid-line edges do not depend on $G$. For each edge $e=(u,v)$ in $G$, nodes $u$ and $v$ in $A$ are connected to a node $e$ in $B$.}
    \label{fig:reduction}
\end{figure}

First, we show that if we move a subset of nodes $X\subseteq V(A)\setminus \{t\}$ (along with all edges incident to the nodes in $X$) to subgraph $B$, then the cut value reduces by $2\big|E(X)\big|$. Note that, if we move a set of nodes $X\subseteq V(A)\setminus \{t\}$
to subgraph $B$, then the cut value changes by $\big|E\big(X,V(A)\setminus X\big)\big|-\big|E\big(X,V(B)\big)\big|$. Since subgraph induced by $V(A)\setminus \{t\}$ is exactly the same as graph $G$, we have
\begin{align*}
   \big|E\big(X,V(A)\setminus X\big)\big| &= \big|E\big(X,\{t\}\big)\big|+\big|E\big(X,V(A)\setminus (X\cup \{t\})\big)\big| \\
   &= \big|X\big| + \sum_{v\in X} \big[\mathrm{deg}_{G}(v)-\mathrm{deg}_X(v)\big].
\end{align*}
Next, we have
\begin{align*}
    \big|E\big(X,V(B)\big)\big| &= \big|E\big(X,\{s\}\big)\big|+\big|E\big(X,V(B)\setminus\{s\}\big)\big| \\
    &= \big|X\big|+\sum_{v\in X} \mathrm{deg}_G(v);
\end{align*}
thus, $\big|E\big(X,V(A)\setminus X\big)\big|-\big|E\big(X,V(B)\big)\big| = -\sum_{v\in X} \mathrm{deg}_X(v)=-2\big|E(X)\big|$.

%

Now, let $X^*$ be the densest $r$-subgraph of $G$. If we move the copy of $X^*$ in $A$ from $A$ to $B$, then the cut value reduces by $2m^*=2\big|E(X^*)\big|$, as shown above. 
To prove the other direction, we show that if there exists a set of nodes $X\subseteq V(G')$ such that $|X|\le r$ and moving each node in $X$ to a partition different from the initial one it was assigned to reduces the cut by at least $2m^*$, then there exists an $r$-subgraph in $G$ with $m^*$ edges. To see this, first note that $X$ cannot have any nodes from subgraph $B$. This is because the resulting cut would have a value at least $2n^2+1$ which is bigger than the value of the initial cut; consequently, $X\subseteq V(A)$. As shown earlier in this proof, by moving $X$ from $A$ to $B$ the cut value is reduced by $2\big|E(X)\big|$, so $\big|E(X)\big|\ge m^*$. Therefore, the equivalent set of $X$ in $G$ has at least $m^*$ edges, hence $m^*=\big|E(X)\big|$. Finally, the time it takes to build graph $G'$ is $O\big(\big|V(G')\big|+\big|E(G')\big|\big)\subseteq O(n^4)$.
\end{proof}


\wonehardness*

\begin{proof}
For the \maxmove[2] problem, this comes from Theorem \ref{thm:reduction-maxmove} and the W[1]-hardness of the densest $r$-subgraph problem \cite{densest}. For the \maxmove~problem with $k>2$, by making the following modifications to graph $G'$ we can make the same reduction as that of the proof of Theorem~\ref{thm:reduction-maxmove} work: adding $k-2$ dummy partitions to graph $G'$, each containing one terminal that is not connected to any other node. Then, if $Y$ is the set of nodes that are moved to one of these dummy partitions $P$, then by moving $Y$ to one of the two main partitions the cut value will not increase. This is because $Y$ has no edges to the terminal node in $P$. Therefore, all solutions consist of moving nodes to one of the two main partitions. 
\end{proof}

\paragraph{Hardness of the \stcutatmost~problem:} We now slightly change the above construction to fit to the \stcutatmost~problem.
\begin{theorem}
Given an $n$-node graph $G$ as an instance of the densest $r$-subgraph problem, there is an $O(n^2)$-node graph $G'$ and a value $c(G')$ such that the densest $r$-subgraph of $G$ has $m^*$ edges if and only if the optimal solution of the \stcutatmost~problem on $G'$ has a value of $c(G')-2m^*$.
\end{theorem}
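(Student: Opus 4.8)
The plan is to reduce the densest $r$-subgraph problem to the (weighted) \stcutatmost~problem by a mild modification of the construction used in Theorem~\ref{thm:reduction-maxmove}. The point to notice is that in the \stcutatmost~problem the side containing $s$ must be \emph{small}, so we cannot afford to place the $m$ ``edge-nodes'' of the gadget $B$ on the $s$-side; consequently those nodes can no longer be used to cancel the $\sum_{v\in X}\deg_G(v)$ terms that arise when a set $X\subseteq V(A)$ is pulled across the cut. Instead I would internalize that cancellation into the weights of the edges incident to a source terminal $s$. Since the target problem is allowed to be weighted (this is exactly the ``weighted version'' referenced earlier), using weights is legitimate.

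Concretely, I would build $G'$ as follows. Let $A$ be an exact unit-weight copy of $G$, with $n$ nodes and $m$ edges. Add a sink terminal $t$ joined to every node of $A$ by a unit-weight edge, and a source terminal $s$ joined to every node $v\in V(A)$ by an edge of weight $1+\deg_G(v)$. No clique and no edge-nodes are needed here, so $G'$ has only $n+2$ nodes, well within the claimed $O(n^2)$ bound, and it is built in polynomial time. Set $c(G')=\sum_{v\in V(A)}\bigl(1+\deg_G(v)\bigr)=n+2m$, which is exactly the value of the $s$-$t$ cut whose $s$-side is the singleton $\{s\}$, and take the size parameter of the \stcutatmost~instance to be $r+1$.

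The heart of the argument is the identity: for every $X\subseteq V(A)$, the $s$-$t$ cut of $G'$ whose $s$-side is $\{s\}\cup X$ has value exactly $c(G')-2|E(X)|$. This is a short accounting check: relative to the trivial cut, moving $X$ to the $s$-side un-cuts the $s$--$X$ edges, whose total weight is $|X|+\sum_{v\in X}\deg_G(v)$, and newly cuts the $X$--$t$ edges (weight $|X|$) together with the $A$-edges from $X$ to $V(A)\setminus X$ (weight $\sum_{v\in X}(\deg_G(v)-\deg_X(v))$); the net change is $-\sum_{v\in X}\deg_X(v)=-2|E(X)|$. Since the only non-terminal nodes of $G'$ lie in $V(A)$, \emph{every} $s$-$t$ cut with an $s$-side of size at most $r+1$ has the form $\{s\}\cup X$ with $X\subseteq V(A)$ and $|X|\le r$; hence the optimum of the \stcutatmost~instance equals $c(G')-2\max_{X\subseteq V(A),\,|X|\le r}|E(X)|$. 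Assuming $n\ge r$ (otherwise the densest $r$-subgraph question is trivial), the maximum of $|E(X)|$ over $|X|\le r$ is attained on a set of exactly $r$ vertices and equals the number $m^*$ of edges of the densest $r$-subgraph of $G$. Therefore the densest $r$-subgraph of $G$ has $m^*$ edges if and only if the optimum of the \stcutatmost~problem on $G'$ equals $c(G')-2m^*$, which is the claim.

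The part that needs the most care is ensuring the weight bookkeeping yields an \emph{exact} cancellation independent of the choice of $X$: in the proof of Theorem~\ref{thm:reduction-maxmove} this cancellation was supplied automatically by the edge-node gadget inside the clique $B$, whereas here it is forced by the choice $w(s,v)=1+\deg_G(v)$, and one must also observe that the clique $B$ — present in the earlier proof only to guarantee $X\subseteq V(A)$ — is now redundant, because the size constraint on the $s$-side already rules out any other structure. If one prefers to stay literally close to the earlier construction, one may instead keep the $m$ edge-nodes, attach each of them to $t$ by an edge of prohibitively large weight so that it never lies on the $s$-side, and compensate by using weight $1+2\deg_G(v)$ on the $s$--$v$ edges; the same accounting then gives value $c(G')-2|E(X)|$ with $c(G')=n+4m$ and genuinely $\Theta(n+m)=O(n^2)$ nodes. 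The $W[1]$-hardness of weighted \stcutatmost~(parameterized by the $s$-side bound, here $r+1$) is then immediate from this reduction together with the $W[1]$-hardness of the densest $r$-subgraph problem~\cite{densest}.
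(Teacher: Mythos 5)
Your proposal is correct and uses essentially the same construction as the paper: $A$ is a unit-weight copy of $G$, the sink $t$ is joined to every node of $A$ by a unit edge, the $s$-side gadget is reduced to the single terminal $s$ joined to each $v$ by an edge of weight $1+\deg_G(v)$, and the same accounting identity shows that pulling $X\subseteq V(A)$ to the $s$-side changes the cut by exactly $-2|E(X)|$. The only (harmless) difference is that you make the size parameter of the \stcutatmost{} instance explicit as $r+1$ to account for $s$ itself, a bookkeeping point the paper leaves implicit.
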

\begin{proof}
\begin{figure}
    \centering
    \includegraphics{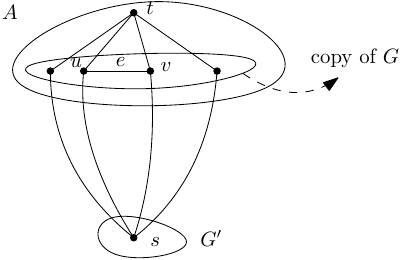}
    \caption{Reduction graph $G'$ created from a densest $r$-subgraph instance $G$.}
    \label{fig:reduction2}
\end{figure}
Let $A$ be a copy of the graph $G$. Add a terminal node $t$ to $A$ and connect it to all nodes in $A$ with unit-weight edges. Partition $B$ consists of only one node, terminal $s$. For each $v\neq t$ in $A$, connect node $v$ to terminal $s$ with an edge of weight $1+\mathrm{deg}_G(v)$, see Figure~\ref{fig:reduction2}. Let $c(G')$ be the cut value of partitions $\{A,B\}$ which is equal to $2\big|E(G)\big|+\big|V(G)\big|$.

Here, we show that moving any subset of nodes $X\in V(A)\setminus \{t\}$ from partition $A$ to $B$ reduces the value of the cut induced by partitions $\{A,B)\}$ by $2\big|E(X)\big|$. To see this, note that after moving nodes $X$ from $A$ to $B$, the cut value is reduced by $\big|E\big(X,V(A)\setminus X\big)\big|-\big|E\big(X,V(B)\big)\big|=\big(|X|+\sum_{v\in X}\mathrm{deg}_G(v)-\mathrm{deg}_X(v)\big) - \big(\sum_{v\in X} (\mathrm{deg}_G(v)+1)\big) = -2\big|E(X)\big|$.
Suppose $X^*$ is the densest $r$-subgraph in graph $G$ and $m^*=\big|E(X^*)\big|$. Moving $X^*$ to partition $B$ reduces the cut by $2m^*$. If the optimal solution to the \stcutatmost~problem on $G'$ is to move a set $X$ from $A$ to $B$, then this solution has a cut value of $c(G')-2\big|E(X)\big|$, and so we must have that $\big|E(X)\big|\ge m^*$. Since $|X|\le r$ and $X^*$ is the densest $r$-subgraph in $G$, we must have $\big|E(X)\big|=m^*$. Note that if $|X|<r$, then we can add arbitrary nodes to $X$ 
to make it have size $r$ nodes without reducing its number of edges. 
\end{proof}

From the above theorem and the W[1]-hardness of the densest $r$-subgraph problem we have that the weighted \stcutatmost~problem is W[1]-hard, thus, we have proved the following Corollary.

\begin{restatable}{corollary}{corwone}\label{cor:w1}
The \stcutatmost problem is W[1]-hard.
\end{restatable}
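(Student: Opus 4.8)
The plan is to obtain the corollary as an immediate consequence of the reduction in the theorem directly preceding it, combined with the known W[1]-hardness of the densest $r$-subgraph problem \cite{densest}. First I would recall what is needed for a \emph{parameterized} reduction from a problem $\Pi$ with parameter $p$ to a problem $\Pi'$ with parameter $p'$: it must (i) map yes-instances to yes-instances and no-instances to no-instances, (ii) be computable in time $f(p)\cdot\mathrm{poly}(|\text{input}|)$ for some computable $f$, and (iii) produce a new parameter bounded by $g(p)$ for some computable $g$. W[1]-hardness transfers along any such reduction, so it suffices to check that the construction used in the theorem above is one.

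Next I would verify these three conditions for that construction. Given an $n$-node instance $G$ of densest $r$-subgraph together with a target $m^*$, the theorem builds in time polynomial in $n$ a graph $G'$ on $O(n^2)$ nodes, equipped with unit and $\big(1+\mathrm{deg}_G(v)\big)$-weighted edges, together with the value $c(G')=2|E(G)|+|V(G)|$; this is FPT time with a trivial dependence on the parameter, so (ii) holds. The size bound on the $s$-side of the cut in the \stcutatmost{} instance is the \emph{same} $r$, so (iii) holds with $g$ the identity. Finally, the stated equivalence — the densest $r$-subgraph of $G$ has $m^*$ edges if and only if the optimal \stcutatmost{} on $G'$ has value $c(G')-2m^*$ — shows that the decision question ``does $G$ contain an $r$-vertex subgraph with at least $m^*$ edges?'' is equivalent to ``does $G'$ admit an $s$-$t$ cut with at most $r$ nodes on the $s$-side and value at most $c(G')-2m^*$?'', giving condition (i). Hence this is a valid parameterized reduction from densest $r$-subgraph (parameter $r$) to \stcutatmost{} (parameter $r$).

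Combining this with the W[1]-hardness of densest $r$-subgraph parameterized by $r$ \cite{densest} yields that \stcutatmost{} is W[1]-hard; since the instance $G'$ carries nontrivial edge weights, this already holds for the weighted version of the problem, matching the entry in Table~\ref{tab:2partition}. I do not anticipate any genuine obstacle in carrying this out: the only place that warrants an explicit sentence is confirming that the quadratic blow-up to $O(n^2)$ nodes and the \emph{unchanged} parameter $r$ make the map a legitimate parameterized reduction and not merely a polynomial-time many-one reduction — which it is, precisely because the parameter is preserved exactly.
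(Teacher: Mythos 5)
Your proposal is correct and follows essentially the same route as the paper: the corollary is obtained as an immediate consequence of the preceding reduction theorem together with the W[1]-hardness of densest $r$-subgraph, and your explicit verification that the reduction is a legitimate \emph{parameterized} reduction (polynomial-time, parameter preserved up to a trivial function) just spells out what the paper leaves implicit. The only detail worth a glance is the off-by-one between ``moving at most $r$ nodes'' and ``at most $r$ nodes on the $s$-side'' (which also contains $s$ itself), but since the new parameter is still bounded by a function of $r$ this does not affect the argument.
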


\paragraph{Hardness of the \maxmove~problem without terminals:} 

In this paper, we mostly consider the \maxmove~problem as a variant of the Multiway cut problem, i.e., we assume that the input graph has terminals that cannot be moved. However, one might wonder what the computational complexity of the \maxmove~problem without terminals is. All our algorithmic results carry over to the \maxmove~problem without terminals, since one can reduce the \maxmove~problem without terminals to the \maxmove~problem by adding dummy terminals for each partition as singletons. However, the main question in considering the \maxmove~problem without terminals is whether it can be solved faster than the \maxmove~problem, since in many partitioning problems the ``with terminal" version of the problem is harder than the ``without terminal" version. We show that our reduction works for the \maxmove~problem without terminals as well, so this problem is also W[1]-hard. Thus, one cannot expect the complexity to change drastically by removing the terminals.
\begin{restatable}{theorem}{notermhardness}\label{thm:noterm-hardness}
The \maxmove~problem without terminals is W[1]-hard.
\end{restatable}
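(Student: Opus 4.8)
The plan is to recycle the reduction from the Densest $r$-Subgraph problem used in the proof of Theorem~\ref{thm:reduction-maxmove}, but with both terminals deleted, relying on the large clique $B$ to play the anchoring role that the terminals $s,t$ played there. Given an $n$-node, $m$-edge instance $G$ of Densest $r$-Subgraph (with $r<n$, which is without loss of generality), I would build a terminal-free graph $G''$ that is the disjoint union of a copy $A$ of $G$ and a clique $B$ on $2n^2+n$ vertices, where $B$ contains one dedicated vertex $e$ for each edge $e\in E(G)$ and where the endpoints $u,v$ of each edge $e=(u,v)\in E(A)$ are joined to the vertex $e\in V(B)$. The initial partitioning is $\{A,B\}$, so the initial cut has value exactly $2m$ (two crossing edges per edge of $G$). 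Because there are no terminals, the cut-accounting identity of Theorem~\ref{thm:reduction-maxmove} becomes even cleaner: for any $X\subseteq V(A)$, moving $X$ from $A$ to $B$ changes the cut value by
\[
|E(X,V(A)\setminus X)|-|E(X,V(B))| \;=\; \sum_{v\in X}\bigl(\mathrm{deg}_G(v)-\mathrm{deg}_X(v)\bigr)-\sum_{v\in X}\mathrm{deg}_G(v)\;=\;-2|E(X)|.
\]

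With this identity in hand, the forward direction is immediate: if $X^*$ is a densest $r$-subgraph of $G$ with $m^*$ edges, moving the copy of $X^*$ in $A$ over to $B$ reduces the cut by $2m^*$. For the converse, suppose some feasible solution moves at most $r$ nodes and reduces the cut by at least $2m^*\ge 2$ (the case $m^*=0$ is trivial to detect). The key step is to show that no vertex of $B$ is reassigned by such a solution: if $k_B\ge 1$ vertices leave $B$, then since each such vertex keeps at most two neighbors outside $B$, the clique edges of $B$ that become cut already number at least $k_B(2n^2+n-k_B)\ge 2n^2$ (using $k_B\le r<n$), so the resulting cut value exceeds $2m$, contradicting that it equals $2m-2m^*<2m$. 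Hence the solution merely moves a set $X\subseteq V(A)$ with $|X|\le r$ into $B$, and by the displayed identity the cut drops by exactly $2|E(X)|$, forcing $|E(X)|\ge m^*$ and therefore $|E(X)|=m^*$ by maximality of $X^*$. Since $r<n\ll|V(G'')|$, the part $A$ is never emptied, so both parts stay non-empty; this establishes the claim for $k=2$. For general $k>2$, I would append $k-2$ isolated singleton vertices, each forming its own partition, exactly as in the proof of Theorem~\ref{thm:w1hard}: since these vertices have no incident edges, relabelling them has no effect and moving any other vertex into such a partition only cuts extra edges, so an optimal solution never uses the dummy parts and the reduction goes through verbatim. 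Combined with the W[1]-hardness of Densest $r$-Subgraph, this yields Theorem~\ref{thm:noterm-hardness}.

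I expect the main obstacle to be precisely the step that substitutes for the terminals, namely proving that no improving solution ever touches $B$; this is what forces $B$ to be a clique of size $\Theta(n^2)$, since the total achievable savings $2m<n^2$ must be dominated by the clique-edge penalty incurred by moving even a single vertex out of $B$. A secondary, more pedestrian point is verifying that the output is a bona fide $k$-cut (non-degenerate components): non-emptiness of the two main parts follows from $r<n$, the dummy parts are isolated singletons, and if one reads the definition to require each part to induce a connected subgraph, it suffices to assume $G$ is connected (hence $B$ and $B\cup X^*$ remain connected and $A\setminus X$ stays connected after removing at most $r<n$ vertices), which does not affect any of the inequalities above.
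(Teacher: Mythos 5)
Your proposal is correct and follows essentially the same route as the paper: reuse the construction of Theorem~\ref{thm:reduction-maxmove} with the terminals $s,t$ deleted, observe that the cut-accounting identity still yields a drop of exactly $2|E(X)|$, and use the size of the clique $B$ to rule out moving any of its vertices. Your write-up is in fact more quantitative than the paper's sketch (explicit bound $k_B(2n^2+n-k_B)\ge 2n^2>2m$ and the explicit treatment of the dummy singleton partitions for $k>2$), but the underlying argument is the same.
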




\begin{proof}
Given a densest $r$-subgraph instance $G$, our reduction graph $G'$ is the same as that of Theorem \ref{thm:reduction-maxmove} except that we do not add terminal nodes $s$ and $t$ to the graph. The key observation here is that in the proof of Theorem \ref{thm:reduction-maxmove}  having terminal nodes $s$ and $t$ do not provide us with any specific benefit. We give a high level overview of the proof and the details can be easily  derived from the proof of Theorem \ref{thm:reduction-maxmove}. We argue that moving a set of nodes $X$ from partition $A$ to $B$ reduces the cut value by $2|E(X)|$. Then we can show that if the cut value is reduced by $2m^*$ after moving at most $r$ nodes, then these nodes must all be in $A$. This is because the nodes in $B$ create a clique and moving any of them to $A$ increases the cut value. Moreover, the set of nodes that are moved must induce a densest $r$-subgraph in $G$ in order for them to be an optimal solution to the $r$-move $k$-partitioning problem in $G'$.
\end{proof}



\section{Numerical Evaluation}\label{sec:exp}
We conduct a simple empirical assessment of our rounding algorithm (Algorithm~\ref{alg:r_approx}) of Theorem \ref{thm:m2approx} and FPTAS of Theorem \ref{thm:fptas} 
We remark that this experiment is \emph{not} an extensive empirical evaluation of the algorithms, and the focus of it is on the results that are not well explained by theory, namely that the approximation factor of Algorithm~\ref{alg:r_approx} in practice is better than what Theorem \ref{thm:m2approx} demonstrates. This suggests that there could be alternative analysis of our algorithm which results in better approximation guarantees for certain classes of graphs. 

\subsection{Experiments on Algorithm~\ref{alg:r_approx}}
%
\begin{figure}
    \centering
    \includegraphics[width=0.5\columnwidth]{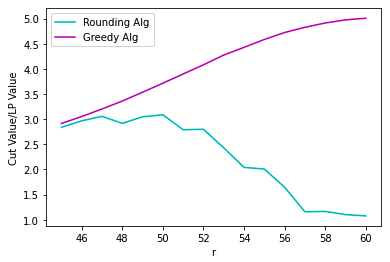}
    \caption{Performance of rounding (Algorithm~\ref{alg:r_approx}) and greedy algorithms with respect to the solution of LP \ref{tab:maxmove-lp} for the explained stochastic block graphs.}
    \label{fig:exp1}
\end{figure}
\paragraph{Set-up:} We generate our sample graphs using the stochastic block model on $90$ nodes as follows: First the $90$ nodes are divided into three equally sized clusters. Then, between any two nodes in the same cluster we add an edge with probability $p_H$. Similarly, between any two nodes in different clusters we put an edge with probability $p_L$. We let $p_H=0.3$ and $p_L= 0.1$. Note that this partitioning is very close to the optimal $k$-partitioning of the graph for $k=3$; hence, we re-partition the constructed graph into three partitions uniformly at random. These new random partitions are then set as the initial partitioning of the graph. Following these steps, we make $100$ random such graphs in total. As our benchmark, we consider the following simple greedy algorithm: The greedy algorithm has at most $r$ rounds. At each round, the algorithm moves the node that decreases the value of the $3$-cut by the largest amount. If at any point there is no such node, then the greedy algorithm halts. For each graph, we run Algorithm \ref{alg:r_approx} for $30$ different values of parameter $\rho$ and take the $3$-cut with the smallest value as the output.

\paragraph{Results:} For each of the 100 random graph and each of the two algorithms (greedy and Algorithm~\ref{alg:r_approx}), we compute the following ratios: The output of the algorithm divided by the objective value of LP \ref{tab:maxmove-lp}. For each value of $r$ in $[45,60]$, we compute the average value of this ratio over all graphs, see Figure \ref{fig:exp1}. For smaller values of $r$, we observe that both algorithms output similar cut values and they demonstrate similar performances. This could be due to the small size of our sample graphs and we believe the difference between the performances of these two algorithms is more evident with a larger sample size of graphs. While the greedy algorithm proves to perform reasonably well when the move parameter is bounded, we show that our FPTAS algorithm can beat greedy in this case. As $r$ approaches $60$ and above, the LP solution is integer in most instances; hence, the Algorithm~\ref{alg:r_approx} does not play a significant role\footnote{One can see that if $\hat{r}$ is the number of nodes needed to be moved in order to get the optimal $k$-cut solution (in the absence of any budgetary constraints), then $\mathbb{E}[\hat{r}]$ is near $60$.}. 


\subsection{Experiments for FPTAS}
\paragraph{Set-up:} The graph in this set of experiments is based on email data from a large European research institution~\cite{email-Eu-core_network}. The graph has 1005 nodes and each node belongs to one of the 42 departments at the research institute. There is an edge  between to nodes in the graph if those corresponding employees sent at least one email to each other. That is, emails are used as a proxy for communication between employees. There are 25,571 edges in the network. To downscale the size of the experiments, we take the induced sub-graph of the three largest departments. We take each department as a partition, thus the induced graph comes with a 3-partitioning. We then choose  5 to 10 nodes nodes of this graph, uniformly at random, and move them to a random partition. We perform this procedure 20 times, i.e., 20 different graphs are generated from the initial graph with 3-partitioning. On each of these graphs we run the greedy algorithm and the FPTAS to find the cut size resulted by each of these policies. We report the averaged the cut size for each policy as the cut size obtained from that method. Figure\ref{fig:email-data-FPTAS} shows the performance of the greedy algorithm and the FPTAS for two values of $\alpha$ with respect to the optimal solution of the problem which is computed by solving an Integer Program (which relaxes to LP \ref{tab:maxmove-lp}).

\begin{figure}
    \centering
    \includegraphics[width=0.6\columnwidth]{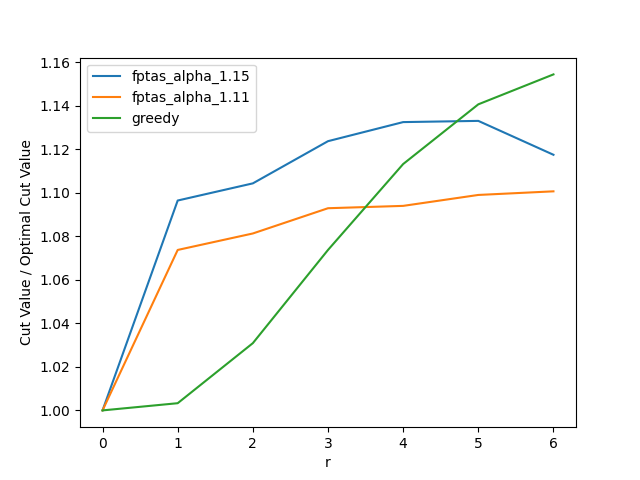}
    \caption{Performance of the FPTAS with two different values of $\alpha$ and greedy algorithm with respect to the solution of LP~\ref{tab:maxmove-lp} for the email-Eu-core network graph.}
    \label{fig:email-data-FPTAS}
\end{figure}

\paragraph{Results:} We observe that for small values of $r$ (that is, $r\le 3$) the greedy algorithm has better performance of either FPTAS's. However, as $r$ increases, the performance of the greedy algorithm deteriorates, while the performance of the FPTAS's stays steady or even improves. This is not surprising as with small values of $r$, the greedy algorithm has a limited number of moves to make and acting myopically most likely results in a good outcome. However, with larger values of $r$ it becomes more likely that a myopic initial move take the greedy algorithm farther from the optimal solution. Furthermore, as expected, we see that smaller $\alpha$ provides a better performance for the FPTAS, which comes at the expense of a higher running time.


\section{CONCLUSION}
This paper studies the \maxmove~problem. We show that this problem is $W[1]$-hard and give simple and practical approximation algorithms for it. These algorithms are: an FPTAS for constant $r$, a $3(r+1)$-approximation algorithm and an $(O(1),O(1))$ bicriteria algorithm for general $r$. Our main results focus on LP rounding techniques. There remains several interesting open problems to understand the complexity of the \maxmove~problem, some of them are listed below. 

(1) Is there an approximation algorithm for the \maxmove~problem whose running time and approximation factor are independent of $r$? Recall that there exists an $O(\log(n))$-approximation algorithm for the \stcutatmost~problem using tree decomposition techniques. Could one generalize this algorithm to the \maxmove~problem? 

(2) Can one find any approximation lower bound for this problem? Note that similar partitioning problems, such as the MinSBCC problem \cite{hayrapetyan2005unbalanced}, do not have any approximation lower bounds yet.





\bibliographystyle{unsrt}
\bibliography{bib} 
\end{document}